\newif\ifShowComments
\def\@ACM@checkaffil{
    \if@ACM@instpresent\else
    \ClassWarningNoLine{\@classname}{No institution present for an affiliation}%
    \fi
    \if@ACM@citypresent\else
    \ClassWarningNoLine{\@classname}{No city present for an affiliation}%
    \fi
    \if@ACM@countrypresent\else
        \ClassWarningNoLine{\@classname}{No country present for an affiliation}%
    \fi
}
\lstdefinestyle{ivyspec}{
  basicstyle=\normalsize\ttfamily,
  columns=fullflexible,
  keepspaces=true,
  breaklines=true,
  breakatwhitespace=true,
  showstringspaces=false,
  aboveskip=0pt,
  belowskip=0pt
}
\renewcommand\footnotetextcopyrightpermission[1]{} 
\newcommand{\kartik}[1]{{\footnotesize{\color{orange} [Kartik: #1]}}}
\newcommand{\aniket}[1]{{\footnotesize{\color{teal} [Aniket: #1]}}}
\newcommand{\nibesh}[1]{{\footnotesize{\color{blue} [Nibesh:  #1]}}}
\newcommand{\qianyu}[1]{{\footnotesize{\color{magenta} [Qianyu: #1]}}}
\newcommand{\giuliano}[1]{{\footnotesize{\color{violet} [Giuliano: #1]}}}
\newcommand{\xuechao}[1]{{\footnotesize{\color{red} [Xuechao: #1]}}}
\newcommand{\kartik}[1]{}
\newcommand{\aniket}[1]{}
\newcommand{\nibesh}[1]{}
\newcommand{\qianyu}[1]{}
\newcommand{\giuliano}[1]{}
\newcommand{\xuechao}[1]{}
\newcommand{\ignore}[1]{}
\newcommand{\nodes}{\mathcal{P}}
\newcommand{\node}[1]{\ensuremath{P_{#1}}\xspace}
\newcommand{\tuple}[1]{\langle #1 \rangle}
\newcommand{\True}{\mathsf{true}}
\newcommand{\False}{\mathsf{false}}
\algrenewcommand\textproc{}
\newcommand{\N}{\mathbb{N}}
\newcommand{\HN}{\mathcal{H}}
\newcommand{\BigS}{\mathcal{S}}
\definecolor{yescolor}{HTML}{026378}
\newcommand{\name}{Sailfish++\xspace}
\renewcommand{\ALG@name}{}
\algrenewcommand\textproc{}
\newcommand{\Timeout}{\texttt{timeout}}
\newcommand{\NoVote}{\texttt{no\text{-}vote}}
\newcommand{\iunderline}[1]{\noindent\underline{#1}}
\newcommand{\mypara}[1]{\medskip\noindent\textbf{#1}}
\newcommand{\floor}[1]{\lfloor #1 \rfloor}
\newcommand{\ceil}[1]{\lceil #1 \rceil}
\newtheorem{claim}{Claim}
\newtheorem{lemma}{Lemma}
\newtheorem{fact}{Fact}
\newtheorem{property}{Property}
\newtheorem{corollary}{Corollary}
\newtheorem{theorem}{Theorem}
\newtheorem{rrule}{Commit rule}
\newcommand{\sig}[1]{\langle #1 \rangle}
\newcommand{\Propose}{\mathsf{propose}}
\newcommand{\Echo}{\mathsf{echo}}
\newcommand{\Vote}{\mathsf{vote}}
\newcommand{\Ready}{\mathsf{ready}}
\newenvironment{claimproof}[1][\proofname]
{\proof[\iunderline{Proof}]}
{\endproof}
\def\k{\ensuremath{\kappa}\xspace} 
\newcommand{\Enc}{\mathsf{ENC}}
\newcommand{\Retrieve}{\mathsf{retrieve}}
\newcommand{\Disperse}{\mathsf{disperse}}
\newcommand{\RETRIEVE}{\mathsf{Retrieve}}
\newcommand{\DISPERSE}{\mathsf{Disperse}}
\newcommand{\Symbol}{\mathsf{symbol}}
\newcommand{\F}{\mathbb{F}}
\newcommand{\C}{\mathbb{C}}
\newcommand{\Ack}{\mathsf{ack}}
\newcommand{\VerifyPoly}{\mathsf{verify\text{-}poly}}
\newcommand{\VerifyPoint}{\mathsf{verify\text{-}point}}
\newcommand{\VerifyShare}{\mathsf{verify\text{-}share}}
\newcommand{\Sh}{\mathsf{Sh}}
\newcommand{\Rec}{\mathsf{Rec}}
\newcommand{\negl}{\mathsf{negl}}
\newcommand{\Deal}{\mathsf{deal}}
\newcommand{\Reconstruct}{\mathsf{reconstruct\text{-}share}}
\newif\iffull
\newcommand{\full}[2]{\iffull#1\else#2\fi}
\title{Optimistic, Signature-Free Reliable Broadcast and Its Applications}
\author{Nibesh Shrestha}
\affiliation{
\institution{
Supra Research
}
\city{Rochester}
\country{USA}
}
\email{n.shrestha@supra.com}
\author{Qianyu Yu}
\affiliation{
\institution{
The Hong Kong University of Science and Technology (Guangzhou)
}
\city{Guangzhou}
\country{China}
}
\email{qyu100@connect.hkust-gz.edu.cn}
\author{Aniket Kate}
\affiliation{
\institution{
Purdue University/Supra Research}
\city{West Lafayette}
\country{USA}
}
\email{aniket@purdue.edu}
\author{Giuliano Losa}
\affiliation{
\institution{
Stellar Development Foundation
}
\city{San Francisco}
\country{USA}
}
\email{giuliano@stellar.org}
\author{Kartik Nayak}
\affiliation{
\institution{
Duke University
}
\city{Durham}
\country{USA}
}
\email{kartik@cs.duke.edu}
\author{Xuechao Wang}
\affiliation{
\institution{
The Hong Kong University of Science and Technology (Guangzhou)
}
\city{Guangzhou}
\country{China}
}
\email{xuechaowang@hkust-gz.edu.cn}
\begin{document}
\begin{abstract}
Reliable broadcast (RBC) is a key primitive in fault-tolerant distributed systems, and improving its efficiency can benefit a wide range of applications. This work focuses on signature-free RBC protocols, which are particularly attractive due to their computational efficiency. Existing protocols in this setting incur an optimal 3 steps to reach a decision while tolerating up to $f < n/3$ Byzantine faults, where $n$ is the number of parties. In this work, we propose an optimistic RBC protocol that maintains the $f < n/3$ fault tolerance but achieves termination in just 2 steps under certain optimistic conditions—when at least $\lceil{\frac{n+2f-2}{2}}\rceil$ non-broadcaster parties behave honestly. We also prove a matching lower bound on the number of honest parties required for 2-step termination.

We show that our latency-reduction technique generalizes beyond RBC and applies to other primitives such as asynchronous verifiable secret sharing (AVSS) and asynchronous verifiable information dispersal (AVID), enabling them to complete in 2 steps under similar optimistic conditions.

To highlight the practical impact of our RBC protocol, we integrate it into a new signature-free, post-quantum secure DAG-based Byzantine fault-tolerant (BFT) consensus protocol. Under optimistic conditions, this protocol achieves a commit latency of 3 steps—matching the performance of the best signature-based protocols. Our experimental evaluation shows that our protocol significantly outperforms existing post-quantum secure and signature-based protocols, even on machines with limited CPU resources. In contrast, signature-based protocols require high CPU capacity to achieve comparable performance. 
\end{abstract}

\ignore{

Reliable broadcast (RBC) is a fundamental building block used in several fault-tolerant distributed computing primitives. Thus, improvements to RBC algorithms can have a large impact on many systems. In this work, we focus on signature-free RBC, which is especially attractive for quantum-resilience. In this setting, existing RBC protocols require an optimal good-case latency of 3 steps to make a decision while tolerating up to $f < n/3$ Byzantine faults, where $n$ is the number of parties.

We present an optimistic RBC protocol, tolerating $f<n/3$, that can terminate in just 2 steps under certain optimistic conditions where more than $n-f$ (but less than $n$) parties behave honestly. We establish a matching lower bound on the number of honest parties required for 2-step termination.

We observe that our technique to reduce latency during optimistic conditions applies to several other primitives including asynchronous verifiable secret sharing, asynchronous verifiable information dispersal and enhance these primitive to terminate in $2$ steps in the optimistic conditions.

To demonstrate the use of our optimistic reliable broadcast protocol, we use it to obtain a new signature-free and post-quantum secure DAG-based Byzantine fault-tolerant (BFT) consensus protocol. This protocol can commit with a latency of 3 steps under the optimistic conditions of the RBC, matching the latency of the state-of-the-art signature-based protocols.
}

\ccsdesc[500]{Security and privacy~Distributed systems security}

\keywords{Byzantine Fault Tolerance; Optimistic Reliable Broadcast; Optimistic Termination; Signature-Free; Verifiable Secret Sharing; Verifiable Information Dispersal; DAG-based Consensus; Post-Quantum Security}

\maketitle
\section{Introduction}
Reliable broadcast (RBC) is a fundamental primitive used in fault-tolerant distributed computing. It ensures that among $n$ parties with up to $f$ Byzantine faults, either all honest parties deliver the same message or none do. If the broadcaster is honest, all honest parties eventually deliver its message. RBC serves as a crucial building block for several distributed computing primitives, including state-machine replication (SMR)~\cite{miller2016honey,bracha1987asynchronous,spiegelman2022bullshark,shrestha2024sailfish}, asynchronous verifiable information dispersal (AVID)~\cite{cachin2005asynchronous,alhaddad2022asynchronous,Alhaddad_2022,yang2022dispersedledger}, and asynchronous verifiable secret sharing (AVSS)~\cite{das2021asynchronous,cachin2002asynchronous,das2023verifiable}. As such, the performance and security of these protocols are closely tied to that of RBC.

In this work, we focus on the signature-free setting. Signature-free protocols are particularly appealing for their computational efficiency, scalability, and flexible security.
Existing protocols often rely on digital signatures to ensure message authenticity, which introduces significant computational overhead, particularly in large-scale or resource-constrained environments~\cite{li2023performance}. Signature-free RBC protocols can improve scalability by eliminating the computation bottlenecks associated with signature verification~\cite{cheng2024jumbo}. Moreover, these protocols provide a simpler and more efficient trust model, which is especially valuable for trustless cross-chain synchronization protocols like TrustBoost~\cite{sheng2023trustboost} and the Interchain timestamping protocol~\cite{tas2023interchain}. In such settings, public-key cryptography is often prohibitively expensive and difficult to manage, making signature-free protocols a more practical solution. Motivated by this, we focus on signature-free protocols, which have been shown to be viable for a wide range of classic BFT problems~\cite{yu2024tetrabft,das2024asynchronous,castro1999practical,bracha1987asynchronous,alhaddad2022asynchronous,Alhaddad_2022}.

In the signature-free setting, Bracha's RBC~\cite{bracha1987asynchronous} is the best-known protocol tolerating optimal $f < n/3$ Byzantine faults, with a good-case latency\footnote{Informally, good-case latency refers to the number of steps required to reach a decision when the broadcaster and at least $n-f-1$ non-broadcaster parties behave honestly.} of 3 steps and a bad-case latency\footnote{Informally, bad-case latency refers to the worst-case number of steps required to terminate, given that at least one honest party terminates.} of 4 steps. A lower bound result~\cite{abraham2022good} confirms that the good-case latency of Bracha's RBC is indeed optimal when tolerating $f<n/3$ Byzantine faults. A natural follow-up question is whether faster termination is possible under some optimistic conditions when more than $n-f$ parties behave honestly. This question is particularly relevant for applications like blockchain, which often operate in environments where most nodes are honest and failures are rare.





\mypara{Optimistic RBC to improve latency in the signature-free setting.} To answer the above question, we introduce \emph{optimistic reliable broadcast} that tolerates $f<n/3$ Byzantine faults and terminates in 2 steps when the broadcaster and at least $\ceil{\frac{n+2f-2}{2}}$ non-broadcaster parties behave honestly. We define the latency required for RBC under such conditions as \emph{optimistic good-case latency}.
Under $f$ Byzantine faults, our protocol achieves a good-case latency of
$4$ steps and a bad-case latency of $5$ steps, incurring one additional step
compared to Bracha’s RBC~\cite{bracha1987asynchronous}. This protocol is also information-theoretically secure. In particular, we establish the following result:

\begin{theorem}
    There exists a reliable broadcast protocol that tolerates $f$ Byzantine faults, under $n \ge 3f+1$, and such that, when the broadcaster is honest, the protocol achieves an optimistic good-case latency of $2$ steps if $\ceil{\frac{n+2f-2}{2}}$ non-broadcaster parties behave honestly, a good-case latency of $4$ steps, and a bad case latency of $5$ steps.
\end{theorem}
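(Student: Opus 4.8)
The plan is to construct the protocol as Bracha's three-round reliable broadcast (the broadcaster sends $\Propose(m)$; every party sends $\Echo(m)$ upon a valid proposal; a party sends $\Ready(m)$ upon $q_e$ matching echoes or $f+1$ matching readys; and it delivers $m$ upon $2f+1$ matching readys) augmented with a single two-step fast-delivery rule: deliver $m$ as soon as $q_f$ matching echoes arrive, where $q_f$ is the smallest threshold guaranteed under the optimistic condition. Concretely, when the broadcaster is honest and $\ceil{\frac{n+2f-2}{2}}$ non-broadcaster parties are honest, every honest party collects $\ceil{\frac{n+2f}{2}}$ echoes within two steps, so I would set $q_f=\ceil{\frac{n+2f}{2}}$ and keep $q_e$ and the $2f+1$-ready rule at their standard Bracha values, so that the slow path alone is already a correct RBC and the fast rule is an overlay whose safety and liveness must be reconciled with it. The theorem then splits into four obligations: validity, consistency, totality, and the three latency claims.

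Validity and the latency bounds are the routine parts, and I would prove them by counting message round-trips along each triggering path. An honest broadcaster makes all honest parties echo $m$, so under the optimistic condition each one collects $q_f$ echoes and delivers in two steps; otherwise the echo/ready cascade delivers in three steps (good case), or four (bad case, when a lagging party only reaches the ready threshold through the $f+1$-ready amplification). Consistency I would establish by quorum intersection on echoes: taking $q_f>\frac{n+f}{2}$ forces any two fast-delivered values to share an honest echoer and hence to coincide (fast--fast); $q_f+q_e>n+f$ handles a fast delivery against a slow delivery (fast--slow); and the usual echo- and ready-quorum intersection of Bracha handles slow--slow. With $q_f=\ceil{\frac{n+2f}{2}}$ one checks $q_f>\frac{n+f}{2}$ and $q_f+q_e>n+f$ directly, so all three cases go through.

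The hard part will be totality of the fast path, which I expect to be the crux of the entire theorem. In the signature-free setting a party that fast-delivers $m$ cannot ship its $q_f$ echoes as a transferable certificate, so the remaining honest parties must re-derive enough evidence themselves. The benign case is immediate, since under the optimistic condition every honest party fast-delivers. The dangerous case is a Byzantine broadcaster that equivocates and feeds its $f$ echoes to a single honest party $A$, letting $A$ fast-deliver $m$ while everyone else sees only the $q_f-f=\ceil{\frac{n}{2}}$ genuinely honest echoes for $m$; since this is strictly below $q_e$, those honest echoes do not by themselves reach the $\Ready$ threshold, and a naive design leaves the other honest parties stuck after $A$ has already delivered. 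I would therefore also have a fast-delivering party emit a $\Ready$ and attempt to bootstrap the $\Ready$ cascade from the $q_f-f$ honest echoers of $m$; the obstacle is that letting those parties $\Ready(m)$ on a sub-$q_e$ echo count is precisely what could allow an equivocated value $m'$ to also gather $2f+1$ readys, so consistency and totality pull the threshold in opposite directions. Reconciling them at the optimal value $q_f=\ceil{\frac{n+2f}{2}}$ — rather than the larger $\approx\frac{n+3f}{2}$ that a purely echo-driven totality argument would force — is the core difficulty, and I expect it to require an additional coordinating rule that gates the fast delivery, or its auxiliary $\Ready$, on a condition a conflicting value provably cannot meet. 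Finally, to match this value against the companion lower bound I would use an indistinguishability argument that swaps the roles of the honest echoers of $m$ and $m'$ across two executions, showing that any smaller honest population admits a two-step rule that is either inconsistent or non-total.
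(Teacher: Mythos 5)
Your proposal sets up the right thresholds, and your treatment of validity, the latency counts, and consistency (via echo-quorum intersection for the fast--fast, fast--slow, and slow--slow cases) matches the paper's. But the proof has a genuine gap, and it is exactly where you flag it: totality (the paper's agreement property and the 4-step bad-case claim) after an optimistic delivery under a Byzantine broadcaster. You correctly identify the adversarial scenario --- one honest party $A$ fast-delivers $m$ on $q_f$ echoes of which only $\ceil{\frac{n}{2}}$ are honest, a count strictly below the $\Ready$-on-echoes threshold --- and you correctly observe that your two candidate fixes fail: a single $\Ready$ emitted by $A$ cannot seed Bracha's amplification (which needs $f+1$ honest $\Ready$s, and in the signature-free setting $A$'s echo set is not transferable evidence), while letting the $\ceil{\frac{n}{2}}$ honest echoers send $\Ready$ directly at that sub-threshold re-opens the door to conflicting $\Ready$s: for $f \ge 3$ the adversary can split the honest parties so that a conflicting $m'$ also reaches $\ceil{\frac{n}{2}}$ echoes, since the two honest echo sets need not intersect once the $f-1$ Byzantine non-broadcasters double-echo. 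Your proposal ends by postulating ``an additional coordinating rule'' without constructing it; since agreement is part of the theorem statement, the proof is incomplete at its central point.

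The paper's resolution is a two-level mechanism you did not find: a new intermediate message type, $\Vote$. A party sends $\Vote$ for $m$ (at most one $\Vote$ ever) upon receiving $\Echo$ for $m$ from $\ceil{\frac{n}{2}}$ non-broadcaster parties, and $\Ready$ is additionally triggered by $\ceil{\frac{n+f-1}{2}}$ $\Vote$s --- never directly by the low echo count. This decouples the two pressures you identified. For totality: any optimistic commit implies at least $\ceil{\frac{n+2f-2}{2}} - (f-1) = \ceil{\frac{n}{2}}$ honest non-broadcaster parties echoed $m$, so every honest party eventually sends $\Vote$ for $m$, collects $\ceil{\frac{n+f-1}{2}}$ $\Vote$s, sends $\Ready$, and commits within the claimed 4 steps. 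For safety: once $\ceil{\frac{n}{2}}$ honest parties have echoed $m$, no honest party can also see $\ceil{\frac{n}{2}}$ echoes for $m'$, because $\ceil{\frac{n}{2}} + \bigl(\ceil{\frac{n}{2}} - (f-1)\bigr) > n-f$ exceeds the number of honest non-broadcaster parties; hence no honest $\Vote$, and therefore no honest $\Ready$, ever forms for $m'$ (the echo path to $\Ready$ is likewise blocked, and the $f+1$-$\Ready$ amplification cannot start from at most $f-1$ Byzantine $\Ready$s). In short, the low threshold produces only $\Vote$s, and $\Vote$s are protected by their own quorum intersection --- this is precisely the ``condition a conflicting value provably cannot meet'' that your proposal anticipates but leaves unrealized.
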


\mypara{A matching lower bound on the resilience for optimistic good-case latency.}
It is worth noting that Bracha's RBC can also terminate in two steps, assuming all nodes are honest. However, this assumption holds little value in a fault-tolerant distributed system. Therefore, we also establish a tight lower bound on the resilience for achieving optimistic good-case latency. Specifically, we prove that if more than $\floor{\frac{n-2f}{2}}$ non-broadcaster parties behave maliciously, no protocol can achieve a latency of $2$ steps when tolerating $f$ faults under $n \le 4f-1$\footnote{A good-case latency of $2$ steps is achievable when $n \ge 4f$~\cite{abraham2022good}.}. Our lower bound result holds even in a synchronous network model, further strengthening our findings. In particular, we show the following result:

\begin{theorem}
There does not exist a reliable broadcast protocol tolerating $f$ Byzantine faults under $n \leq 4f-1$ that can achieve an optimistic good-case latency of $2$ steps when more than $\floor{\frac{n-2f}{2}}$ non-broadcaster parties behave maliciously, even under synchrony.
\end{theorem}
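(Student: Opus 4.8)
The plan is to argue by contradiction via a standard indistinguishability (``split-brain'') construction, reducing the claimed $2$-step latency to a violation of agreement. So suppose, for contradiction, that some reliable broadcast protocol $\Pi$ tolerates $f$ faults with $n \le 4f-1$, is safe (agreement and consistency) against any adversary corrupting at most $f$ parties, and additionally delivers within $2$ steps in every execution in which the broadcaster is honest and only $\ceil{\frac{n+2f-2}{2}}-1$ non-broadcaster parties are honest---that is, it tolerates $\floor{\frac{n-2f}{2}}+1$ malicious non-broadcaster parties on the $2$-step path. The goal is to exhibit an execution with at most $f$ faults in which two honest parties deliver different values.

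First I would fix the abstract shape of a $2$-step execution: round $1$ is the broadcaster's send, round $2$ is an all-to-all exchange, and a non-broadcaster party's $2$-step decision is a function only of the round-$1$ message it received from the broadcaster and the multiset of round-$2$ messages it received. Under synchrony every honest round-$2$ message is received by the end of round $2$, so the only adversarial freedom is equivocation: a Byzantine broadcaster may send different values to different parties, and Byzantine non-broadcaster parties may send different round-$2$ messages to different recipients. I would then build three executions. Two \emph{anchor} executions have an honest broadcaster: in $\mathcal{E}_0$ it broadcasts $0$ and a designated honest party $p$ is forced (by the assumed optimistic $2$-step guarantee) to deliver $0$ in $2$ steps; in $\mathcal{E}_1$ it broadcasts $1$ and a designated honest party $q$ is forced (by validity, possibly via the slower standard path) to deliver $1$. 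The decisive execution $\mathcal{E}^\star$ has a Byzantine broadcaster that equivocates---sending $0$ to $p$'s group and $1$ to $q$'s group---together with a single reused Byzantine set of non-broadcaster parties that equivocates in round $2$, arranged so that $p$'s entire view coincides with its view in $\mathcal{E}_0$ (hence $p$ delivers $0$ in $2$ steps) while $q$'s view coincides with its view in $\mathcal{E}_1$ (hence $q$ delivers $1$). Agreement then fails in $\mathcal{E}^\star$.

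The key structural point, and the reason the bound is exactly $\floor{\frac{n-2f}{2}}$, is an asymmetry between the two sides together with a quorum-intersection count. The $0$-side party $p$ must be made to commit on the fast $2$-step path, so its decision rests on a round-$2$ quorum of echoes for $0$; the $1$-side party $q$ need only deliver $1$ eventually, so $\mathcal{E}_1$ can use the standard (slower) delivery whose support quorum is much smaller. Safety of $p$'s fast decision forces its $0$-quorum to intersect every possible $1$-support quorum in at least one honest party (an honest party never supports both values); when more than $\floor{\frac{n-2f}{2}}$ non-broadcaster parties are malicious---equivalently fewer than $\ceil{\frac{n+2f-2}{2}}$ are honest---these two quorums can be made to share only Byzantine parties, which is precisely what lets the equivocating broadcaster and the reused Byzantine echo set reproduce both anchor views at once. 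I would carry out the arithmetic of choosing the group sizes and splitting the $f$ faults between the equivocating broadcaster (one fault) and the reused Byzantine echo set, checking that (i) each anchor execution has at most $f$ faults and meets the honest-count threshold needed to force its delivery, and (ii) $\mathcal{E}^\star$ has at most $f$ faults in total; the hypotheses $n \le 4f-1$ and the exact threshold are what make this system of inequalities feasible.

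I expect the main obstacle to be making the fault budget close---the counting in step (ii). The tension is that forcing $p$ to deliver on the $2$-step path requires a large honest echo quorum for $0$, while the decisive execution must keep the equivocating broadcaster plus the reused Byzantine echoers within the global budget of $f$; a naive symmetric construction that forces both sides to commit fast over-spends the budget and fails for $n \ge 3f+1$, so the argument must genuinely exploit that the $1$-side delivers via the slower, low-threshold path. A second, synchrony-specific subtlety is that honest round-$2$ messages cannot be suppressed, so indistinguishability must be engineered on each victim's complete view, including the conflicting echoes it unavoidably sees; the construction therefore has to allocate the Byzantine parties so that, from $p$'s vantage, the $1$-echoes it sees are exactly attributable to corruption in $\mathcal{E}_0$, and symmetrically for $q$. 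Verifying that this double accounting is simultaneously satisfiable exactly when the number of malicious non-broadcaster parties exceeds $\floor{\frac{n-2f}{2}}$ is the crux of the proof.
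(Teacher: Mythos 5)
Your overall family of argument (indistinguishability plus fault-budget accounting) is right, but the specific three-execution structure you propose---two anchors $\mathcal{E}_0,\mathcal{E}_1$ plus one decisive execution $\mathcal{E}^\star$---cannot satisfy the very accounting you correctly identify as the crux, and the asymmetric fast/slow-path trick does not rescue it. In $\mathcal{E}^\star$, write $X$ for the Byzantine non-broadcaster set and let $P_0,P_1$ be the honest parties that receive $0$ and $1$ from the equivocating broadcaster. For $p$'s view to match $\mathcal{E}_0$, every party of $P_1$ (honestly echoing $1$ in $\mathcal{E}^\star$) must be Byzantine in $\mathcal{E}_0$, and for the optimistic hypothesis to force $p$'s $2$-step delivery there you need $|P_1|\le\floor{\frac{n-2f}{2}}+1$. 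Symmetrically, for $q$'s view to match $\mathcal{E}_1$ every party of $P_0$ must be Byzantine in $\mathcal{E}_1$; even invoking only validity via the slow path, this still requires $|P_0|\le f$. Finally $|X|\le f-1$, so $|P_0|+|P_1|=n-1-|X|\ge n-f$. Combining, $n-f\le f+\floor{\frac{n-2f}{2}}+1\le \frac{n}{2}+1$, i.e.\ $n\le 2f+2$, contradicting $n\ge 3f+1$ for every $f\ge 2$ (the theorem's range $3f+1\le n\le 4f-1$ is empty for $f\le 1$). The asymmetry only relaxes the constraint on $|P_0|$ from $\floor{\frac{n-2f}{2}}+1$ to $f$, which cannot help because the partition forces $|P_0|\ge\frac{n}{2}-1>f$. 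So no choice of group sizes closes your system of inequalities; the gap is structural, not arithmetic bookkeeping.

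The paper escapes this with four worlds rather than three. It partitions the parties into $s,A,B,C,D,E$ with $|A|=|D|=\floor{\frac{n-2f+2}{2}}$ and $|B|=|C|\approx f-1$, and uses the fast-path hypothesis \emph{symmetrically} in both anchors (W$1$: honest $s$ sends $0$, $D$ Byzantine; W$2$: honest $s$ sends $1$, $A$ Byzantine). It then builds two decisive worlds, each with a different Byzantine set of size exactly $f$: in W$3$ the corrupt set is $s\cup C\cup E$, arranged so that $B$'s view matches W$1$ and $B$ commits $0$ in two steps; in W$4$ the corrupt set is $s\cup B\cup E$, arranged so that $C$'s view matches W$2$ and $C$ commits $1$. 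The contradiction is not obtained inside a single execution containing both a $0$-committer and a $1$-committer (which is what forces your over-spent budget); instead, $A\cup D$ are honest in both W$3$ and W$4$ and cannot distinguish them, yet agreement with $B$ forces them to eventually commit $0$ in W$3$ while agreement with $C$ forces them to eventually commit $1$ in W$4$. This cross-world use of agreement is precisely what keeps every world within $f$ faults: the sets corrupted in the decisive worlds are the small middle sets $B$ or $C$, never the large half-sized sets your construction is forced to corrupt. To repair your proof you would need this fourth world and the three indistinguishability claims ($B$: W$1\sim$W$3$; $C$: W$2\sim$W$4$; $A\cup D$: W$3\sim$W$4$), not a refinement of the fast/slow asymmetry.
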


\begin{table}[ht]
    \footnotesize
\centering
    \caption{\textbf{Comparison of the protocols proposed in this paper with the best-known existing protocols.}}
    \def\arraystretch{1}
    \setlength\tabcolsep{1mm}
    \begin{center}
    \begin{tabularx}{\columnwidth}{c c c c c c c}
    \toprule
     \textbf{Scheme} &
     \multirow{2}{*}{\makecell{\textbf{Good-case } \\ \textbf{Latency}}} &
     \multirow{2}{*}{\makecell{\textbf{Communication } \\ \textbf{Complexity}}} &
     \multirow{2}{*}{\makecell{\textbf{Cryptographic } \\ \textbf{Assumption}}} & 
     \textbf{Reference} &
     \\
     \\
    \midrule
    \multirow{2}{*}{RBC} & 3 & $O(n^2)$ & None &  \cite{bracha1987asynchronous}
    \\
    & (2, 4) & $O(n^2)$ & None  & \textbf{this work}
    \\
    \midrule
    \multirow{3}{1.3cm}{\centering Balanced \\RBC  for long \\ messages} & 3 & $O(nL +\k n^2 \log{n})$ & Hash & \cite{cachin2005asynchronous}
    \\
    & 4 & $O(nL + \k n^2)$ & Hash &\cite{alhaddad2022balanced}
    \\
    & (2, 4) & $O(nL +\k n^2 \log{n})$ & Hash & \textbf{this work}
    \\
    \midrule
    \multirow{2}{*}{ AVSS} & 3 & $O(\k n^2)$ & DL+Hash & \cite{das2021asynchronous}
    \\
    & (2,4) & $O(\k n^2 \log{n})$ & DL+Hash & \textbf{this work}
    \\
    \midrule
    \multirow{2}{*}{\centering ACSS} & 3 & $O(\k n^3)$ & DL+Hash & \cite{cachin2002asynchronous}
    \\
    & (2, 4) & $O(\k n^2 \log{n})$ & DL+Hash & \textbf{this work}
    \\
    \midrule
    \multirow{3}{*}{\centering AVID} & 3 & $O(L + \k n^2 \log{n})$ & Hash & \cite{yang2022dispersedledger}
    \\
    & 4 & $O(L + \k n^2)$ & Hash & \cite{alhaddad2022asynchronous,Alhaddad_2022}
    \\
    & (2, 4) & $O(L + \k n^2 \log{n})$ & Hash & \textbf{this work}
    \\
    \midrule
    \multirow{3}{1.5cm}{\centering PQ-safe \\DAG-based \\BFT SMR} & 6 & $O(\k n^3 \log{n})$ & Hash & \cite{spiegelman2022bullshark}
    \\
    & 4 & $O(\k n^3 \log{n})$ & Hash & \cite{arunShoalHighThroughput2025a}
    \\
    & (3, 5) & $O(\k n^3 \log{n})$ & Hash & \textbf{this work}
    \\
    \bottomrule
    \end{tabularx}
    \end{center}
    $(a, b)$ denotes a good-case latency of $a$ steps under optimistic conditions, a latency of $b$ steps under normal conditions.
    For AVID, the communication complexity column reports the total server-side dispersal cost for an input of $L$ bits.
    For the signature-free DAG-based BFT protocols of~\cite{spiegelman2022bullshark} and~\cite{arunShoalHighThroughput2025a}, the reported latency and communication complexity are based on their use of the RBC protocol by Cachin et al.~\cite{cachin2005asynchronous}. 
    \label{tbl:related_work}
    \vspace{-5mm}
\end{table}

\mypara{Communication-efficient balanced optimistic RBC for long messages.} We also present a communication-efficient balanced\footnote{In balanced RBC, each party, including the broadcaster, incurs the same asymptotic communication cost.} optimistic RBC that offers the same latency guarantees as the optimistic RBC with $O(nL + \k n^2 \log{n})$ communication to propagate an $L$-bit input. This protocol relies on hash functions and is post-quantum secure. 


Our optimistic RBC paves the way for latency improvements in signature-free implementations of various distributed computing primitives. Notably, the underlying latency reduction technique generalizes beyond RBC to other primitives with similar communication patterns. We demonstrate its applicability by enabling two key primitives—AVSS and AVID—to achieve 2-step optimistic termination. Additionally, we leverage our optimistic RBC to construct latency-efficient, signature-free DAG-based BFT protocols.


\mypara{Extension: Optimistic asynchronous verifiable secret sharing.}
Many existing AVSS and asynchronous complete secret sharing (ACSS) schemes~\cite{alhaddad2021high,das2021asynchronous,das2022practical,das2023verifiable} use RBC as a black-box component. For such protocols, our optimistic RBC can be directly integrated to enable 2-step optimistic termination. As concrete examples, we consider the recent AVSS and ACSS protocols of Das et al.~\cite{das2021asynchronous}, replacing their RBC component with our communication-efficient optimistic RBC. Their AVSS protocol relies on discrete logarithm (DL) assumptions and hash functions, while their ACSS protocol additionally requires a PKI setup. The resulting constructions preserve the original cryptographic and setup assumptions while supporting 2-step optimistic termination with $O(\kappa n^2 \log n)$ communication.

Our latency reduction technique can also be applied to existing ACSS protocols in a white-box manner. As a concrete example, we build on the setup-free ACSS protocol of Cachin et al.~\cite{cachin2002asynchronous}, which relies only on DL and hash function assumptions. To the best of our knowledge, their construction remains the state-of-the-art among setup-free ACSS protocols, achieving $O(\kappa n^3)$ communication and 3-step good-case latency.  We extend this construction to achieve 2-step optimistic termination and reduce the communication complexity to $O(\kappa n^2 \log n)$ albeit at the cost of increasing the good-case and bad-case latency by an additional step. 


\mypara{Extension: Optimistic asynchronous verifiable information dispersal.}
Our latency reduction technique can also be applied to AVID, enabling 2-step optimistic termination while preserving the dispersal, storage, and retrieval costs of existing AVID protocols. Particularly, the protocol achieves client dispersal cost of $O(L + \k n \log{n})$, total server-side dispersal cost of $O(L + \k n^2 \log{n})$, and total storage cost of $O(L + \k n \log{n})$ to disperse an $L$-bit input. In AVID, the presence of a distinct client outside the set of $n$ parties slightly increases the quorum size required for optimistic termination.  When the client is honest, the protocol can achieve optimistic good-case latency of $2$ steps if $\ceil{\frac{n+2f+1}{2}}$ parties behave honestly, a 4-step good-case latency and a 5-step bad-case latency. 


\mypara{Application: Latency-efficient signature-free DAG-based BFT protocols.} Recently, DAG-based BFT SMR protocols~\cite{baird2016swirlds,danezis2018blockmania,gkagol2019aleph,keidar2021all,keidar2023cordial,spiegelman2022bullsharkpartially,shrestha2024sailfish} have gained traction for improving throughput by maximizing bandwidth utilization. In the partially synchronous model~\cite{dwork1988consensus}, recent works~\cite{shrestha2024sailfish,arunShoalHighThroughput2025a,babelMysticetiReachingLatency2025} achieve 3-step good-case latency for the leader vertex (proposed by the leader) in the authenticated setting. However, most of these protocols (except~\cite{spiegelman2022bullshark,spiegelman2023shoal,arunShoalHighThroughput2025a}) are inherently tied to signatures and thus incompatible with the signature-free setting. In contrast, Bullshark~\cite{spiegelman2022bullshark}, Shoal~\cite{spiegelman2023shoal}, and Shoal++~\cite{arunShoalHighThroughput2025a} treat RBC as a black-box component, enabling signature-free variants when instantiated with a signature-free RBC protocol. Using our optimistic RBC in these protocols can naturally reduce their latency under optimistic conditions.

However, existing signature-free DAG-based consensus protocols nevertheless continue to incur high latency, primarily because they do not guarantee the presence of a leader in every DAG round~\cite{shrestha2024sailfish}.
Shoal~\cite{spiegelman2023shoal} and its successor Shoal++~\cite{arunShoalHighThroughput2025a} attempt to address this using a pseudo-pipelining strategy, chaining multiple Bullshark~\cite{spiegelman2022bullshark} instances to emulate leader availability in every round.
However, these instances are sequentially dependent—each new instance relies on the successful commit of the previous one.
If a commit fails, this dependency disrupts leader availability in the following rounds, thereby increasing overall latency.

To address this fundamental limitation, we propose \name, a latency-efficient, signature-free DAG-based protocol that guarantees a leader in every round and that can use our optimistic RBC. We build on the core ideas of Sailfish~\cite{shrestha2024sailfish}, which also supports a leader in every round, but requires the use of signatures during failures. \name achieves comparable latency guarantees while operating entirely in the signature-free setting—matching the latency of signature-based protocols~\cite{castro1999practical,doidge2024moonshot} under optimistic conditions. In particular, \name commits the leader vertex in 1 RBC + 1 step and requires only 1 additional RBC to commit the non-leader vertices. This contribution is of independent interest and improves the latency guarantees of signature-free DAG-based BFT protocols.


\mypara{Evaluation.} We implement and evaluate \name against Bullshark~\cite{spiegelman2022bullshark} and Sailfish~\cite{shrestha2024sailfish}. Even on machines with just 4 vCPUs, \name outperforms these protocols, which typically require 48–64 vCPUs~\cite{babelMysticetiReachingLatency2025,arunShoalHighThroughput2025a}. In a 50-node deployment, \name achieves ~67\% lower latency than Sailfish at 100K TPS and delivers higher peak throughput. The optimistic-RBC variant further reduces latency compared to the Bracha-based version, confirming our theoretical insights. These results demonstrate that signature-free protocols can be both cost-efficient and post-quantum secure.

\mypara{Formal specifications, model-checking, and mechanically-checked proofs.} We provide formal specifications of the optimistic RBC (\full{\Cref{sec:bcast-spec}}{full version~\cite{shrestha2025optimistic}}) and of \name (\full{\Cref{sec:sailfish-spec}}{{full version~\cite{shrestha2025optimistic}}}); the specifications are written in PlusCal/TLA+~\cite{lamportPlusCalAlgorithmLanguage2009,lamportSpecifyingSystemsTLA2002}, and the TLC model-checker~\cite{yu_model_1999} checks representative finite configurations of the executable models.
We additionally specify the optimistic RBC protocol using Ivy~\cite{padonIvySafetyVerification2016,mcmillanIvyMultimodalVerification2020} and we mechanically verify using a combination of the Ivy prover and the Isabelle/HOL proof assistant~\cite{nipkow2002isabelle} that the safety and liveness properties of the optimistic RBC hold regardless of the number of parties and in all executions regardless of their length.
We describe the proof in~\full{\Cref{sec:ivy-proof} and include the full Ivy specification in~\Cref{sec:ivy-specification}}{the full version~\cite{shrestha2025optimistic}}.
Artifacts available on Zenodo~\cite{zenodo} allow reproducing all the verification results.

\mypara{Remark on post-quantum security.} All of our protocols—except the AVSS and ACSS constructions—are also post-quantum secure.  Our optimistic RBC can be applied in a black-box manner to the post-quantum secure AVSS protocol of Shoup et al.~\cite{shoup2024lightweight}, effectively reducing the number of steps under optimistic conditions. Our efforts in this work is particularly timely given the recent landmark move by NIST to deprecate and disallow legacy encryption and signature algorithms in favor of post-quantum cryptography~\cite{moody2024transition}. In this light, our protocols offer a compelling direction for building latency-efficient post-quantum secure distributed systems, combining strong theoretical guarantees with future-proof cryptographic resilience. 

\mypara{Organization.}~\Cref{sec:preliminaries} introduces the system model and preliminaries. In~\Cref{sec:opt_rbc}, we present our optimistic RBC protocol, followed by a matching lower bound in~\Cref{sec:lowerbound}.~\Cref{sec:comm-eff-rbc} describes a communication-efficient variant for long messages. We then present optimistic AVSS and AVID protocols in~\Cref{sec:opt-avss} and \full{~\Cref{sec:opt-avid}}{full version~\cite{shrestha2025optimistic}}, respectively. \Cref{sec:dag_bft} details a latency-efficient, signature-free \name. Evaluation results are presented in~\Cref{sec:evaluation}, and we conclude with a comprehensive discussion of related work in~\Cref{sec:related_work}.

\section{Preliminaries}
\label{sec:preliminaries}
We consider a system $\nodes := \node{1},\ldots,\node{n}$ consisting of $n$ parties out of which up to $f$ parties can be Byzantine, meaning they can behave arbitrarily. A party that is not faulty throughout the execution is considered to be \emph{honest} and executes the protocol as specified.

We assume an asynchronous network where the adversary has the ability to arbitrarily delay or reorder messages exchanged between honest parties. However, the adversary is constrained to eventually deliver all messages. We define $\delta$ 
to represent the actual (variable and unknown) transmission latencies of messages during a time period under consideration.

For the latency-efficient DAG-based BFT protocol, we consider the partial synchrony model of Dwork et al.~\cite{dwork1988consensus}. In this model, the network initially operates in an asynchronous state, allowing the adversary to arbitrarily delay messages sent by honest parties. However, after an unknown point in time, known as the \emph{Global Stabilization Time} (GST), the adversary must ensure that all messages from honest parties are delivered to their intended recipients within $\Delta$ time of being sent (where $\Delta$ is known to the parties). Note that $\delta \leq \Delta$ after GST. Additionally, we assume the local clocks of the parties have \emph{no clock drift} and \emph{arbitrary clock skew}.

We represent the hash of an input $x$ by $H(x)$, where $H$ denotes the hash function.
We use the same parameter $\k$ to denote the hash size, the security parameter, and the size of secret shares, depending on context.

We say that a protocol is \emph{signature-free} when it does not rely on a PKI. \emph{Post-quantum secure} protocols are secure against adversaries that use polynomially-bounded quantum computers (for example, protocols whose security relies only on hash-preimage resistance are post-quantum secure). Meanwhile, a protocol is considered \emph{information-theoretically secure} when it is secure against adversaries with unbounded computational power, such as those capable of guessing pre-images of hash functions. Related but different terms appear in the literature: \emph{unauthenticated} protocols do not use message authentication, and \emph{error-free} protocols are secure against adversaries with unbounded computational power, and always achieve guarantees in every possible execution.

\subsection{Definitions}

\subsubsection{Reliable Broadcast}

\begin{definition}[Byzantine reliable broadcast~\cite{bracha1987asynchronous}]
In a Byzantine reliable broadcast (RBC), a designated sender $\node{k}$ may invoke \Call{r\_bcast$_k$}{$m$} to propagate an input $m$. Each party $\node{i}$ may then output (we also say commit) the message $m$ via \Call{r\_deliver$_i$}{$m, \node{k}$} where $\node{k}$ is the designated sender. The reliable broadcast primitive satisfies the following properties:
\begin{itemize}[noitemsep,leftmargin=*]
    \item[-] \textbf{Validity.} If the designated sender $\node{k}$ is honest and calls \Call{r\_bcast$_k$}{$m$} then every honest party  eventually outputs \Call{r\_deliver}{$m,\node{k}$}.
    \item[-] \textbf{Agreement.} If an honest party $\node{i}$ outputs \Call{r\_deliver$_i$}{$m, \node{k}$}, then every other honest party $\node{j}$ eventually outputs \Call{r\_deliver$_j$}{$m, \node{k}$}.
    \item[-] \textbf{Integrity.} Each honest party $\node{i}$ outputs \Call{r\_deliver$_i$}{} at most once regardless of $m$.
\end{itemize}
\end{definition}

Next, we consider the Byzantine reliable broadcast (RBC) protocol that tolerates up to~$f$ Byzantine faults and define its associated latency parameters. In asynchronous setting, where message delays are unbounded, latency is measured using the notion of asynchronous communication steps~\cite{canetti1993fast}. Specifically, a protocol is said to run in $R$ asynchronous steps if its execution completes in time at most $R$ times the maximum message delay between honest parties during the run.

\begin{definition}[Good-case Latency]
The protocol has a good-case latency of $R$ communication steps, if all honest parties deliver within $R$ asynchronous steps, given the designated broadcaster is honest.
\end{definition}

\begin{definition}[Optimistic Good-case Latency]
The protocol has a good-case latency of $R$ communication steps, if all honest parties deliver within time $R$ asynchronous steps, given the designated broadcaster is honest and at least $n_o$ non-broadcaster parties behave honestly, for some parameter $n_o \ge n-f$.
\end{definition}

In this paper, we always consider $n_o = \ceil{\frac{n+2f-2}{2}}$.

\begin{definition}[Bad-case Latency]
The protocol has a bad-case latency of $R$ when, given that an honest party has terminated in the (optimistic) good-case latency, all honest parties terminate within $R$ asynchronous steps.
\end{definition}

Finally, we say that the protocol is an $(i,j,k)$-optimistic reliable broadcast protocol when it has optimistic good-case latency $i$, good-case latency $j$, and bad-case latency $k$.
Note that our latency definitions apply both to the synchronous and asynchronous settings.

\subsubsection{Byzantine Atomic Broadcast}
We rely on the following definition to establish the security of \name.
\begin{definition}[Byzantine atomic broadcast~\cite{keidar2021all,spiegelman2022bullshark}]
\label{dfn:bab}
Each honest party $\node{i} \in \nodes$ can call \Call{a\_bcast$_i$}{$m,r$} to propagate its input $m$ in some round $r \in \N$. Each party $\node{i}$ then outputs \Call{a\_deliver$_i$}{$m,r,\node{k}$}, where $\node{k}\in \nodes$ represents the sender of the message. A Byzantine atomic broadcast protocol satisfies the following properties:

\begin{itemize}[noitemsep,leftmargin=*]
    \item[-] \textbf{Agreement.} If an honest party $\node{i}$ outputs \Call{a\_deliver$_i$}{$m, r, \node{k}$}, then every other honest party $\node{j}$ eventually outputs \Call{a\_deliver$_j$}{$m,$ $ r, \node{k}$}.
    
    \item[-] \textbf{Integrity.} For every round $r \in \N$ and party $\node{k} \in \nodes$, an honest party $\node{i}$ outputs \Call{a\_deliver$_i$}{} at most once regardless of $m$.
    
    \item[-] \textbf{Validity.} If an honest party $\node{k}$ calls \Call{a\_bcast$_k$}{$m,r$} then every honest party  eventually outputs \Call{a\_deliver}{$m,r,\node{k}$}.
    
    \item[-] \textbf{Total order.} If an honest party $\node{i}$ outputs \Call{a\_deliver$_i$}{$m, r, \node{k}$} before \Call{a\_deliver$_i$} {$m', r', \node{\ell}$}, then no honest party $\node{j}$ outputs \Call{a\_deliver$_j$}{$m', r', \node{\ell}$} before \Call{a\_deliver$_j$}{$m, r, \node{k}$}.
\end{itemize}
\end{definition}

\ignore{
\begin{definition}[Byzantine atomic broadcast~\cite{keidar2021all,spiegelman2022bullshark}]
Each honest party $\node{i} \in \nodes$ can call \Call{a\_bcast$_i$}{$m,r$} and output \Call{a\_deliver$_i$}{$m,r,\node{k}$}, $\node{k}\in \nodes$. A Byzantine atomic broadcast protocol satisfies reliable broadcast properties (agreement, integrity, and validity) as well as:
\giuliano{There's a slight contradiction here since, in RBC, an party cannot broadcast multiple messages}\nibesh{here, in each round a party broadcasts a single messages}
\nibesh{todo: fix definition as the current definition of RBC does not use rounds.}
\begin{itemize}[noitemsep,leftmargin=*]
\item[-] \textbf{Total order.} If an honest party $\node{i}$ outputs \Call{a\_deliver$_i$}{$m, r, \node{k}$} before \Call{a\_deliver$_i$}{$m', r', \node{\ell}$}, then no honest party $\node{j}$ outputs \Call{a\_deliver$_j$}{$m', r', \node{\ell}$} before \Call{a\_deliver$_j$}{$m, r, \node{k}$}.
\end{itemize}
\end{definition}
}

\section{A (2, 4, 5)-Optimistic RBC}
\label{sec:opt_rbc}

\begin{figure}[tbp]
    \footnotesize
        \begin{boxedminipage}[t]{\columnwidth}
        \begin{enumerate}[leftmargin=*]
            \item \textbf{Propose.} The broadcaster $\node{k}$ sends $\sig{\Propose, m}$ to all parties.
            
            \item \textbf{Echo.} Upon receiving the first $\sig{\Propose, m}$ message from the broadcaster, party $\node{i}$ sends $\sig{\Echo, m}$ to all parties.

            {\color{blue}
            \item \textbf{Vote.} Upon receiving $\sig{\Echo, m}$ from $\ceil{\frac{n}{2}}$ non-broadcaster parties, send $\sig{\Vote, m}$ to all parties if not already sent.
            }

            {\color{blue}
            \item \textbf{Ack.} Party $\node{i}$ sends a $\sig{\Ack, m}$ message to  all parties (if not already sent) under either of the following conditions:
            \begin{itemize}
                \item[-] Upon receiving $\sig{\Vote, m}$ from $\ceil{\frac{n+f-1}{2}}$ non-broadcaster parties,
                \item[-] Upon receiving $\sig{\Echo, m}$ from $\ceil{\frac{n+f-1}{2}}$ non-broadcaster parties
            \end{itemize}
            }

            \item \textbf{Ready.} Party $\node{i}$ sends a $\sig{\Ready, m}$ message to  all parties (if not already sent) under either of the following conditions:
            \begin{itemize}
                \item[-] Upon receiving $\sig{\Ack, m}$ from $\ceil{\frac{n+f-1}{2}}$ non-broadcaster parties,
                \item[-] Upon receiving $f+1$ $\sig{\Ready, m}$
            \end{itemize}
            
             {\color{blue}
            \item \textbf{Opt Commit.} Upon receiving $\sig{\Echo, m}$ from $\ceil{\frac{n+2f-2}{2}}$ non-broadcaster parties, party $\node{i}$ commits $m$.
            }

            \item \textbf{\{4,5\}-step Commit.} Upon receiving $2f+1$ $\sig{\Ready, m}$ messages, party $\node{i}$ commits $m$ and terminates\giuliano{Terminating may cause a liveness issue no? The node might still need to send a ready message to help other nodes commit}.\nibesh{they should have already sent ready when receiving $f+1$ of those if they have not already.}
        \end{enumerate}
    \end{boxedminipage}
\caption{(2,4,5) Optimistic reliable broadcast under $n \ge 3f+1$}
\label{fig:opt_rbc}
\end{figure}

In this section, we present a signature-free RBC protocol under $n \ge 3f+1$. The protocol achieves an optimistic good-case latency of $2$ steps when the broadcaster is honest and at least $n_o=\ceil{\frac{n+2f-2}{2}}$ non-broadcaster parties behave honestly. When more than $\floor{\frac{n-2f}{2}}$ non-broadcaster parties behave maliciously, the protocol achieves a good-case latency of 4 steps and a bad-case latency of 5 steps, incurring one additional step compared to Bracha's RBC~\cite{bracha1987asynchronous}.  In this regard, our protocol is a $(2, 4, 5)$-optimistic reliable broadcast. This protocol is also information-theoretically secure.


\mypara{Protocol details.}  We extend Bracha’s RBC~\cite{bracha1987asynchronous} to enable optimistic delivery in just 2 steps. The complete protocol is presented in~\Cref{fig:opt_rbc}, with our modifications highlighted in \textcolor{blue}{blue}. 

Recall that in Bracha’s protocol, parties first send an $\Echo$ message upon
receiving the broadcaster’s value, then send a $\Ready$ message after receiving
a quorum of $\Echo$ messages, and finally terminate upon receiving a quorum of
$\Ready$ messages. To ensure termination, a party also sends a $\Ready$ message
upon receiving $f+1$ $\Ready$ messages.

The key idea in our protocol is to augment Bracha’s algorithm with an
optimistic delivery rule, allowing a party to deliver after receiving
sufficiently many $\Echo$ messages. However, if an honest party delivers
optimistically (possibly due to Byzantine messages observed only by that
party) we must guarantee that all honest parties eventually deliver the same
value.

To achieve this, we introduce a new type of $\Vote$ message, which is
guaranteed to propagate once an honest party delivers optimistically. When $\Vote$ messages reach the threshold of $\ceil{\frac{n+f-1}{2}}$ non-broadcaster parties, they trigger a new type of $\Ack$ message that helps ensure unique delivery. Similarly, an $\Ack$ message is also triggered upon receiving $\ceil{\frac{n+f-1}{2}}$ $\Echo$ messages from non-broadcaster parties for the same value.

Finally, when $\Ack$ messages reach the same threshold of
$\ceil{\frac{n+f-1}{2}}$ non-broadcaster parties, they trigger Bracha’s
original $\Ready$ message.

In more detail, for an input value $m$, the broadcaster sends $\sig{\Propose, m}$ to all parties, and each party responds by sending $\sig{\Echo, m}$ upon receiving the first $\sig{\Propose, m}$ from the broadcaster. An honest party $\node{i}$ can optimistically commit $m$ in $2$ communication steps as soon as it receives at least $\ceil{\frac{n+2f-2}{2}}$ $\sig{\Echo, m}$ messages from non-broadcaster parties.

To ensure agreement, each party $\node{i}$ sends a $\sig{\Vote, m}$ message upon receiving $\sig{\Echo, m}$ from at least $\ceil{\frac{n}{2}}$ non-broadcaster parties. This condition is guaranteed to hold for all honest parties whenever some honest party commits to $m$ optimistically, since at least $\ceil{\frac{n+2f-2}{2}} - (f-1) = \ceil{\frac{n}{2}}$ honest non-broadcaster parties must have already sent $\sig{\Echo, m}$ to all parties.

Moreover, this voting condition cannot hold for any conflicting value $m' \neq m$. Once an honest party commits to $m$, at least $\ceil{\frac{n}{2}}$ honest parties must have already sent $\sig{\Echo, m}$, leaving at most $n - f - \ceil{\frac{n}{2}} = \ceil{\frac{n-2f}{2}}$ honest non-broadcaster parties that could send $\sig{\Echo, m'}$. Even with the help of all $f-1$ Byzantine non-broadcaster parties, it is impossible to collect $\ceil{\frac{n}{2}}$ $\sig{\Echo, m'}$ messages.

Upon receiving $\sig{\Vote, m}$ from at least $\ceil{\frac{n+f-1}{2}}$ non-broadcaster parties, an honest party sends $\sig{\Ack, m}$. An honest party also sends $\sig{\Ack, m}$ (if it has not already done so) upon receiving $\sig{\Echo, m}$ from at least $\ceil{\frac{n+f-1}{2}}$ non-broadcaster parties. Similarly, upon receiving $\sig{\Ack, m}$ from at least $\ceil{\frac{n+f-1}{2}}$ non-broadcaster parties, an honest party sends $\sig{\Ready, m}$ and decides $m$ upon receiving at least $2f+1$ $\sig{\Ready, m}$ messages. These thresholds are eventually satisfied since there are at least $n-f$
honest parties.

If more than $n - \ceil{\frac{n+2f-2}{2}} - 1 = \floor{\frac{n-2f}{2}}$ non-broadcaster parties behave maliciously, they may prevent optimistic delivery. In this case, the protocol falls back to the standard path and terminates in $4$ communication steps in the good case and $5$ communication steps in the bad case.



\full{We present detailed security analysis in~\Cref{sec:opt-rbc-proof} and its formal specification in~\Cref{sec:bcast-spec}}{We present detailed security analysis and its formal specification in full version~\cite{shrestha2025optimistic}}.
\section{A Lower Bound on the Resilience for Optimistic Good-case Latency}
\label{sec:lowerbound}
This section establishes a tight lower bound on the resilience for optimistic good-case latency under the condition $n \le 4f-1$. As previously noted, our result holds for signature-free RBC protocols, even in the synchronous setting. The proof of the following lower bound is illustrated in~\Cref{fig:lb}.

\begin{figure}[tbp]
\centering
\begin{subfigure}[t]{\columnwidth}
\includegraphics[scale=0.38]{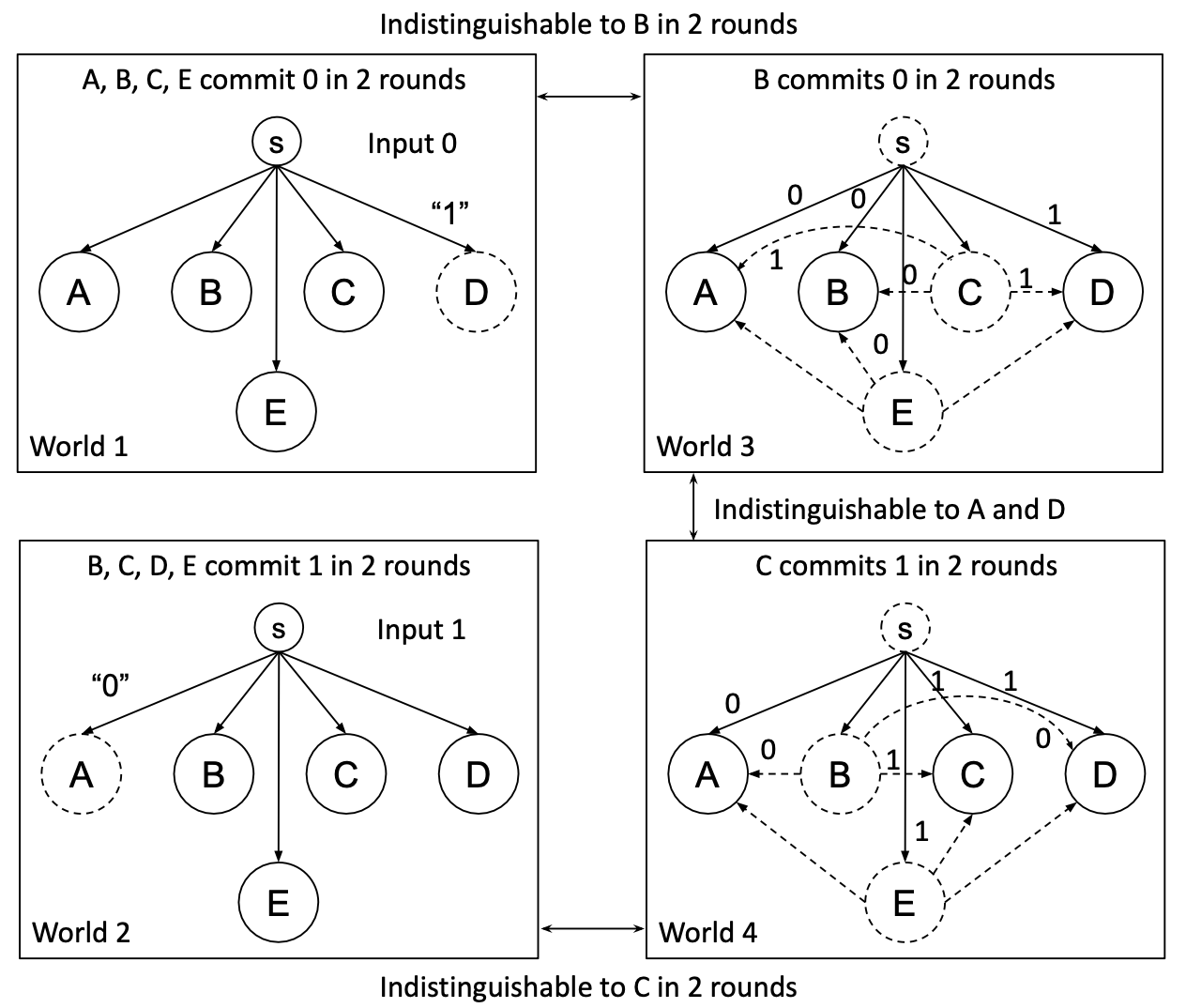}
\end{subfigure}
\caption{Resilience lower bound for optimistic good-case latency under $n \le 4f-1$. Dotted circles represent Byzantine parties.}
\label{fig:lb}
\end{figure}

\begin{theorem}
There does not exist a reliable broadcast protocol tolerating $f$ Byzantine faults under $3f+1 \leq n \leq 4f-1$ that can achieve an optimistic good-case latency of $2$ steps when more than $\floor{\frac{n-2f}{2}}$ non-broadcaster parties behave maliciously, even under synchrony.
\end{theorem}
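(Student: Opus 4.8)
The plan is to argue by contradiction, exploiting indistinguishability under synchrony. Suppose a protocol is a correct RBC for $\le f$ faults and additionally delivers within $2$ steps whenever the broadcaster is honest and only $\floor{\frac{n-2f}{2}}+1$ non-broadcaster parties are faulty, i.e. only $n_o-1$ non-broadcaster parties are honest (where $n_o=\ceil{\frac{n+2f-2}{2}}$). First I would extract a structural consequence of this fast-path guarantee. Since delivery must occur by the end of round $2$ regardless of how the faulty parties behave---in particular if they stay silent---every honest party must be forced to deliver $m$ on the basis of the broadcaster's round-$1$ $\Propose$ together with the round-$2$ $\Echo$ messages of just the $n_o-1$ honest non-broadcaster parties. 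Thus the protocol admits a ``$2$-step delivery certificate'' of at most $n_o-1$ echoes; and because a faulty broadcaster can plant up to $f-1$ of these echoes itself, an honest party can be driven to deliver while as few as $n_o-1-(f-1)=\ceil{\frac{n}{2}}-1$ honest parties actually echoed the delivered value.

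The core step is to show that such a small certificate is incompatible with Agreement under a faulty broadcaster. I would build a synchronous execution in which a faulty broadcaster equivocates, sending $m_0$ to a group $A$ and $m_1$ to a group $B$, while a set $C$ of at most $f-1$ faulty non-broadcaster parties equivocate their echoes. The group sizes are chosen (this is what the partition in~\Cref{fig:lb} encodes) so that, with the help of $C$'s injected $m_0$-echoes, some honest $P\in A$ completes a $2$-step certificate and delivers $m_0$; Agreement then forces every honest party, including a target $Q\in B$, eventually to deliver $m_0$. The contradiction is obtained by pairing this execution with a companion execution, indistinguishable to $Q$, in which the required delivery is instead $m_1$ (enforced via Validity/Agreement at the honest-broadcaster end). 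The point is that the number of honest $m_0$-echoers $Q$ genuinely observes is only $\ceil{\frac{n}{2}}-1$, which under synchrony does not strictly exceed the $m_1$-support $Q$ is forced to hear, so $Q$'s local view cannot separate the two worlds while both required deliveries conflict.

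The main obstacle is precisely this companion construction: keeping both executions within the $f$-fault budget while making $Q$'s entire view identical across all rounds. A single naive split-brain does not work---forcing two disjoint honest certificates for $m_0$ and $m_1$ in one execution would require roughly $2n_o-n\approx 2f$ equivocators, far beyond the $f-1$ available---so the two worlds must be engineered so that only one side needs a fast certificate and the opposing delivery follows from Validity/Agreement at an honest broadcaster, with the faulty set recycled between them. Carrying this out is exactly where the hypotheses bind: the bound $n\le 4f-1$ guarantees that $\ceil{\frac{n}{2}}-1$ honest echoes fail to dominate the opposing support (for $n\ge 4f$ they would, which is why $2$-step delivery is then achievable), and the precise value $n_o=\ceil{\frac{n+2f-2}{2}}$ is what makes the counting tight. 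I expect the delicate bookkeeping of the equivocating faulty parties---ensuring $Q$ receives identical messages in both worlds throughout---to be the crux of the argument.
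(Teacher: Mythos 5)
Your high-level strategy---hybrid indistinguishability worlds under synchrony, with fast delivery forced by matching an honest-broadcaster execution and the contradiction supplied by Agreement---is the same family of argument as the paper's, and your threshold arithmetic ($n_o-1-(f-1)=\ceil{\frac{n}{2}}-1$; two disjoint certificates would need about $2f$ equivocators) is correct. However, the concrete repair you propose---one equivocation execution paired with a companion execution whose broadcaster is \emph{honest} with input $m_1$, contradiction derived at $Q$---has a fatal counting gap. In the equivocation execution, $Q$ hears the step-2 messages of every honest party in $A$ (the group that received $m_0$), and these messages are functions of having received $m_0$. In the companion execution the honest broadcaster sends $m_1$ to everyone, so honest behavior of the $A$-parties would differ; since the argument must work for every protocol, all of $A$ must be Byzantine in the companion, giving $|A|\le f$. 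But for some $P\in A$ to complete a $2$-step certificate in the equivocation execution, $P$'s view must match an honest-broadcaster reference world whose faulty set is $B$, so $|B|\le\floor{\frac{n-2f+2}{2}}\le f$ (using $n\le 4f-1$), and the equivocation execution's own budget forces $|C|\le f-1$. Then $|A|=n-1-|B|-|C|\ge n-1-f-(f-1)=n-2f\ge f+1$. So $|A|$ must be simultaneously at most $f$ and at least $f+1$: the companion-with-honest-broadcaster design cannot fit inside the fault budget. This is precisely the bookkeeping you flagged as the crux but left unresolved, and the resolution you sketched is the one that fails.

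The paper escapes this by never using an honest-broadcaster execution as either of the two ``real'' worlds. Both real worlds (W$3$ and W$4$) have a Byzantine broadcaster; the fast deliveries are forced at two \emph{small} groups $B$ and $C$ (each of size $f-1$ or $f-2$) via two-step indistinguishability from honest-broadcaster reference worlds W$1$ and W$2$, in which the optimistic guarantee applies with exactly $\floor{\frac{n-2f+2}{2}}$ faulty non-broadcaster parties; and the contradiction is derived, via Agreement, at a third \emph{large} group $A\cup D$ (each half of size $\floor{\frac{n-2f+2}{2}}$) that stays honest in both W$3$ and W$4$ and cannot distinguish them. Because the large echo-providing groups $A$ and $D$ remain honest in both real worlds and only the small groups flip between honest and faulty, every budget closes: each real world corrupts only the broadcaster plus $f-1$ parties. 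Note also that the reference worlds must force delivery \emph{within two steps} (the optimistic guarantee), not merely eventually, because indistinguishability with a reference world holds only for the first two steps; your reliance on eventual Validity at the honest-broadcaster end would additionally require the two real executions to stay indistinguishable to $Q$ forever, which only makes the corruption requirements heavier.
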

\begin{proof}
We assume all parties start their protocol at the same time, which strengthens the lower bound result. Under synchrony, messages exchanged between honest parties are guaranteed to be delivered within $\Delta$ time. Consequently, each step of the protocol lasts for at most $\Delta$ time. Suppose, for the sake of contradiction, that a protocol exists that can achieve an optimistic good-case latency of $2$ steps even when $\floor{\frac{n-2f}{2}}+1$ (i.e., $\floor{\frac{n-2f+2}{2}}$) non-broadcaster parties behave maliciously. We partition the $n$ parties into the designated broadcaster $s$ and five distinct sets: $A$, $B$, $C$, $D$ and $E$, where $|A| = |D| = \floor{\frac{n-2f+2}{2}}$. When $n$ is even, we set $|B| = |C| = f-2$ and $|E| = 1$, whereas $n$ is odd, we set  $|B| = |C| = f-1$ and $|E| = 0$.

Notice that when $n$ is even, the sum $|A| + |B| + |C| + |D| + |E| + 1 = n$ holds.  Similarly, when $n$ is odd, we have $|A| + |B| + |C| + |D| + 1 = n$, implying $|E| = 0$. For brevity, we use $A$ ($B$, $C$, $D$, $E$) to refer to all parties in set $A$ ($B$, $C$, $D$, $E$), respectively.

\mypara{World~$1$ (W$1$).}
The broadcaster $s$ is honest and sends $0$ to all parties. The parties in $D$ are Byzantine but follow the protocol honestly, except that they pretend to have received input $1$ from $s$. Since the broadcaster is honest and there are at most $\floor{\frac{n-2f+2}{2}}$ Byzantine non-broadcaster parties, by validity and the optimistic good-case latency guarantee, all parties in $A \cup B \cup C \cup E$ commit $0$ in 2 steps.

\mypara{World~$2$ (W$2$).}
The broadcaster $s$ is honest and sends $1$ to all parties. The parties in $A$ are Byzantine but follow the protocol honestly, except that they pretend to have received input $0$ from $s$. Since the broadcaster is honest and there are at most $\floor{\frac{n-2f+2}{2}}$ Byzantine non-broadcaster parties, by validity and the optimistic good-case latency guarantee, all parties in $B \cup C \cup D \cup E$ commit $1$ in $2$ steps.

\mypara{World~$3$ (W$3$).}
The broadcaster $s$ and the parties in $C \cup E$ are Byzantine. Observe that when $n$ is even, we have $|C| = f-2$ and $|E| = 1$, whereas $n$ is odd, $|C| = f-1$ and $|E| = 0$. Thus, the total number of Byzantine parties is at most $f$. 

In \textbf{W$3$}, the broadcaster $s$ behaves identically to \textbf{W$1$} when communicating with $A \cup B$ and identically to \textbf{W$2$} when communicating with $D$. Similarly, the Byzantine parties in $C$ behave identically to \textbf{W$1$} when interacting with $B$ and identically to \textbf{W$2$} when interacting with $A \cup D$. Additionally, when $|E| \not = 0$, the Byzantine parties in $E$ behave identically to \textbf{W$1$} when interacting with $B$. 

\begin{claim}\label{clm:w1-w3}
The parties in $B$ cannot distinguish between \textbf{W$1$} and \textbf{W$3$} in the first two steps.
\end{claim}
\begin{claimproof}
The broadcaster $s$ behaves identical to \textbf{W$1$} when communicating with $A \cup B$ and identical to \textbf{W$2$} when communicating with $D$. Moreover, the Byzantine parties in $C \cup E$ behave identical to \textbf{W$1$} when communicating with $B$. Thus, the first two step of messages to parties in $B$ are identical in both  \textbf{W$1$} and \textbf{W$3$}. Thus, parties in $B$ cannot distinguish between \textbf{W$1$} and \textbf{W$3$}.
\end{claimproof}

\mypara{World~$4$ (W$4$).}
The broadcaster $s$ and the parties in $B \cup E$ are Byzantine. When $n$ is even, we have $|B| = f-2$ and $|E| = 1$, whereas $n$ is odd, $|B| = f-1$ and $|E| = 0$. Thus, the total number of Byzantine parties is at most $f$. 

In \textbf{W$4$}, the broadcaster $s$ behaves identically to \textbf{W$2$} when communicating with $C \cup D$ and identically to \textbf{W$1$} when communicating with $A$. Similarly, the Byzantine parties in $B$ behave identically to \textbf{W$2$} when interacting with $C$ and identically to \textbf{W$1$} when interacting with $A \cup D$. Additionally, when $|E| \not = 0$, the Byzantine parties in $E$ behave identically to \textbf{W$2$} when interacting with $C$ and behaves identically to \textbf{W$3$} when interacting with $A$ and $D$.

\begin{claim}\label{clm:w2-w4}
The parties in $C$ cannot distinguish between \textbf{W$2$} and \textbf{W$4$} in the first two steps. Similarly, the parties in $A$ cannot differentiate between \textbf{W$2$} and \textbf{W$4$}.
\end{claim}
\begin{claimproof}
The broadcaster $s$ behaves identical to \textbf{W$2$} when communicating with $C \cup D$ and identical to \textbf{W$1$} when communicating with $A$. Moreover, the Byzantine parties in $B \cup E$ behave identical to \textbf{W$2$} when communicating with $C$. Thus, the first two step of messages to parties in $C$ are identical in both  \textbf{W$2$} and \textbf{W$4$}. Thus, parties in $C$ cannot distinguish between \textbf{W$2$} and \textbf{W$4$}.
\end{claimproof}

\begin{claim}\label{clm:w3-w4}
The parties in $A$ and $D$ cannot distinguish between \textbf{W$3$} and \textbf{W$4$}.
\end{claim}
\begin{claimproof}
Observe that the broadcaster $s$ behaves identically in \textbf{W$3$} and \textbf{W$4$} when communicating with parties in $A$ and $D$. Likewise, the parties in $B \cup C \cup E$ behave identically in \textbf{W$3$} and \textbf{W$4$} when interacting with parties in $A$ and  $D$. Consequently, the parties in $A$ and $D$ cannot distinguish between \textbf{W$3$} and \textbf{W$4$}.
\end{claimproof}

By~\Cref{clm:w1-w3}, parties in $B$ cannot distinguish between \textbf{W$1$} and \textbf{W$3$} in the first two steps. Therefore, the parties in $B$ will commit $0$ in $2$ steps in \textbf{W$3$}. Similarly, by~\Cref{clm:w2-w4}, the parties in $C$ cannot distinguish between \textbf{W$2$} and \textbf{W$4$} in the first two steps. Therefore, the parties in $C$ will commit $1$ in $2$ steps in \textbf{W$4$}. 

By~\Cref{clm:w3-w4}, parties in $A \cup D$ cannot distinguish between \textbf{W$3$} and \textbf{W$4$}. This leads to a contradiction as agreement will be violated. Therefore, no such protocol can exist.
\end{proof}
\section{Communication-Efficient Balanced Optimistic RBC}
\label{sec:comm-eff-rbc}
This section introduces a communication-efficient balanced $(2, 4, 5)$ optimistic reliable broadcast protocol. It leverages an erasure coding scheme~\cite{reed1960polynomial} (detailed in\full{~\Cref{sec:extended-preliminaries}}{ full version~\cite{shrestha2025optimistic}}) and cryptographic accumulators~\cite{merkle1987digital} to enhance communication efficiency. By employing Merkle proofs~\cite{merkle1987digital} as accumulators, the protocol achieves a communication complexity of $O(nL + \k n^2 \log{n})$ for propagating an $L$-bit input, where each party sends $O(L+\kappa n\log{n})$ bits (communication is balanced over all parties). 


\mypara{Protocol details.} We build upon the communication-efficient balanced RBC protocol by Cachin et al.~\cite{cachin2005asynchronous, miller2016honey} to enable optimistic delivery. The complete protocol appears in~\Cref{fig:balanced_opt_rbc},  with our modifications highlighted in \textcolor{blue}{blue}.


The broadcaster encodes the input message $m$ using $(n, \ceil{\frac{n-f+1}{2}})$-RS codes to generate the code words $M:= [s_1, \ldots, s_n]$. It then sends $\sig{\Propose, s_j, w_j, h}$ to party $\node{j}$ $\forall j \in [n]$, where $w_j$ is the Merkle proof of $j^{th}$ code word, and $h$ is the Merkle root over $M$. Upon receiving the first $\sig{\Propose, s_i, w_i, h}$ from the broadcaster, each party $\node{i}$ verifies whether $w_i$ is a valid Merkle proof for the root $h$. If the verification succeeds, it sends $\sig{\Echo, s_i, w_i, h}$ to all parties.

Upon receiving valid $\sig{\Echo, s_*, w_*, h}$ messages from at least $\ceil{\frac{n}{2}}$ non-broadcaster parties, party $\node{i}$ attempts to interpolate the codeword vector $M' := [s'_1, \ldots, s'_n]$ using any $\ceil{\frac{n-f+1}{2}}$ of the received codewords. It then verifies whether the Merkle root computed from $M'$ matches $h$. This check ensures that the broadcaster correctly constructed $h$ from a valid encoding of the message $m$, and that honest parties will be able to decode $m$ once they receive a sufficient number of codewords. This functionality is encapsulated in the verify\_interpolation procedure. If the verification succeeds, party $\node{i}$ sends $\sig{\Vote, s_i, w_i, h}$ to all parties.


An honest party optimistically commits to $m$ and terminates upon receiving valid $\sig{\Echo, s_*, w_*, h}$ messages from at least $\ceil{\frac{n+2f-2}{2}}$ non-broadcaster parties. Observe that if some honest party optimistically commits to $m$, then all honest parties will eventually receive at least $\ceil{\frac{n}{2}}$ such $\sig{\Echo, s_*, w_*, h}$ messages. Consequently, each honest party sends $\sig{\Vote, s_j, w_j, h}$ to every party $\node{j}$, for all $j \in [n]$.

To ensure unique delivery, an honest party $\node{i}$ sends $\sig{\Ack, h}$ to all  parties either (i) upon receiving $\sig{\Vote, s_*, w_*, h}$ messages from at least $\ceil{\frac{n+f-1}{2}}$ non-broadcaster parties, or (ii) upon receiving $\sig{\Echo, s_*, w_*, h}$ messages from at least $\ceil{\frac{n+f-1}{2}}$ non-broadcaster parties. 

An honest party sends $\sig{\Ready, h}$ to all parties upon receiving at least
$\ceil{\frac{n+f-1}{2}}$ $\sig{\Ack, h}$ messages. To facilitate agreement, an
honest party also sends $\sig{\Ready, h}$ upon receiving $f+1$
$\sig{\Ready, h}$ messages from other parties. Finally, an honest party
decides $m$ and terminates upon receiving at least $2f+1$
$\sig{\Ready, h}$ messages.







\begin{figure}[ht]
    \footnotesize
    \begin{boxedminipage}[t]{\columnwidth}
         \begin{algorithmic}
            \Procedure{verify\_interpolation}{$\{s_{*}\}, h$}
            \State interpolate $M':= [s'_1, \ldots, s'_n]$ from any $\ceil{\frac{n-f+1}{2}}$ codewords in $\{s_{*}\}$
            \If{$h$ is equal to the Merkle root over $M'$}
            \State return $M'$
            \EndIf
            \State return $\bot$
            \EndProcedure
        \end{algorithmic}
        \begin{enumerate}[leftmargin=*]
            \item \textbf{Propose.} Let $M := [s_1, \ldots, s_n] := \Enc(m, n, \ceil{\frac{n-f+1}{2}})$ and $h$ be the Merkle root computed over $M$. Additionally, let $w_j$ be the Merkle proof of $j^{th}$ code word $s_j$. The broadcaster $\node{k}$ sends $\sig{\Propose, s_j, w_j, h}$ to $\node{j}$  $\forall j \in [n]$.
            
            \item \textbf{Echo.} Upon receiving the first $\sig{\Propose, s_i, w_i, h}$, if the Merkle proof is valid then party $\node{i}$ sends $\sig{\Echo, s_i, w_i, h}$  to all parties.

            {\color{blue}
            \item \textbf{Vote.} Upon receiving $\sig{\Echo, s_*, w_*, h}$ from $\ceil{\frac{n}{2}}$ non-broadcaster parties with valid Merkle proofs, let $M =$ \Call{verify\_interpolation}{$\{s_*\}$, h}. If $M \not = \bot$, send $\sig{\Vote, s_i, w_i, h}$ to all parties.
            }

            {\color{blue}
            \item \textbf{Ack.} Party $\node{i}$ sends a $\sig{\Ack, h}$  message to  all parties (if not already sent) under either of the following conditions:
            \begin{itemize}
                \item[-] Upon receiving $\sig{\Echo, s_*, w_*, h}$ from $\ceil{\frac{n+f-1}{2}}$ non-broadcaster parties, each accompanied by a valid Merkle proof,  such that $M =$ \Call{verify\_interpolation}{$\{s_*\}$, h} and  $M \not = \bot$,
                \item[-] Upon receiving $\sig{\Vote, s_*, w_*, h}$ from $\ceil{\frac{n+f-1}{2}}$ non-broadcaster parties, each accompanied by a valid Merkle proof,  such that $M =$ \Call{verify\_interpolation}{$\{s_*\}$, h} and  $M \not = \bot$
            \end{itemize}
            }
            
            \item \textbf{Ready.} Party $\node{i}$ sends a $\Ready$ message to all parties (if not already sent) under either of the following conditions:

            \begin{itemize}[leftmargin=*]
                \item[-] Upon receiving $\sig{\Ack, h}$ from $\ceil{\frac{n+f-1}{2}}$ non-broadcaster parties, send $\sig{\Ready, h}$

                \item[-] Upon receiving $f+1$ $\sig{\Ready, h}$, send $\sig{\Ready, h}$
            \end{itemize}

            {\color{blue}
            \item \textbf{Opt Commit.} Upon receiving $\sig{\Echo, s_*, w_*, h}$ from $\ceil{\frac{n+2f-2}{2}}$ distinct non-broadcaster parties with valid Merkle proofs, let $M =$ \Call{verify\_interpolation}{$\{s_*\}$, h}. If $M\neq\bot$, party $\node{i}$ decodes $M$ to obtain $m$, then commits $m$.
            }

            \item \textbf{\{3,4\}-step Commit.} Upon receiving $2f+1$ $\sig{\Ready, h}$ messages, wait for $\ceil{\frac{n-f+1}{2}}$ distinct $(\Echo, s_*, w_*, h)$ messages with valid Merkle proofs. Then, decode $m$, commit it, and terminate.

        \end{enumerate}
    \end{boxedminipage}
\caption{Balanced (2,4,5) optimistic reliable broadcast under $n \ge 3f+1$ with $O(nL + \k n^2 \log{n})$ communication.}
\label{fig:balanced_opt_rbc}
\end{figure}

\mypara{Unbalanced optimistic RBC with $O(nL + \k n^2 \log{n})$ communication.} 
The protocol in~\Cref{fig:balanced_opt_rbc} is balanced, meaning all parties asymptotically send the same number of bits.
However, in some applications (e.g., AVSS protocols in~\Cref{sec:opt-avss}), parties must be able to verify the complete input message before sending $\Echo$ or $\Ready$. A solution is to have the broadcaster transmit the entire input message to all parties, while the non-broadcaster parties send $\Echo$ and $\Vote$ messages containing only codewords and Merkle proofs (to ensure agreement even if the broadcaster is faulty). In this modified protocol, the broadcaster incurs a communication cost of $O(nL + \k n)$, while the total communication complexity remains $O(nL + \k n^2 \log{n})$.

We present detailed security analysis in\full{~\Cref{sec:bal-rbc-proof}}{ full version~\cite{shrestha2025optimistic}}. Due to the similarities between balanced RBC and AVID, our latency reduction technique can be naturally extended to AVID, enabling 2-step optimistic termination while preserving the dispersal, storage, and retrieval costs of existing AVID protocols. The key difference is that in AVID, the presence of a distinct client outside the set of $n$ parties slightly increases the quorum size required for optimistic termination. Detailed protocol and analysis can be found in\full{~\Cref{sec:opt-avid}}{ full version~\cite{shrestha2025optimistic}}.

\section{Extensions to Verifiable Secret Sharing}
\label{sec:opt-avss}
In this section, we introduce optimistic AVSS protocols. We observe that optimistic termination can be incorporated into existing AVSS protocols by replacing their RBC component with our optimistic RBC in a ``black-box'' manner. Most of these AVSS protocols rely on verifiable encryption~\cite{das2023verifiable} which requires setup such as PKI. We also propose a setup-free optimistic AVSS protocol by enhancing an existing AVSS protocol in a ``white-box'' manner. We begin by formally defining several variants of the AVSS problem, following the definitions from~\cite{das2021asynchronous}. 

\subsection{Definitions}
For all these primitives, we assume a finite field $\F$ of order $q$. Let $\negl(\k)$ be a negligible function in $\k$. 

\begin{definition}[Asynchronous Verifiable Secret Sharing~\cite{das2021asynchronous}]
    Let $(\Sh, \Rec)$ be a pair of protocols in which a dealer $\node{d}$ shares a secret $s$ using $\Sh$ and other parties use $\Rec$ to recover the shared secret. We say that $(\Sh, \Rec)$ is a $f$-resilient AVSS scheme if the following properties hold with probability $1 - \negl(\k)$: 
\begin{itemize}[leftmargin=*]
    \item[-] \textbf{Validity.} If $\node{d}$ is honest, then all honest parties will complete the $\Sh$ protocol. Moreover, if all honest parties start the $\Rec$ protocol, each honest party $\node{i}$ reconstructs the same secret $s$.
    
    \item[-] \textbf{Termination.} If an honest party terminates the $\Sh$ protocol, then all honest parties will eventually terminate the $\Sh$ protocol.

    \item[-] \textbf{Correctness.} If $\node{d}$ is honest, then each honest party upon terminating Rec, outputs the shared secret $s$. Moreover, if  some honest party terminates the $\Sh$ protocol, then there exists a fixed secret $s' \in \F$, such that each honest party upon terminating $\Rec$, will output $s'$.
    
    \item[-] \textbf{Secrecy.} If the dealer \node{d} is honest and no honest party has started $\Rec$, then an adversary that corrupts up to $f$ parties has no information about $s$.
\end{itemize}
\end{definition}

In the above definition of AVSS, honest parties may complete the sharing phase without actually receiving their share from the dealer. 
To address this issue, a stronger primitive called asynchronous complete secret sharing (ACSS) is introduced. ACSS ensures that by the end of the sharing phase, every honest party receives its share of a consistent secret, even if the dealer is malicious. Such robust guarantees are crucial in several applications, including distributed key generation~\cite{das2022practical, das2023practical}.

\begin{definition}[Asynchronous Complete Secret Sharing] An ACSS protocol is an AVSS protocol that additionally satisfies the following completeness property.
\begin{itemize}[leftmargin=*]
    \item[-] \textbf{Completeness.} If some honest party terminates $\Sh$, then there exists a degree-$f$ polynomial p(·) over $\F$ such that $p(0) = s'$ and each honest party $\node{i}$ will eventually hold a share $s_i = p(i)$. Moreover, when $\node{d}$ is honest, $s' = s$.
\end{itemize}
\end{definition}

\subsection{Optimistic AVSS using Optimistic RBC}
Many existing AVSS protocols in the literature can be enhanced to support optimistic termination by simply replacing their RBC protocol with our optimistic RBC in a ``black-box'' manner. Below, we list a few AVSS schemes that can benefit from this enhancement:

\mypara{AVSS protocol.} We consider the AVSS protocol of Das et al.~\cite{das2021asynchronous}. In their AVSS protocol, the dealer $\node{d}$ randomly selects a degree-$f$ polynomial $p(.)$ such that $p(0) = s$, where $s$ is the secret, and commits to the polynomial using either Feldman’s commitment~\cite{feldman1987practical} or Pedersen’s commitment~\cite{pedersen1991non}. The dealer then disseminates the $O(\k n)$-sized commitment $\C$ using RBC and privately sends each party $\node{j}$ its corresponding secret share $p(j)$ for all $j \in [n]$. Within the RBC, parties send an $\Echo$ message only if the received secret share is consistent with the commitment $\C$ from the first message of the RBC. Consequently, if the RBC terminates, at least $\ceil{\frac{n-f+1}{2}}$ honest parties must have received their secret shares consistent with $\C$, and the secret can be reconstructed. Their protocol is based on the DL assumption and leverages a communication-efficient RBC protocol~\cite{das2021asynchronous}, which achieves $O(nL + \kappa n^2)$ communication to propagate an $L$-bit input and has a 4-step good-case latency. As a result, the corresponding AVSS protocol inherits these properties, achieving $O(\kappa n^2)$ total communication and a 4-step good-case latency.

We replace their RBC with the unbalanced optimistic RBC (discussed in \S\ref{sec:comm-eff-rbc}) to propagate the $O(\k n)$-sized commitment $\C$, while keeping the rest of the protocol unchanged. The resulting protocol retains the DL assumption and achieves an optimistic good-case latency of $2$ steps, albeit with $O(\k n^2 \log{n})$ communication.

\mypara{Dual-threshold ACSS protocol.} We consider the dual-threshold\footnote{In the dual-threshold ACSS protocol, the reconstruction threshold $\ell$ can exceed the Byzantine threshold $f$, i.e., $f \le \ell \le n - f$. Such schemes ensure the secrecy of the secret even when the adversary corrupts more than $f$ and up to $\ell$ parties in the system.} ACSS protocol of Das et al.~\cite{das2021asynchronous}. Their dual-threshold ACSS scheme relies on verifiable encryption~\cite{das2023verifiable}, where the secret shares of all parties are encrypted, and the $O(\k n)$-sized encrypted vector is propagated using communication-efficient RBC~\cite{das2021asynchronous}. This allows all parties to verify the correctness of secret shares without learning the individual shares of others. However, a public-key infrastructure (PKI) is required to enable verifiable encryption. By using our unbalanced optimistic RBC, we can obtain a dual-threshold ACSS protocol with an optimistic good-case latency of $2$ steps and $O(\k n^2 \log{n})$ communication. The resulting protocol requires a setup to enable PKI for verifiable encryption; however, the RBC protocol can still operate without signatures. Such schemes can still be beneficial in avoiding costly signature verification.


\subsection{Setup-free Optimistic ACSS} 
\label{sec:setup-free-acss}
\begin{figure}[ht]
	\footnotesize
	\begin{boxedminipage}[t]{\columnwidth}
    // Sharing phase ($\Sh$)
    \begin{enumerate}[leftmargin=*]
        \item The dealer $\node{d}$ chooses a random symmetric bivariate polynomial $u$ of degree $f$ with $u(0, 0) = s$ i.e.,
        
        $$u(x, y) = \sum_{j, l = 0}^{f} u_{jl} x^j y^l$$
        
        Let $\C = \{C_{j,l}\}$ where $C_{j,l} = g^{u_{j,l}}$ for all $j,l \in [0,f]$. Additionally, let $a_j(y) = u(j, y)$ and $b_j(x) = u(x, j)$. The dealer sends $\sig{\Deal, \C, a_j(y), b_j(x)}$ to party $\node{j}$ $\forall j \in [n]$.

        \item Upon receiving $\sig{\Deal, \C, a_i(y), b_i(x)}$ from $\node{d}$ for the first time, if $\VerifyPoly(\C, i, a, b)$, then send $\sig{\Echo, \C, a_i(j), b_i(j)}$ to $\node{j}$ $\forall j \in [n]$.
        
        \item  Upon receiving $\sig{\Echo, \C, \alpha, \beta}$ from non-dealer party $\node{m}$ for the first time,
        \begin{algorithmic}
        \If{$\VerifyPoint(\C, i, m, \alpha, \beta)$}
        \State $A_{\C} \gets A_{\C} \cup \{(m, \alpha)\}; B_{\C} \gets B_{\C} \cup \{(m, \beta)\}$
        \State $e_{\C} \gets e_{\C} + 1$
        {\color{blue}
        \If{$e_{\C} = \ceil{\frac{n}{2}}$}
        \State interpolate $\bar{a}, \bar{b}$ from $B_{\C}, A_{\C}$ respectively.
        \State send $\sig{\Vote, \C, \bar{a}(j), \bar{b}(j)}$ to $\node{j}$ $\forall j \in [n]$
        \EndIf
        }
        \If{$e_{\C} = \ceil{\frac{n+f-1}{2}}$}
        \State send $\sig{\Ack, \C, \bar{a}(j), \bar{b}(j)}$ to $\node{j}$ $\forall j \in [n]$
        \EndIf
        {\color{blue}
        \If{$e_{\C} = \ceil{\frac{n+2f-2}{2}}$}
        \State $s_i \gets \bar{a}(0)$;  \textbf{terminate} (optimistically)
        \EndIf
        }
        \EndIf
        \end{algorithmic}

        {\color{blue}
        \item Upon receiving $\sig{\Vote, \C, \alpha, \beta}$ from non-dealer party $\node{m}$ for the first time, 
        \begin{algorithmic}
            \If{$\VerifyPoint(\C, i, m, \alpha, \beta)$}
                \State $A_{\C} \gets A_{\C} \cup \{(m, \alpha)\}; B_{\C} \gets B_{\C} \cup {(m, \beta)}$
                \State $v_{\C} \gets v_{\C} + 1$
                \If{$v_{\C} = \ceil{\frac{n+f-1}{2}}$}
                    \State interpolate $\bar{a}, \bar{b}$ from $B_{\C}, A_{\C}$ respectively
            \State send $\sig{\Ack, \C, \bar{a}(j), \bar{b}(j)}$ to $\node{j}$ $\forall j \in [n]$ (if not already sent)
                \EndIf
            \EndIf
        \end{algorithmic}
        }
{\color{blue}
        \item Upon receiving $\sig{\Ack, \C, \alpha, \beta}$ from non-dealer party $\node{m}$ for the first time,
        \begin{algorithmic}
            \If{$\VerifyPoint(\C, i, m, \alpha, \beta)$}
                \State $A_{\C} \gets A_{\C} \cup \{(m, \alpha)\}; B_{\C} \gets B_{\C} \cup {(m, \beta)}$
                \State $a_{\C} \gets a_{\C} + 1$
                \If{$a_{\C} = \ceil{\frac{n+f-1}{2}}$}
                    \State interpolate $\bar{a}, \bar{b}$ from $B_{\C}, A_{\C}$ respectively
            \State send $\sig{\Ready, \C, \bar{a}(j), \bar{b}(j)}$ to $\node{j}$ $\forall j \in [n]$ (if not already sent)
                \EndIf
            \EndIf
        \end{algorithmic}
}
        \item Upon receiving $\sig{\Ready, \C, \alpha, \beta}$ from $\node{m}$ for the first time,
        \begin{algorithmic}    
            \If{$\VerifyPoint(\C, i, m, \alpha, \beta)$}
            \State $A_{\C} \gets A_{\C} \cup \{(m, \alpha)\}; B_{\C} \gets B_{\C} \cup {(m, \beta)}$
            \State $r_{\C} \gets r_{\C} + 1$
            \If{$r_{\C} = f+1$}
            \State interpolate $\bar{a}, \bar{b}$ from $B_{\C}, A_{\C}$ respectively
            \State send $\sig{\Ready, \C, \bar{a}(j), \bar{b}(j)}$ to $\node{j}$ $\forall j \in [n]$ (if not already sent)
            \EndIf
            \If{$r_{\C} = 2f+1$}
            \State $s_i \gets \bar{a}(0)$;  \textbf{terminate}
            \EndIf
            \EndIf
        \end{algorithmic}
    \end{enumerate}
    \vspace{0.4em}
    // Reconstruction phase ($\Rec$)
    \begin{enumerate}[leftmargin=*]
        \setcounter{enumi}{5}
        \item Each party $\node{i}$ sends $\sig{\Reconstruct, s_i}$ to all parties.
        \item Upon receiving $\sig{\Reconstruct, \sigma}$ from $\node{m}$ for the first time, 
        \begin{algorithmic}
            \If{$\VerifyShare(\C, m, \sigma)$}
            \State $\BigS \gets \BigS \cup \{(m, \sigma)\}$
            \State $c \gets c + 1$
            \If{$c = f+1$}
            \State interpolate $a_0$ from $\BigS$ and output $a_0(0)$
            \EndIf
            \EndIf
        \end{algorithmic}
    \end{enumerate}
    \end{boxedminipage}
    \caption{(2,4,5) Setup-free Optimistic ACSS}
    \label{fig:acss}
\end{figure}

We also present a setup-free optimistic ACSS protocol which requires ``white-box'' extension. To achieve this, we extend the setup-free ACSS protocol of Cachin et al.~\cite{cachin2002asynchronous}, which has the DL assumption and a good-case latency of 3 steps. We note that their base protocol has a communication complexity of $O(\k n^4)$, which can be optimized to $O(\k n^3)$ using the modifications they suggested. To the best of our knowledge, the protocol by Cachin et al.~\cite{cachin2002asynchronous} remains the state-of-the-art setup-free ACSS protocol. For the sake of simplicity, we extend their base protocol and then discuss ways to further improve the communication complexity. The details of the protocol are provided in~\Cref{fig:acss}. We also present the helper verification functions (e.g., $\VerifyPoly$, $\VerifyPoint$, $\VerifyShare$) used in the protocol in\full{~\Cref{sec:acss-functions}}{ full version~\cite{shrestha2025optimistic}}.

In our protocol, the dealer $\node{d}$ randomly samples a symmetric bivariate polynomial $u(x, y)$ of degree $f$ such that $u(0, 0) = s$ and commits to it using a two-dimensional commitment $\C$. The dealer then computes the  row and column polynomials $a_j(y)$ and $b_j(x)$  respectively and sends $\sig{\Deal, \C, a_j(y), b_j(x)}$ to each party $\node{j}$ for all $j \in [n]$. Upon receiving $\sig{\Deal, \C, a_i(y), b_i(x)}$, each party $\node{i}$ verifies the polynomial $a_i(y)$ and $b_i(x)$ using the $\VerifyPoly$ function. If the verification succeeds, $\node{i}$ sends an $\Echo$ message containing $\C$ and the evaluations $a_i(j)$ and $b_i(j)$ to each party $\node{j}$ for all $j \in [n]$.

Upon receiving $\sig{\Echo, \C, a_m(i), b_m(i)}$ from party $\node{m}$, party $\node{i}$ verifies the evaluations $a_m(i)$ and $b_m(i)$ against the commitment $\C$ using the $\VerifyPoint$ procedure. An honest party optimistically terminates upon receiving $\ceil{\frac{n+2f-2}{2}}$ valid $\Echo$ messages for $\C$.

To ensure termination when some honest party terminates optimistically, an honest party $\node{i}$ interpolates the polynomials $\bar{a}$ and $\bar{b}$ upon receiving $\ceil{\frac{n}{2}}$ valid $\Echo$ messages for $\C$ from non-dealer parties, and sends $\sig{\Vote, \C, \bar{a}(j), \bar{b}(j)}$ to every party $\node{j}$, for all $j \in [n]$. Furthermore, upon receiving valid $\Echo$ messages for $\C$ from at least $\ceil{\frac{n+f-1}{2}}$ non-dealer parties, it sends $\sig{\Ack, \C, \bar{a}(j), \bar{b}(j)}$ to every party $\node{j}$, for all $j \in [n]$.

Upon receiving $\ceil{\frac{n+f-1}{2}}$ valid $\sig{\Vote}$ messages for
$\C$, party $\node{i}$ interpolates the polynomials
$\bar{a}$ and $\bar{b}$ from the received evaluations, and sends
$\sig{\Ack, \C, \bar{a}(j), \bar{b}(j)}$ to every party $\node{j}$,
for all $j \in [n]$.

Upon receiving $\ceil{\frac{n+f-1}{2}}$ valid $\Ack$ messages for
$\C$, party $\node{i}$ sends $\sig{\Ready, \C, *, *}$ to all parties.
To ensure termination, an honest party also sends
$\sig{\Ready, \C, *, *}$ upon receiving $f+1$ valid
$\sig{\Ready}$ messages, provided it has not already done so.

Finally, party $\node{i}$ terminates upon receiving at least
$2f+1$ valid $\Ready$ messages for $\C$. It then interpolates
$\bar{a}$ from the received evaluations and sets $\bar{a}(0)$ as its share.


\paragraph{Remark on the communication complexity.} In the above protocol, $\C$ has a size of $O(\k n^2)$. Since each message consists of $\C$, the communication complexity of the protocol is $O(\k n^4)$. However, the size of the commitment can be reduced to $O(\k n)$ by relying on hash functions, as presented in Cachin et al.~\cite{cachin2002asynchronous}. We refer the reader to \S3.4 in~\cite{cachin2002asynchronous} for more details on this optimization.

When the $O(\k n)$-sized commitment is included in each message, the resulting communication complexity is $O(\k n^3)$. However, instead of including the full commitment in every message, only the dealer can send the complete $O(\k n)$-sized commitment in the $\Deal$ message, allowing parties to verify the received polynomial before sending their $\Echo$ messages. Subsequent $\Echo$, $\Vote$, $\Ack$ and $\Ready$ messages sent by other parties can then include only an erasure-coded version of the commitment along with Merkle proofs. With this optimization, the optimistic ACSS protocol achieves a communication complexity of $O(\k n^2 \log{n})$.

We present a security analysis in\full{~\Cref{sec:acss-security-proof}}{ full version~\cite{shrestha2025optimistic}}.
\section{Application: Latency-efficient Signature-free DAG-based BFT}
\label{sec:dag_bft}
In this section, we introduce \name, a signature-free DAG-based BFT protocol that guarantees a designated leader in every round and achieves improved latency, committing the leader vertex with a latency of just one RBC plus $1\delta$. Our design builds on the core ideas of Sailfish~\cite{shrestha2024sailfish}, which also designates a leader in every round, but requires use of signatures during failures.



\mypara{Dissecting Sailfish~\cite{shrestha2024sailfish}.}
In DAG-based protocols, a designated ``leader vertex'' is committed first, and the remaining non-leader vertices are ordered when the leader vertex is committed. Thus, the frequency with which leaders are designated and how fast the leader vertices are committed play a crucial role in determining commit latency. Sailfish ensures the presence of a leader in each round and achieves leader vertex commitment with a latency of one RBC plus $1\delta$. Consequently, the non-leader vertices are ordered with a latency of $2$ RBC plus $1\delta$.

A key technique used in Sailfish to support a leader in each round is to enforce one of two conditions on the round $r$ leader vertex: either it must reference the round $r-1$ leader vertex, or it must provide a ``no-vote'' proof demonstrating that a sufficient number of honest parties did not vote for the round $r-1$ leader vertex, thereby proving that it could not have been committed. This constraint is crucial for ensuring the protocol's safety while maintaining leader support in each round.

To commit the round $r$ leader vertex $v_k$ with a latency of one RBC plus $1\delta$, Sailfish ``peeks'' into the RBC for round $r+1$ vertices and commits $v_k$ upon detecting $2f+1$ first messages (of the RBC) referencing $v_k$. This optimization reduces latency by proactively identifying commitment conditions early in the RBC process.

\mypara{Towards a signature-fee DAG-based BFT with a per-round leader and a latency of $1$ RBC plus $1\delta$.} In DAG-based BFT protocols, each party proposes a vertex via RBC, referencing at least $2f+1$ vertices from the previous round. In common-case executions with no failures, no signatures are required when a signature-free RBC is used for vertex proposals. In this respect, Sailfish is already signature-free in the common case. However, in the presence of failures, Sailfish relies on signed ``no-vote'' messages to construct a ``no-vote'' proof, necessitating the use of signatures in such scenarios.

To achieve signature-free DAG-based BFT that supports a leader vertex in each round, we embed the no-vote messages directly within the DAG vertices. Specifically, when a round $r$ vertex $v$ does not reference the round $r-1$ leader vertex, it inherently acts as a no-vote message against that leader vertex. Following Sailfish, we impose a requirement on the round $r$ leader vertex: it must either include a reference to the round $r-1$ leader vertex or provide proof that a quorum of round $r$ vertices did not vote for it. This proof is constructed by referencing at least $2f$ round $r$ non-leader vertices that omit a reference to the round $r-1$ leader vertex. Together with the round $r$ leader vertex, this implies the existence of $2f+1$ round $r$ vertices that do not reference the round $r-1$ leader vertex. Notably, when such a quorum exists, the round $r-1$ leader vertex cannot be committed. Consequently, these references serve as a ``no-vote'' proof, signifying that the round $r-1$ leader vertex could not have been committed.

We inherit the commit rule from Sailfish, which involves peeking into the RBC and observes the first messages of the RBC. This technique enables committing the leader vertex with a latency of 1 RBC plus $1\delta$.

\ignore{
\mypara{Post-quantum Secure DAG-based protocol.}
\name is the first post-quantum secure DAG-based protocol with a
good-case latency of 1RBC + $1\delta$. 
Compare to Sailfish~\cite{shrestha2024sailfish}, we remove both the timeout certificate 
$\mathcal{TC}$ and the no-vote certificate $\mathcal{NVC}$. 
In Sailfish, $\mathcal{TC}$ represents a timeout certificate composed of a quorum of timeout messages,  
and $\mathcal{NVC}$ is a no-vote certificate formed by a quorum of no-vote messages in a round. 
By removing $\mathcal{TC}$ and $\mathcal{NVC}$ we simplify the vertex structure 
and reduce resource consumption in practice.
The purpose of $\mathcal{TC}$ is to prevent Byzantine parties from driving the protocol too fast. 
Although we forgo this safeguard, we eliminate the need for an additional multicast step—-
in Sailfish, parties must multicast $\mathcal{TC}$ upon receiving it.
$\mathcal{NVC}$ serves as a proof that a quorum of parties did not ``vote'' for a round. 
Instead of using $\mathcal{NVC}$, we achieve the same functionality by relying on $2f+1$ vertices 
that do not have a strong path to the previous round leader vertex. 
This is because the ``vote'' signal is inherently included in a vertex 
if it does not have a strong path to the leader vertex of the previous round.
More concretely, an honest leader in Sailfish++ wait for $2f$ vertices (exclusive itself) 
that do not have a strong path to 
the previous round leader vertex rather than $\mathcal{NVC}$. 
When parites deliver a leader vertex, they do not verify its validity using $\mathcal{NVC}$. 
Instead, they wait for $2f+1$ vertices that do not have a strong path to 
the previous round leader vertex. 
This optimization eliminates the need to send no-vote messages, 
resulting in fewer message types compared to Sailfish.

\mypara{Latency imporving to 1RBC plus $1\delta$ for leader vertices.}
In the good case, the latency of Sailfish++ is 1RBC + $1\delta$ which is the same as Sailfish. 
When the sender is honest, the first value received from the sender is the value 
that is eventually delivered. 
We rely on this observation and make decisions based on the first received values of the round $r + 1$ vertices--that is, 
we do not require the RBC of round $r + 1$ vertices to be delivered to commit the round $r$ leader vertex. 
However, when the sender is faulty, 
the first value received from the sender can be different from the final delivered value. 
To account for such Byzantine behavior, our protocol commits the round $r$ leader vertex 
only when at least $f+1$ round $r+1$ vertices have strong paths to the round $r$ leader vertex. 
This is an optimization compared to Sailfish, 
which requires $2f+1$ round $r+1$ vertices to have strong paths to the round $r$ leader vertex. 
In our protocol, no-vote messages are not sent explicitly. 
Instead, a round $r+1$ leader vertex without strong edges to round $r$ vertex implicitly 
indicates ``no-vote'' for round $r$ vertex. 
As a result, we can commit the round $r$ leader vertex when $f+1$ round $r+1$ vertices have strong paths to it, 
whereas Sailfish cannot do so; otherwise, it would violate the safety property.
}

\subsection{Protocol Details}
Our protocol advances through a sequence of consecutively numbered \emph{rounds}, starting from $1$. In each round $r$, a designated leader, denoted as $L_r$, is selected using a deterministic method based on the round number.

\begin{figure*}[!ht]
	\footnotesize
	\begin{boxedminipage}[t]{\textwidth}
		\textbf{Local variables:}
		\setlist{nolistsep}
		\begin{itemize}[noitemsep]
			\item[] struct vertex $v$: \Comment{The struct of a vertex in the DAG}
			      \begin{itemize}[noitemsep]
			      	\item[] $v.round$ - the round of $v$ in the DAG
			      	\item[] $v.source$ - the party that broadcast $v$
			      	\item[] $v.block$ - a block of transactions
			      	\item[] $v.strongEdges$ - a set of vertices in $v.round-1$ that represent strong edges
					\item[] $v.weakEdges$ - a set of vertices in rounds $<$ $v.round-1$ that represent weak edges
                    
                    \item[] {\color{blue} $v.nvEdges$ - a set of at least $2f$ vertices in $v.round$ without paths to the leader vertex of $v.round - 1$, representing no-vote edges}
                    
			      \end{itemize}
			\item[] $DAG_i[] - $ An array of sets of vertices (indexed by rounds)
			\item[] $blocksToPropose$ - A queue, initially empty, $\node{i}$ enqueues valid blocks of transactions from clients
			\item[] $leaderStack \gets $ initialize empty stack
		\end{itemize}

		\begin{algorithmic}[1]
			\Procedure{path}{$v,u$} \Comment{Check if exists a path consisting of strong, weak and no-vote edges in the DAG}
			\State \Return \parbox[t]{\dimexpr\textwidth-\leftmargin-\labelsep-\labelwidth}{exists a sequence of $k \in \N$, vertices $v_1,\ldots, v_k$ s.t. \\
$v_1 = v$, $v_k = u$, and $\forall j \in [2,..,k]: v_j \in \bigcup_{r\ge 1} DAG_i[r] \land (v_j \in v_{j-1}.weakEdges \cup v_{j-1}.strongEdges  \cup  v_{j-1}.nvEdges)$\strut}  
			\EndProcedure
			
			\Procedure{strong\_path}{$v,u$} \Comment{Check if exists a path consisting of only strong edges from $v$ to $u$ in the DAG}
			\State \Return \parbox[t]{313pt}{exists a sequence of $k \in \N$, vertices $v_1,\ldots, v_k$ s.t.\ \\
				$v_1 = v$, $v_k = u$, and $\forall j \in [2,..,k]: v_j \in \bigcup_{r\ge 1} DAG_i[r] \land v_j \in  v_{j-1}.strongEdges$\strut}  
			\EndProcedure
            \algstore{bkbreak}
        \end{algorithmic}
    
    \vspace{0.4em}
    
   \begin{minipage}{0.48\textwidth}
        \begin{algorithmic}
            \algrestore{bkbreak}
            \Procedure{set\_weak\_edges}{$v, r$} \Comment{Add edges to orphan vertices}
            \State $v.weakEdges \gets \{\}$
            \For{$r' = r-2 \text{ down to } 1$}
            \For{\textbf{every} $u \in DAG_i[r']$ s.t.  $\neg$\Call{path}{$v,u$}}
            \State $v.weakEdges \gets v.weakEdges \cup \{u\}$
            \EndFor
            \EndFor
            \EndProcedure

            \vspace{0.4em}
   
			\Procedure{get\_vertex}{$p, r$}
			\If{$\exists v \in DAG_i[r]$ s.t. $v.source = p$}
			\State \Return $v$
			\EndIf
			\State \Return $\bot$
			\EndProcedure
            
            \vspace{0.4em}
			
            \Procedure{get\_leader\_vertex}{$r$}
			\State \Return \Call{get\_vertex}{$L_r, r$} 
			\EndProcedure
   
            \vspace{0.4em}
   			
            \Procedure{a\_bcast$_i$}{$b,r$}
			\State $blocksToPropose.$enqueue($b$)
			\EndProcedure
            \algstore{bkbreak}
        \end{algorithmic}
    \end{minipage}
    \hfill
   \begin{minipage}{0.48\textwidth}
        \begin{algorithmic}
        \algrestore{bkbreak}
            \Procedure{broadcast\_vertex}{$r$}
			\State $v \gets$ \Call{create\_new\_vertex}{$r$}
			\State \Call{try\_add\_to\_dag}{$v$}
			\State \Call{r\_bcast$_i$}{$v, r$}
			\EndProcedure
            
             \vspace{0.4em}

            \Procedure{order\_vertices}{$ $}\label{func:order-vertices}
			\While{$\neg leaderStack.$isEmpty()}
			\State $v \gets leaderStack.$pop()
			\State $verticesToDeliver \gets \{v' \in \bigcup_{r>0} DAG_i[r] \,| \, path(v, v')  \land v' \not \in deliveredVertices\}$
			\For{\textbf{every} $v' \in verticesToDeliver$ in some deterministic order}
			\State \textbf{output} \Call{a\_deliver$_i$}{$v'.block, v'.round, v'.source$}
			\State $deliveredVertices \gets deliveredVertices \cup \{v'\}$
            \EndFor
            \EndWhile
			\EndProcedure
			\algstore{bkbreak}
		\end{algorithmic}
        \end{minipage}
  
	\end{boxedminipage}
	\caption{Basic data structures for \name. The utility functions are adapted from~\cite{spiegelman2022bullshark,shrestha2024sailfish}.}
	\label{fig:data_structures}
\end{figure*}

\ignore{
}

\paragraph{Basic data structures.} In each round $r$, every party $P_i$ proposes a single vertex $v$, which contains a (potentially empty) block of transactions and references at least $2f+1$ vertices from a previous round. These references form the edges of the DAG. The proposed vertices are disseminated using RBC to prevent equivocation and ensure that all honest parties eventually receive them.

The data structures and utilities used for DAG construction are illustrated in~\Cref{fig:data_structures}. Each party maintains a local copy of the DAG, and while different honest parties may initially have distinct views of it, the reliable broadcast of vertices ensures eventual convergence to a consistent DAG view. The local DAG view for party $P_i$ is represented as $DAG_i$. Each vertex is uniquely identified by a round number and a sender (source). When $\node{i}$ delivers a vertex from round $r$, it is added to $DAG_i[r]$, where $DAG_i[r]$ can contain up to $n$ vertices.

Each vertex $v$ contains two types of outgoing edges: strong edges and weak edges. The strong edges of a vertex $v$ in round $r$ connect to at least $2f+1$ vertices from round $r - 1$, while the weak edges link to at most $f$ vertices from earlier rounds ($< r - 1$) such that no path exists from $v$ to these vertices. A path that follows only strong edges from vertex $v_k$ to $v_\ell$ is referred to as a strong path. Additionally, we introduce a unique field in our protocol, denoted as $v.nvEdges$, which references at least $2f$ vertices from round $r$ that satisfy specific constraints. This field is utilized exclusively by certain special vertices, as explained later.

\begin{figure*}[!ht]
	\footnotesize
	\begin{boxedminipage}[t]{\textwidth}
		\textbf{Local variables:}
		\setlist{nolistsep}
		\begin{itemize}[noitemsep]
			\item[] $round \gets 1;$ $\textit{buffer} \gets \{\}$
		\end{itemize}
		
		\begin{algorithmic}[1]
			\algrestore{bkbreak}
			        \Upon {\Call{r\_deliver$_i$}{$v, r, p$}} \label{line: deliver vertex} \label{line: accept}
        \If {$v.source = p \land v.round = r \land |v.strongEdges| \geq 2f+1$}
        \If {$r>1 \land v.source = L_r \land \not\exists v' \in v.strongEdges$ s.t. $v'.source = L_{r-1}$}

        %
        \State {\color{blue} \textbf{wait until} r\_deliver vertices in $v.nvEdges$  s.t. $\forall v' \in v.nvEdges: \not \exists v'' \in v'.strongEdges$  s.t. $v''.source = L_{r-1}$ }
        
        \EndIf
        
            \If {$\neg$\Call{try\_add\_to\_dag}{$v$}} \label{line: is_valid}
                \State $\textit{buffer} \gets \textit{buffer} \cup \{v\}$
            \Else
                \For {$v' \in \textit{buffer} : v'.round \geq r$}
                    \State \Call{try\_add\_to\_dag}{$v'$}
                \EndFor
            \EndIf
        \EndIf
        \EndUpon
    
    \vspace{0.4em}
    
        \Upon {$|DAG_i[r]| \geq 2f+1 \land (\exists v' \in DAG_i[r] : v'.source = L_r\ \lor$ $2f+1$ $\langle \texttt{timeout}, r \rangle$ are received) for $r \geq round$} \label{line: advance round}

            \If {$P_i = L_{r+1}$}
                \State 
                {\color{blue}
                \textbf{wait until} {$\exists v'\in DAG_i[r]:v'.source=L_{r}$ $\lor$
                (r\_deliver a set $\mathcal{Q}$ of $\ge 2f$ round $r+1$ vertices s.t. $\forall v' \in \mathcal{Q}: \not \exists  v'' \in v'.strongEdges$ s.t. $ v''.source = L_{r}$)}
                }
               
            \EndIf
            \State advance\_round$(r+1)$
        \EndUpon
        \algstore{bkbreak}
        \end{algorithmic}
    
    \vspace{0.4em}
    
   \begin{minipage}{0.48\textwidth}
        \begin{algorithmic}
            \algrestore{bkbreak}

\vspace{0.4em}
        
        \Upon {$timeout$} \label{line: timeout}

        \If {$\not \exists v' \in DAG_i[round] : v'.source = L_{round}$}
        \State multicast $\langle\Timeout, round\rangle$
        \EndIf
        \EndUpon
        \vspace{0.4em}
        \Upon {receiving $f + 1\ \tuple{\Timeout, r}$} \label{line: f+1 timeout}
        \State multicast $\tuple{\Timeout, r}$ if not already sent
        \EndUpon
        



        \vspace{0.4em}

        \Procedure{create\_new\_vertex}{$r$}
        \State $v.round \gets r$
        \State $v.source \gets \node{i}$
        \State $v.block \gets blocksToPropose$.dequeue()
        \If {$r>1$}
        \State $v.strongEdges \gets DAG_i[r-1]$
        \EndIf
        \If{$\node{i} = L_{r} \land \not \exists v' \in v.strongEdges \text{ s.t. } v'.source =L_{r-1}$}
        \State $v.nvEdges \gets DAG_i[r]$
        \EndIf
        \State \Call{set\_weak\_edges}{$v, r$}
        \State \Return $v$
        \EndProcedure

        \algstore{bkbreak}
        \end{algorithmic}
    \end{minipage}
    \hfill
   \begin{minipage}{0.48\textwidth}
        \begin{algorithmic}
        \algrestore{bkbreak}

        \vspace{0.4em}
        \Procedure {try\_add\_to\_dag}{$v$}
        \If {$\forall v' \in v.strongEdges \cup v.weakEdges : v' \in \bigcup_{k \geq 1} DAG_i[k]$}
            \State $DAG_i[v.round] \gets DAG_i[v.round] \cup \{v\}$
            \If {$|DAG_i[v.round]| \ge f+1$} \label{line: commit2}
                \State \Call{try\_commit}{$v.round-1, DAG_i[v.round], f+1$}
            \EndIf
            \State $\textit{buffer}\gets \textit{buffer}\ \backslash\ \{v\}$
            \State \Return $\True$
        \EndIf
        \State \Return $\False$
        \EndProcedure
        \vspace{0.4em}
    \Procedure {advance\_round}{$r$}
            \State $round\gets r$; $start\ timer$
            \State broadcast\_vertex($round$)
        \EndProcedure
        
         \vspace{0.4em}
         
        \Upon {init}
        \State advance\_round($1$)
        \EndUpon
        \algstore{bkbreak}
		\end{algorithmic}
        \end{minipage}
	\end{boxedminipage}
	\caption{\name: DAG construction protocol for party $\node{i}$}
	\label{fig:dag_construction}
    \vspace{-4mm}
\end{figure*}

\ignore{
\begin{breakablealgorithm}
    \caption{DAG construction protocol for party $P_i$.}
    \begin{algorithmic}[1]
        \algrestore{bkbreak}
        \State \textbf{Local variables:} $round \gets 1$; $\textit{buffer} \gets \{\}$
        \Upon {${r\_deliver}_{i}(v, r, p)$}
        \If {$v.source = p \land v.round = r \land |v.strongEdges| \geq n-f$}
        \color{blue}
        \If {$\not\exists v' \in v.strongEdges$ s.t. $v'.source = L_{r-1}$}
        \State \textbf{wait until} $\langle \Timeout, r-1,P_k \rangle$ for $n-f$ different $P_k$ are delivered
        \If {$v.source = L_r$}
        \State \textbf{wait until} $\langle \NoVote, r-1,P_k \rangle \text{ for } n-f \text{ different } P_k \text{ are delivered}$
        \EndIf
        \EndIf
        \If {$is\_valid(v)$}
        \color{black}
        
            \If {$\neg try\_add\_to\_dag(v)$}
                \State $\textit{buffer} \gets \textit{buffer} \cup \{v\}$
            \Else
                \For {$v' \in \textit{buffer} : v'.round \leq r$}
                    \State $try\_add\_to\_dag(v')$
                \EndFor
            \EndIf
        \EndIf
        \EndIf
        \EndUpon

        \Upon {$|DAG_i[r]| \geq n-f \land (\exists v' \in DAG_i[r] : v'.source = L_r\ \lor$ 
                \textcolor{red}{\sout{$\mathcal{TC}_r$ is received}}
                \textcolor{blue}{$\langle \Timeout, r,P_k \rangle$ for $n-f$ different $P_k$ are delivered}) for $r \geq round$} \label{line: advance round}
            \State $advance\_round(r+1)$
        \EndUpon
        
        \Upon {$timeout$}
        \State \textcolor{red}{\sout{multicast $\langle \Timeout, round \rangle_i$}}
        \color{blue}
        \If {$\not \exists v' \in DAG_i[round] : v'.source = L_{round}$}
        \State $r\_bcast_i\langle\Timeout, round, P_i\rangle$
        \EndIf
        \EndUpon
        \color{black}
        
        \Upon {\textcolor{red}{\sout{receiving $f + 1\ \langle\ timeout, r \rangle_*$ such that $r \geq round$}}}
        \State \textcolor{red}{\sout{multicast $\langle timeout, r \rangle_i$}}
        \EndUpon
        
        \color{blue}
        \Upon {$\langle \Timeout, r,P_k \rangle \text{ for } f+1 \text{ different } P_k \text{ are delivered}$ such that $r\ge round$}
        \State {$r\_bcast_i\langle\Timeout, r, P_i\rangle$}
        \EndUpon
        \color{black}


        \color{blue}
        \Procedure{$is\_valid$}{$v$}
        \If {$\exists v' \in v.strongEdges$ s.t. $v'.source = L_{v.round-1}$}
            \State \Return $true$
        \EndIf
        \If {$v.source \neq L_{v.round}$}
            \If {$\langle \Timeout, r,P_k \rangle \text{ for } n-f \text{ different } P_k \text{ are delivered}$}
                \State \Return $true$
            \EndIf
        \Else \Comment{$v.source = L_{v.round}$}
            \If {$\langle \Timeout, r,P_k \rangle$ for $n-f$ different $P_k$ are delivered 
            $\land\ \langle \NoVote, r,P_j \rangle$ for $n-f$ different $P_j$ are delivered}
                \State \Return $true$
            \EndIf
        \EndIf
        \State \Return $false$
        \EndProcedure
        \color{black}
        
        \Procedure {$create\_new\_vertex$}{$r$}
        \State $v.round \gets r$
        \State $v.source \gets P_i$
        \State $v.block \gets blocksToPropose$.dequeue()
        \State $v.strongEdges \gets DAG_i[r-1]$
        \State $set\_weak\_edges(v, r)$
        \If {\textcolor{red}{\sout{$\not\exists v' \in DAG_i[r - 1] : v'.source = L_{r - 1}$}}}
            \State \textcolor{red}{\sout{$v.tc \gets \mathcal{TC}_{r - 1}$}}
            \If {\textcolor{red}{\sout{$P_i = L_r$}}}
                \State \textcolor{red}{\sout{$v.nvc \gets \mathcal{NVC}_{r - 1}$}}
            \EndIf
        \EndIf
        \State \Return $v$
        \EndProcedure

        \Procedure {$try\_add\_to\_dag$}{$v$}
        \If {$\forall v' \in v.strongEdges \cup v.weakEdges : v' \in \bigcup_{k \geq 1} DAG_i[k]$}
            \State $DAG_i[v.round] \gets DAG_i[v.round] \cup \{v\}$
            \If {$|DAG_i[v.round]|>n-f$}
                \State $try\_commit(v.round-1,DAG_i[v.round])$
            \EndIf
            \State $\textit{buffer}\gets \textit{buffer}\ \backslash\ \{v\}$
            \State \Return $true$
        \EndIf
        \State \Return $false$
        \EndProcedure
        
    \Procedure {$advance\_round$}{$r$}
            \If {$\not\exists v' \in DAG_i[r - 1] : v'.source = L_{r - 1}$} \label{line: no-vote condition}
                \State \textcolor{red}{\sout{send $\langle \NoVote, r-1\rangle_i$ to $L_r$}}
                \State \textcolor{blue}{$r\_bcast_i\langle \NoVote, r-1,P_i \rangle$}
            \EndIf
            \If {$P_i = L_r$}
                \State \textbf{wait until} $\exists v'\in DAG_i[r-1]:v'.source=L_{r-1}$ or 
                \textcolor{red}{\sout{$\mathcal{NVC}_{r-1}$ is received}}
                \textcolor{blue}{$\langle \NoVote, r,P_k \rangle$ for $n-f$ different $P_k$ are delivered} \label{algo: no-vote}
            \EndIf
            \State $round\gets r$; $start\ timer$
            \State $broadcast\_vertex(round)$
        \EndProcedure
         \algstore{bkbreak}
    \end{algorithmic}
\end{breakablealgorithm}
}

\paragraph{DAG construction protocol.} The DAG construction protocol is outlined in~\Cref{fig:dag_construction}. A vertex proposed by $L_r$ is termed the round $r$ leader vertex, while all other round $r$ vertices are classified as non-leader vertices. To propose a vertex in round $r$, each party $\node{i}$ waits to receive at least $2f+1$ round $r-1$ vertices, including the round $r-1$ leader vertex, until a timeout occurs in round $r-1$. If $\node{i}$ successfully receives $2f+1$ round $r-1$ vertices along with the round $r-1$ leader vertex, it can immediately advance to round $r$ and propose a round $r$ vertex (see Line~\ref{line: advance round}). Notably, referencing the round $r-1$ leader vertex functions as a ``vote'' for it. These votes are subsequently used to commit the leader vertex. Therefore, waiting for the leader vertex until a timeout ensures that honest parties contribute votes toward it, helping to commit the leader vertex with minimal latency when the leader is honest (after GST).

If an honest party $\node{i}$ does not receive the round $r-1$ leader vertex when its timer expires, it multicasts $\tuple{\Timeout, r-1}$ to all parties (see Line~\ref{line: timeout}). Additionally, $\node{i}$ sends $\tuple{\Timeout, r-1}$ upon receiving $f+1$ $\tuple{\Timeout, r-1}$ messages, provided it has not already done so (see Line~\ref{line: f+1 timeout}). This process follows the Bracha-style amplification technique, ensuring that when one honest party receives $2f+1$ $\tuple{\Timeout, r-1}$ messages, all honest parties eventually receive them. A non-leader party $\node{i}$ advances to round $r$ once it has received $2f+1$ round $r-1$ vertices and $2f+1$ $\tuple{\Timeout, r-1}$ messages.

When the leader $L_r$ does not receive the round $r-1$ leader vertex, it must gather at least $2f$ round $r$ non-leader vertices that have no path to the round $r-1$ leader vertex before advancing to round $r$. These $2f$ round $r$ vertices, which do not reference the round $r-1$ leader vertex, are recorded in the $v.nvEdges$ field. Along with the round $r$ leader vertex (which also lacks a path to the round $r-1$ leader vertex), they serve as evidence that a sufficient number of honest parties did not ``vote'' for the round $r-1$ leader vertex. Consequently, the round $r-1$ leader vertex cannot be committed. Thus, it is safe for the round $r$ leader vertex to not reference the round $r-1$ leader vertex. This imposes a constraint on the round $r$ leader vertex—either it must have a strong path to the round $r-1$ leader vertex or it must reference a set of $2f$ round $r$ non-leader vertices that did not vote for the round $r-1$ leader vertex.

Upon delivering a round $r$ vertex $v$ (see Line~\ref{line: deliver vertex}), party $\node{i}$ verifies its validity. Specifically, $v$ must include at least $2f+1$ strong edges, and if $v$ is a round $r$ leader vertex, it must either have a strong path to the round $r-1$ leader vertex or reference at least $2f$ round $r$ non-leader vertices that lack a strong path to the round $r-1$ leader vertex. Consequently, $\node{i}$ waits until all vertices in $v.nvEdges$ are delivered and satisfy this condition. If $v$ is deemed valid, it is added to $DAG_i[r]$ via try\_add\_to\_dag($v$), which succeeds once $\node{i}$ has delivered all vertices that establish a path from $v$ in $DAG_i$. If try\_add\_to\_dag($v$) fails, the vertex is placed in a \textit{buffer} for later reattempts. Additionally, when try\_add\_to\_dag($v$) succeeds, any buffered vertices are retried for inclusion in $DAG_i$.

\paragraph{Jumping rounds.} In addition to advancing rounds sequentially, our protocol allows honest parties in round $r' < r$ to ``jump'' directly to round $r$. This occurs when they either observe at least $2f+1$ round $r-1$ vertices, including the round $r-1$ leader vertex, or receive $2f+1$ $\tuple{\Timeout, r}$ messages. If $L_r$ is lagging, it must also wait to receive either the round $r-1$ leader vertex or $2f$ round $r$ vertices that do not reference the round $r-1$ leader vertex before proposing the round $r$ leader vertex. When jumping from round $r'$ to round $r$, parties do not propose vertices for the skipped rounds.

\begin{figure}[t]
 \footnotesize
	\begin{boxedminipage}[t]{\columnwidth}
		\textbf{Local variables:}
		\setlist{nolistsep}
		\begin{itemize}[noitemsep]
			\item[] $committedRound \gets 0$
		\end{itemize}
		\begin{algorithmic}
            \algrestore{bkbreak}

        \Event{receiving a set $\BigS$ of $\ge 2f+1$ first messages for round $r+1$ vertices} \label{line: commit1}
        \State \Call{try\_commit}{$r, \BigS, 2f+1$}
        \EndEvent
    
        \vspace{0.4em}
        \Procedure{try\_commit}{$r, \BigS, threshold$}
            \State $v \gets$ \Call{get\_leader\_vertex}{$r$}
            \State $votes \gets \{v' \in \BigS \mid$ \Call{strong\_path}{$v', v$} $\}$
            \If{$|votes| \ge threshold \land committedRound < r$}
                \State \Call{commit\_leader}{$v$}
            \EndIf
        \EndProcedure

			\vspace{0.4em}
			
			\Procedure{commit\_leader}{$v$}
			\State $leaderStack.$\Call{push}{$v$}
			\State $r \gets v.round - 1$
			\State $v' \gets v$
			\While{$r > committedRound$}
			\State $v_s \gets$ \Call{get\_leader\_vertex}{$r$}
			\If{\Call{strong\_path}{$v', v_s$}}
			\State $leaderStack.$\Call{push}{$v_s$} \label{step:indirect-commit}
			\State $v' \gets v_s$
			\EndIf
			\State $r \gets r -1$
			\EndWhile
			\State $committedRound \gets v.round$
			\State \Call{order\_vertices}{$ $}
			\EndProcedure
		\end{algorithmic}

	\end{boxedminipage}
	\caption{\name: The commit rule for party $\node{i}$}
	\label{fig:commit-rule}
\end{figure}

\paragraph{Committing and ordering the DAG.}
In our protocol, only leader vertices are committed, while non-leader vertices are ordered in a deterministic manner as part of the causal history of a leader vertex. This ordering occurs when the leader vertex is (directly or indirectly) committed, as defined in the order\_vertices function.

Our protocol consists of two distinct commit rules for directly committing the leader vertex (as presented in~\Cref{fig:commit-rule}).

\begin{rrule} \label{commit rule: messages}
    An honest party commits a round $r$ leader vertex when it observes $2f+1$ first messages of RBC for the round $r+1$ vertices that have strong paths to the round $r$ leader vertex. 
\end{rrule}
\begin{rrule} \label{commit rule: vertices}
    An honest party commits a round $r$ leader vertex when it delivers $f+1$ round $r+1$ vertices that have strong paths to the round $r$ leader vertex. (see Line~\ref{line: commit2}).
\end{rrule}

Under~\Cref{commit rule: messages}, $\node{i}$ does not need to wait for the delivery of round $r+1$ vertices. This is because, when the sender of the RBC is honest, the first observed value (i.e., the first message of the RBC) is the same value that will ultimately be delivered. Among the $2f+1$ round $r+1$ vertices, at least $f+1$ originate from honest parties and will eventually be delivered, ensuring that the delivered value matches the first received value (from the first RBC message). Furthermore, since at least $f+1$ honest parties send round $r+1$ vertices with a strong path to the round $r$ leader vertex, it is impossible to form a set of $2f+1$ round $r+1$ vertices that entirely lack a strong path to the round $r$ leader vertex. Consequently, the round $r' > r$ leader vertex (if it exists) will necessarily have a strong path to the round $r$ leader vertex, ensuring the safety of the commit. This rule aligns with the approach used in Sailfish~\cite{shrestha2024sailfish}.

~\Cref{commit rule: vertices} is simplification of~\Cref{commit rule: messages} where $\node{i}$ waits to deliver $f+1$ round $r+1$ vertices that have strong paths to the round $r$ leader vertex. Note that having $f+1$ round $r+1$ vertices that have strong paths to the round $r$ leader vertex implies there cannot exists a set of $2f+1$ round $r+1$ vertices that lack a strong path to the round $r$ leader vertex. Consequently, all leader vertex for round $r' > r$ will have a strong path to the round $r$ leader vertex, ensuring the safety of the commit. 

When directly committing a round $r$ vertex $v_k$ in round $r$, $\node{i}$ traces back through earlier rounds to \textit{indirectly} commit leader vertices $v_m$ if a strong path exists from $v_k$ to $v_m$. This process continues until $P_i$ encounters a round $r' < r$ where it has already directly committed a leader vertex. Our protocol ensures that once any honest party directly commits a round $r$ leader vertex $v_k$, all leader vertices of subsequent rounds $r' > r$ will have a strong path to $v_k$. This guarantees that $v_k$ will eventually be (directly or indirectly) committed by all honest parties.

\ignore{
\nibesh{need to explain that we also commit the non-leader vertices of round $r$ when round $r$ leader vertex is committed when round $r$ leader vertex commits}
\qianyu{Updated.}
Each time a leader vertex is committed directly, 
its previous round leader vertices may also be committed indirectly. 
Along with these leader vertices, the non-leader vertices associated with them 
are committed as part of their respective sub-DAGs. 
Both non-leader and leader vertices are committed in the order of their rounds, 
ensuring the total order property of a\_deliver.
}


\paragraph{Commit latencies.} The commit latency of a leader vertex under~\Cref{commit rule: messages} includes the time to propagate round $r$ vertices via RBC, followed by one additional step to receive the first messages for $2f+1$ round $r+1$ vertices—i.e., one RBC plus $1\delta$. Similarly, committing via the second rule incurs a latency of two RBCs. The non-leader vertices require an additional RBC to be committed. Although Commit Rule 2 involves more steps, it requires less messages. In practice, the effectiveness and speed of each rule depend on the actual network configuration.

\paragraph{Remark on timeout parameter $\tau$.}
The timeout parameter $\tau$ must be set sufficiently long to guarantee that, upon entering round $r$, an honest party has enough time to deliver both the round $r$ leader vertex—if broadcast by an honest leader—and at least $2f + 1$ round $r$ vertices before its timer expires. The precise value of $\tau$ depends on the underlying RBC primitive used to disseminate vertices, as well as the specific conditions under which an honest party $\node{i}$ transitions to round $r$. When using Bracha’s RBC~\cite{bracha1987asynchronous}, our protocol requires $\tau = 5\Delta$ if $\node{i}$ enters round $r$ after delivering the leader vertex from round $r - 1$, and $\tau = 8\Delta$ otherwise. We prove in\full{~\Cref{claim: GST strong path}}{ full version~\cite{shrestha2025optimistic}} that these timeout bounds are sufficient.

\paragraph{Handling crashed/offline parties.} Our protocol relies on reliable message delivery to complete RBC instances and build the DAG. While the theoretical model assumes that parties are either Byzantine or fully honest and that honest parties remain perpetually online, in practice, honest nodes may crash or go offline temporarily and later rejoin. During their downtime, messages sent to these nodes may be lost, preventing them from building their local DAG upon rejoining. To address this, a recovery mechanism is necessary to help such nodes catch up and synchronize with the latest system state.

As a simple strategy, rejoining nodes can request missing vertices from other parties when they RBC-deliver vertices in new rounds. To ensure correctness, they only include a vertex in their DAG if it is received identically from at least $f+1$ parties. Since there are at least $2f+1$ honest and online parties that maintain the full DAG state, retrieving missing vertices is straightforward. To further reduce communication overhead, honest parties can respond by sending erasure-coded chunks of the requested vertex. The requesting node can then reconstruct the missing vertex by collecting $f+1$ valid chunks that match a common Merkle root.

\ignore{
For simplicity, we use Bracha’s RBC~\cite{bracha1987asynchronous} to determine the timeout parameter. Bracha’s RBC consists of three communication steps and ensures that a value is delivered within $3\Delta$ time (see Property~\ref{property: 3Delta}) after GST.

When an honest party $\node{i}$ advances to round $r$ after delivering the round $r-1$ leader vertex, it sets $\tau$ to $5\Delta$. This ensures that all honest parties enter round $r$ within the next $2\Delta$ time (by~\Cref{property: 2Delta}) and receive $2f+1$ round $r$ vertices along with the round $r$ leader vertex within $5\Delta$ time.

On the other hand, if $\node{i}$ enters round $r$ without delivering the round $r-1$ leader vertex, it must wait long enough for all honest non-leaders to advance to round $r$, send their round $r$ vertices, and for $L_r$ to reference these vertices before proposing its own vertex. In total, $\node{i}$ needs to wait for $8\Delta$ time in this case.
}



\ignore{

\nibesh{double check the following}
\nibesh{timeout is responsive. So, it would only take $2\delta$ time}
\qianyu{Updated. I compute the latency after GST, which differs from that in Sailfish.}
\paragraph{Latency analysis under failures.}
We analyze the commit latency the non-leader vertices of round $r-1$ 
when $L_r$ is Byzantine and both $L_{r-1}$ and $L_{r+1}$ are honest after GST.
Let $t$ be the time when the first honest party enters round $r$.
Since $\tau$ is set to $5\Delta$ in the round following an honest leader, 
all honest parties receive $2f+1$ $\tuple{\Timeout, r-1}$ by time $t+5\Delta+3\delta$. 
After that, $L_{r+1}$ receives $2f+1$ round $r$ vertices by time $t+5\Delta+6\delta$. 
As $L_{r+1}$ is honest, its leader vertex can be committed within the next $4\delta$ time. 
Consequently, the non-leader vertices of round $r-1$ are committed in $5\Delta+13\delta$ time, 
compared to just $7\delta$ when $L_r$ is honest.
}




We present detailed security analysis in\full{~\Cref{sec:dag_bft_proofs}}{ full version~\cite{shrestha2025optimistic}}.

\ignore{

An honest party $P_i$ directly commits a round $r$ leader vertex $v_k$ upon observing $2f +1$ 
``first messages'' from the RBC for round $r+1$ vertices that have strong paths to the 
round $r$ leader vertex. 
In other words, $P_i$ does not need to wait for the RBC of round $r+1$ vertices to complete before committing.
This is because when the sender of the RBC is honest, 
the first observed value (i.e., the first message of the RBC)
is guaranteed to be the value that will eventually be delivered. \nibesh{what does this mean?}\qianyu{I added two sentences.}
Among the $2f+1$ round $r+1$ vertices, at least $f+1$ vertices are sent by honest parties 
which will eventually be delivered, ensuring that the delivered value 
is equal to the first message of RBC. 
This suffices to prevent the existence of $2f+1$ vertices that lack strong paths 
to the round $r$ leader vertex. 
Consequently, any leader vertex in a later round $r'>r$, if it exists, 
will necessarily have strong paths to the round $r$ leader vertex, 
thereby ensuring the safety of the commit rule.

\qianyu{Updated.}
Commit Rule~\ref{commit rule: vertices} differs from Sailfish~\cite{shrestha2024sailfish}, 
which requires delivering $2f+1$ round $r+1$ vertices with strong paths to round $r$ leader vertex $v_k$ 
in order to commit $v_k$.
 \nibesh{this rule may be triggered even when the leader is honest and the netowrk is synchronous}
This optimization is feasible because we do not send no-vote messages explicitly 
and it is impossible to form $2f+1$ round $r+1$ vertices 
that lack strong edges to $v_k$. \nibesh{what does this mean?}

\nibesh{need to say why two commit rules are required.}
Both commit rules ensure that at least $f+1$ round $r+1$ vertices with strong paths to $v_k$ 
are delivered by all parties. 
Commit Rule~\ref{commit rule: messages} guarantees this through the agreement and validity of RBC, 
while Commit Rule~\ref{commit rule: vertices} ensures that the vertices delivered to 
$P_i$ can be delivered to all parties due to RBC agreement. 
Thus, the two rules are equivalent. 
However, Commit Rule~\ref{commit rule: messages} has lower latency 
and will be triggered first in a synchronous network. 
Commit Rule~\ref{commit rule: vertices} is triggered when $2f+1$ first messages of RBC is not received 
or then the network is asynchronous.

When directly committing round $r$ vertex $v_k$ in round $r$, 
$\node{i}$ traces back to earlier rounds to \textit{indirectly} commit leader vertices $v_m$ 
if there exists a strong path from $v_k$ to $v_m$.
This process continues until $P_i$ reaches a round $r'<r$ 
where it has previously directly committed a leader vertex.
In this protocol, we ensure that once a round $r$ leader vertex $v_k$ is directly committed 
by any honest party, the leader vertices of all subsequent rounds $r' > r$ 
will have a strong path to $v_k$. 
This guarantees that $v_k$ will eventually be (directly or indirectly) committed by all honest parties.

Several RBC primitives~\cite{abraham2022good, cachin2001secure, das2021asynchronous, bracha1987asynchronous} have been proposed in the literature, 
each offering different trade-offs in terms of resilience, number of rounds, 
and communication complexity. 
In Section~\ref{sec:opt_rbc}, we introduce a new RBC primitive that reduces the number of steps. For the most widely recognized RBC primitive, we refer to the RBC protocol by Bracha~\cite{bracha1987asynchronous} 
to compute the timeout parameter, 
which has 3 communication steps and delivers a value within $3\Delta$ time (see Property~\ref{property: 3Delta}).
}

\section{Evaluation}
\label{sec:evaluation}


We evaluate the performance of \name when instantiated with optimistic RBC, comparing it against Bullshark~\cite{spiegelman2022bullshark} (instantiated with Bracha's RBC) and signature-based Sailfish (instantiated with the 2-step RBC protocol of Abraham et al.~\cite{abraham2021good}). We could not compare with Shoal~\cite{spiegelman2023shoal} and Shoal++~\cite{arunShoalHighThroughput2025a} as their source code was unavailable.

\begin{figure*}[ht]
\begin{subfigure}[]{0.32\textwidth}
\includegraphics[scale=0.22]{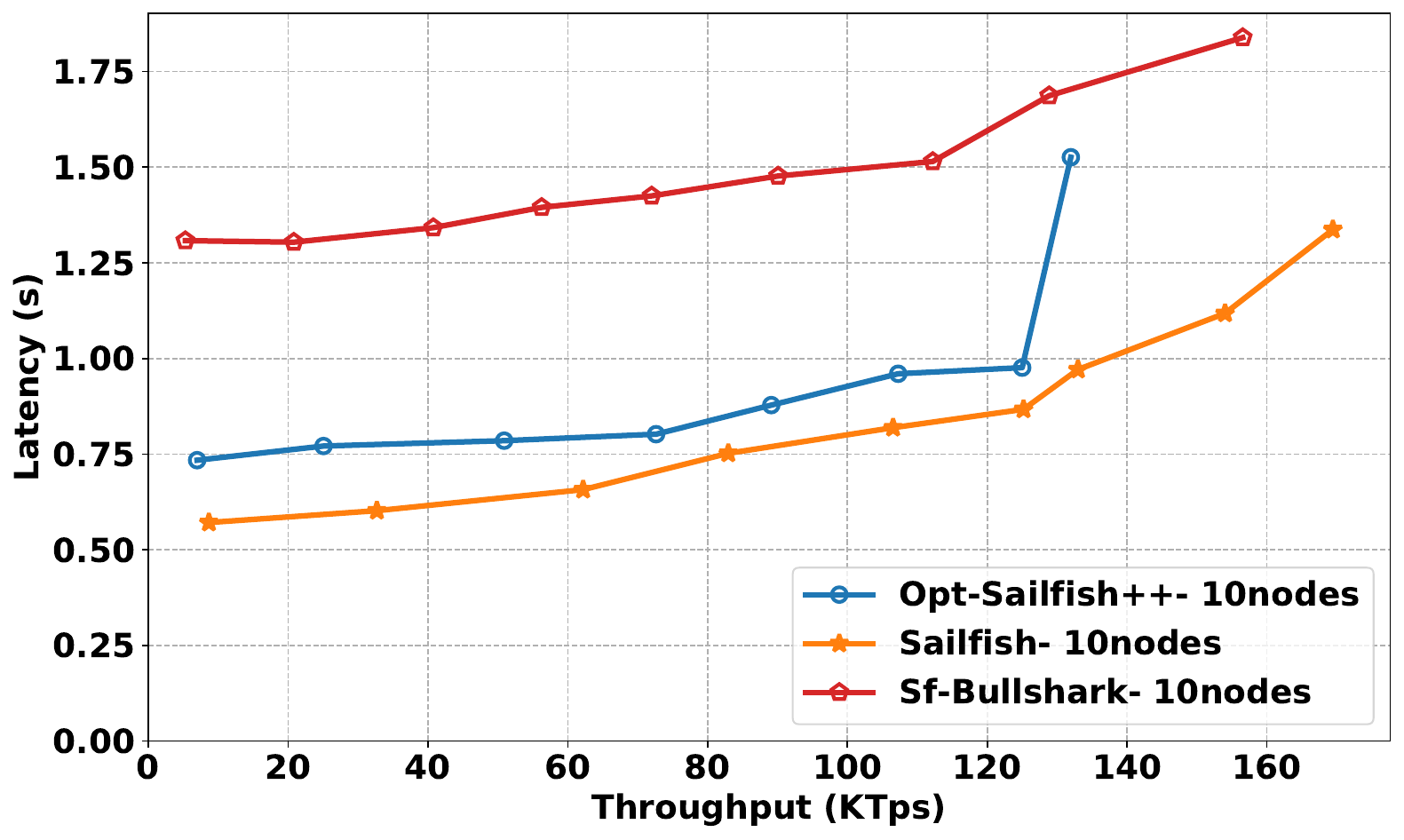}

\end{subfigure}
\hfill
\begin{subfigure}[]{0.32\textwidth}
\includegraphics[scale=0.22]{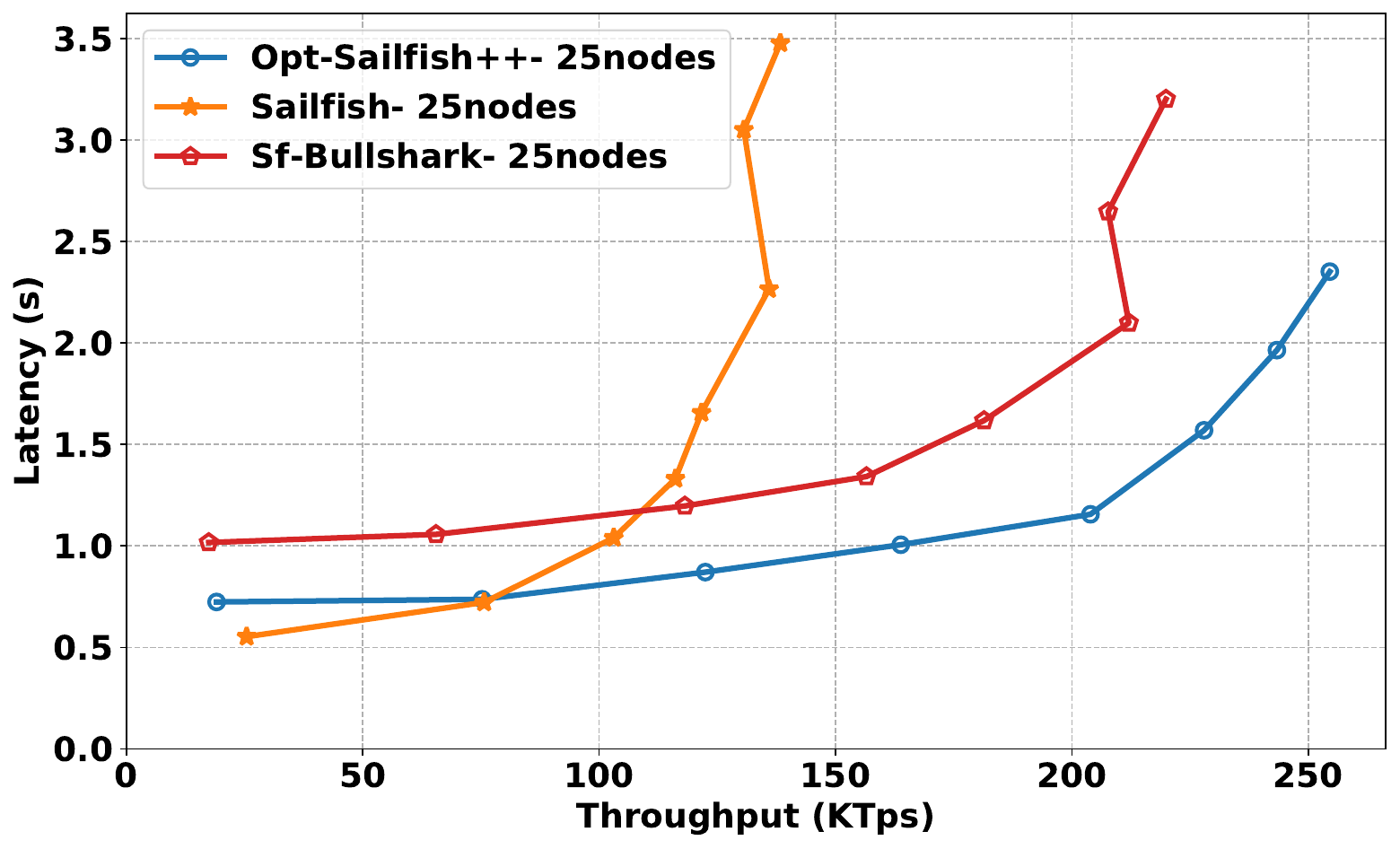}

\end{subfigure}
\hfill
\begin{subfigure}[]{0.32\textwidth}
\includegraphics[scale=0.22]{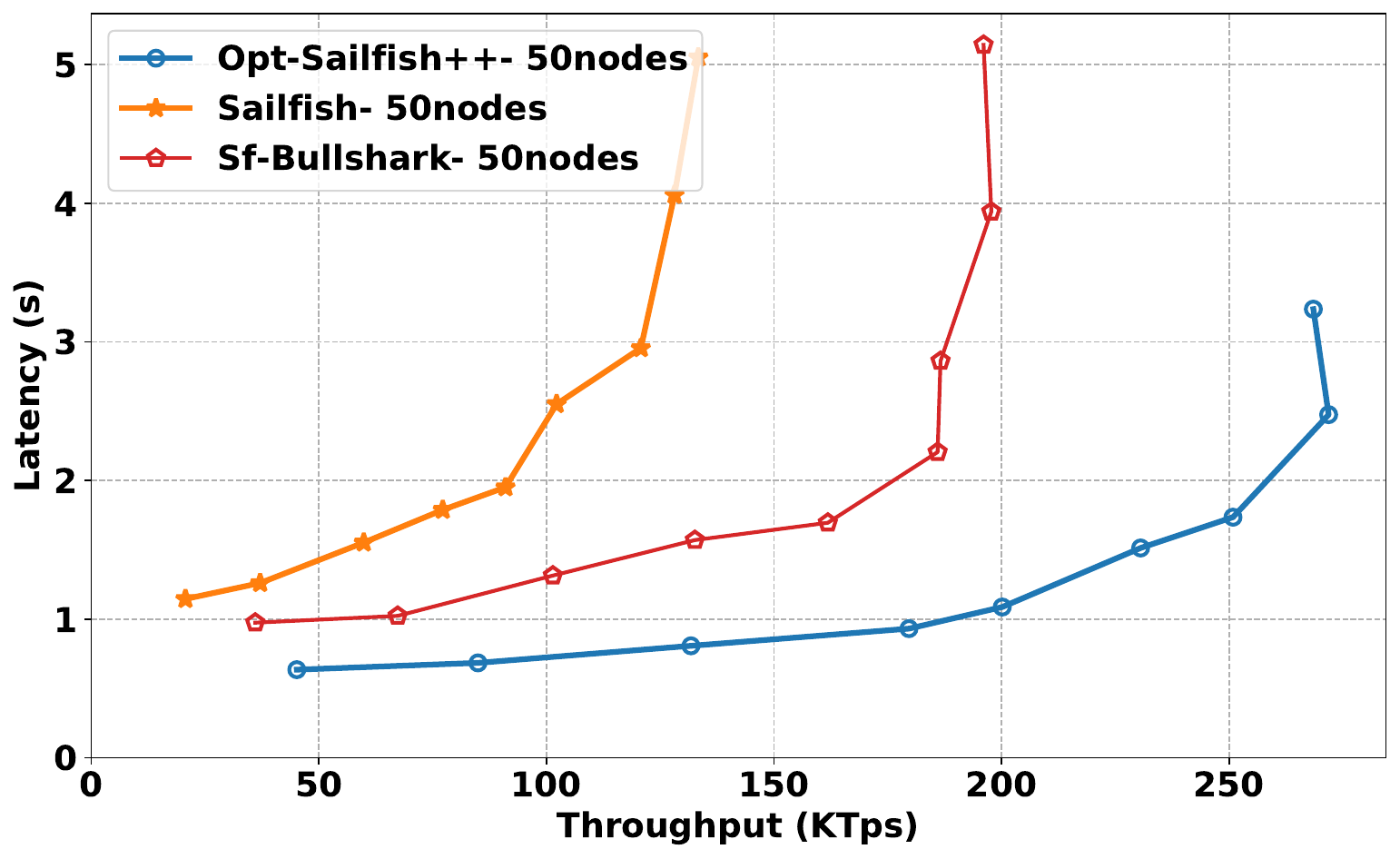}
\end{subfigure}
\caption{Throughput vs. latency at various system sizes and varying input}
\label{fig:thr-lat}
\end{figure*}

\mypara{Implementation details.} Our implementation modifies the RBC primitive from the open-source Sailfish~\cite{sailfish-impl} to create \name. We use Opt-Sailfish++~\cite{opt_sailfish-impl} to represent Sailfish++ using optimistic RBC. Additionally, we modify the RBC primitive from the open-source Bullshark~\cite{bullshark-impl} to Bracha's RBC, resulting in its signature-free variant~\cite{Sf-Bullshark-impl} that we refer to as Sf-Bullshark in our evaluation. For vanilla Sailfish, BLS multi-signatures are used for signing messages. 



\begin{table}[t]
\centering
\caption{Ping latencies (in ms) between GCP regions}
\label{tab:ping-latencies}
\begin{threeparttable}
\begin{tabular}{l|c r r r r}
\toprule
\multicolumn{1}{c|}{} & \multicolumn{5}{c}{\textbf{Destination}*} \\
\midrule
\textbf{Source} & us-e1 & us-w1 & eu-w1 & eu-n1 & as-n1 \\
\midrule
us-east1-b              & 0.62 & 66.30 & 91.38 & 114.63 & 164.05 \\
us-west1-a              & 66.42 & 0.57 & 135.27 & 158.52 & 90.84 \\
europe-west1-b          & 91.45 & 135.27 & 0.63 & 34.45 & 225.50 \\
europe-north1-b         & 114.97 & 158.55 & 34.50 & 0.74 & 246.24 \\
asia-northeast1-a       & 164.00 & 90.80 & 226.04 & 246.36 & 0.55\\
\bottomrule
\end{tabular}
\begin{tablenotes}
\small
\item[*] Region names are abbreviated versions of the source regions.
\end{tablenotes}
\end{threeparttable}
\end{table}

\mypara{Experimental setup.} We conducted our evaluations on the Google Cloud Platform (GCP), deploying nodes evenly across five distinct regions: us-east1-b (South California), us-west1-a (Oregon), europe-west1-b (Belgium), europe-north1-b (Finland) and asia-northeast1-a (Japan). We employed e2-standard-4 instances~\cite{gcp}, each featuring 4vCPUs, 16GB of memory, and up to 8Gbps network bandwidth. All nodes ran on Ubuntu 20.04.

In our evaluations, each party generates a configurable number of transactions ($512$ random bytes each) for inclusion in its vertex, with a vertex containing up to $10,000$ transactions (i.e., $5$ MB). Each experiment runs for 180 seconds, and the data shown in the figures represents the average of three independent runs. The results are observed to be stable across these runs. Latency is measured as the average time between the creation of a transaction and its commit by all non-faulty nodes. Throughput is measured by the number of committed transactions per second. We summarize round-trip latencies between GCP regions in~\Cref{tab:ping-latencies}.

\mypara{On the choice of low-CPU machines and operating costs.}
Existing signature-based DAG-based BFT protocols require high-end machines, typically 48–64 vCPUs per node, even for moderate-scale deployments of 50–100 nodes~\cite{babelMysticetiReachingLatency2025,arunShoalHighThroughput2025a}. This makes production deployments cost-prohibitive. For instance, running 100 nodes with 64 vCPUs each costs approximately \$2.4M/year based on current GCP pricing~\cite{gcp-cost}. In contrast, our protocol is designed to run efficiently on low-CPU machines, performing well even with just 4 vCPUs per node and peaking at only 270\% CPU usage (i.e., 2.7 vCPUs) at $n=50$. The annual cost of running 100 such nodes is just \$117K, making our protocol significantly more cost-effective for production. Accordingly, we focus our evaluation on realistic, low-cost configurations.


\mypara{Methodology.} In our evaluations, we gradually increased the number of input transactions per vertex. As shown in~\Cref{fig:thr-lat}, throughput increases with the load, accompanied by a slight increase in latency, up to a certain point before reaching saturation. Beyond this point, latency begins to rise while throughput either stabilizes or increases marginally.

\mypara{Performance comparison under fault-free case.}
We compare the performance of four different protocols: Opt-Sailfish++, Sailfish, and Sf-Bullshark all under fault-free scenarios across system sizes of 10, 25 and 50 nodes. \Cref{fig:thr-lat} presents the latency achieved by these protocols as throughput varies.

    
    
    


Our implementation of Sailfish adopts an optimistic signature verification strategy: nodes aggregate signatures without verifying each one individually and verify only the final aggregated signature. We retain this optimization, as it aligns with our broader goal of designing optimistic protocols. Without it, Sailfish incurs around 11 seconds of latency at 3,000 TPS with 50 nodes, due to limited per-node CPU capacity. In small deployments (e.g., $n=10$), Sailfish outperforms Sailfish++ and Sf-Bullshark as it uses a 2-round RBC and the signature verification cost is negligible. In contrast, Opt-Sailfish++ waits for more messages, and Sf-Bullshark suffers from high latency for both leaders and non-leaders.

Even in a moderate-sized system with $n=25$, Sailfish begins to underperform relative to both Sailfish++ and Sf-Bullshark as the load increases. At $n=50$, the cost of signature verification dominates the cost of message delays, allowing Sailfish++ to outperform both Sailfish and Sf-Bullshark. Sf-Bullshark exhibits higher latency than both Opt-Sailfish++ in all configurations, due to its higher commit latency for both leaders and non-leaders.

\section{Related Work}
\label{sec:related_work}
A substantial body of work has aimed to improve various aspects of reliable broadcast (RBC). In the signature-free setting, Bracha’s classical RBC protocol~\cite{bracha1987asynchronous} tolerates an optimal $f<n/3$ Byzantine faults and achieves a good-case latency of 3 steps which is optimal~\cite{abraham2022good}. Subsequent works~\cite{abraham2022good,imbs2016trading} reduced the good-case latency to 2 steps by sacrificing resilience, requiring at least $n \ge 4f$. In contrast, our optimistic reliable broadcast protocols tolerate $f < n/3$ Byzantine faults and commit in 2 steps under optimistic conditions, and in 4 steps otherwise.

We note a trivial optimistic RBC variant that commits in 2 steps when all $n$ parties send matching $\Echo$ messages. However, this approach requires all $n$ parties to be responsive, making it fragile i.e., one crash prevents optimistic commit. In contrast, our optimistic RBC tolerates up to $\floor{\frac{n - 2f}{2}}$ faults and requires responses from only $\ceil{\frac{n + 2f - 2}{2}}$ non-broadcaster parties. Moreover, the trivial approach depends on the slowest party, which incurs high latency in geo-distributed environments with variable delays. Our protocol avoids this bottleneck by relying on the fastest subset of parties. While our approach introduces an additional $\Vote$ message, this overhead remains minimal: the payload (proposed value or codewords) can be sent in either the $\Echo$ or $\Vote$ message, with the other carrying only the hash.

\mypara{Communication-efficient balanced RBC for long messages.}
The balanced RBC protocol of Cachin et al.~\cite{cachin2005asynchronous} achieves a good-case latency of 3 steps with $O(nL + \kappa n^2 \log n)$ communication for $L$-bit messages. The state-of-the-art in terms of communication is the work of Alhaddad et al.~\cite{alhaddad2022balanced}, which improves the communication to $O(nL + \kappa n^2)$ but increases the good-case latency to 5 steps. We refer readers to~\cite{alhaddad2022balanced} for a detailed comparison of balanced RBC protocols. Compared to these, our protocol matches the asymptotic complexity of Cachin et al.~\cite{cachin2005asynchronous}, while enabling 2-step optimistic termination.

\ignore{
\mypara{Asynchronous verifiable secret sharing.} The problem of AVSS/ACSS has been extensively studied over the past decades under various setup and cryptographic assumptions~\cite{alhaddad2021high, das2021asynchronous, backes2013asynchronous, cachin2002asynchronous}. The foundational work of Cachin et al.~\cite{cachin2002asynchronous} introduced the notion of computational AVSS and proposed AVSS protocol with $O(\kappa n^3)$ communication and 3-step good-case latency. Subsequent works~\cite{backes2013asynchronous, alhaddad2021high} reduced the communication complexity to $O(\kappa n^2)$ by relying on a trusted setup, while still requiring 3 steps in the good case. More recently, Das et al.~\cite{das2021asynchronous} achieved $O(\kappa n^2)$ communication without any trusted setup.

Our optimistic RBC protocol can be used as a black-box in conjunction with the AVSS protocol of Das et al.~\cite{das2021asynchronous} to achieve 2-step optimistic latency and $O(\kappa n^2 \log n)$ communication. Similarly, their dual-threshold ACSS protocol—though relying on a PKI—can also benefit from our optimistic RBC to improve its latency under optimistic conditions.

To the best of our knowledge, the state-of-the-art setup-free ACSS protocol is still the protocol of Cachin et al.~\cite{cachin2002asynchronous}, which has $O(\kappa n^3)$ communication and 3-step good-case latency. We improve upon this by achieving $O(\kappa n^2 \log n)$ communication and 2-step optimistic latency.
}



\ignore{
\mypara{Asynchronous verifiable information dispersal.}
We compare existing AVID protocols for an input of $L$ bits. The AVID protocol of Cachin et al.~\cite{cachin2005asynchronous} incurs a client dispersal cost of $O(L + \k n \log n)$, a server-side dispersal cost of $O(nL + \k n^2 \log n)$, and a total storage cost of $O(L + \k n \log n)$, with a good-case latency of 3 steps. DispersedLedger~\cite{yang2022dispersedledger} improves the server-side cost to $O(L + \kappa n^2 \log n)$ by allowing dispersal to complete even with inconsistent inputs, which may result in the client outputting $\bot$ during retrieval—offering slightly weaker availability guarantees. In the same weak availability setting, Alhaddad et al.~\cite{alhaddad2022asynchronous,Alhaddad_2022} propose an AVID protocol with a lower client cost of $O(L + \kappa n)$, server-side cost of $O(L + \kappa n^2)$, and total storage of $O(L + \kappa n)$, but with a 5-step good-case latency. We refer readers to Alhaddad et al.~\cite{alhaddad2022asynchronous,Alhaddad_2022} for a comprehensive comparison of AVID protocols. Compared to these works, our optimistic AVID protocol can match the dispersal and storage efficiency of Cachin et al.~\cite{cachin2005asynchronous} and DispersedLedger~\cite{yang2022dispersedledger}, while offering a 2-step optimistic good-case latency.
}

\mypara{Signature-free DAG-based BFT.} 
To the best of our knowledge, Bullshark's partially synchronous version~\cite{spiegelman2022bullsharkpartially} is the first DAG-based BFT protocol in the partially synchronous and signature-free setting when instantiated with signature-free RBC protocols~\cite{bracha1987asynchronous, cachin2005asynchronous}. Bullshark~\cite{spiegelman2022bullshark} designates leaders every two rounds and commits a leader vertex with a latency of 2 RBCs. However, non-leader vertices that share the same round as the previous leader require an additional 2 RBCs to commit, totaling a latency of 4 RBCs.

Shoal~\cite{spiegelman2023shoal} introduced a ``pseudo-pipelining'' technique to reduce the commit latency of non-leader vertices by running multiple instances of Bullshark sequentially, ensuring a leader vertex in every round. However, its design depends on Bullshark committing a vertex before launching a new instance with a leader in the next round. If Bullshark fails to commit, Shoal requires an additional two RBCs to initiate a new instance, compromising its ability to guarantee a leader vertex in every round. Moreover, Shoal retains the $2$ RBC latency for committing the leader vertex.

Shoal++~\cite{arunShoalHighThroughput2025a} extends Shoal to commit the leader vertex with $1$ RBC + $1\delta$. However, Shoal++ inherits a key limitation from Shoal—it does not guarantee the presence of a leader vertex in every round.

In contrast to these protocols, the \name protocol supports a leader vertex in every round and commits the leader vertex with a latency of one RBC plus $1\delta$, while non-leader vertices require one additional RBC. When paired with our optimistic RBC, \name achieves a 3-step commit latency for leader vertices under optimistic conditions, matching the performance of signature-based DAG-based BFT protocols such as Sailfish~\cite{shrestha2024sailfish}.

We also note several recent DAG-based protocols~\cite{keidar2023cordial,malkhi2023bbca,babelMysticetiReachingLatency2025,polyanskii2025starfish} that rely on best-effort broadcast (BEB) instead of RBC. These protocols require parties to send signed messages at each protocol step, making them inherently reliant on digital signatures and thus unsuitable for signature-free deployment. 

\mypara{Optimistic fast-path in other consensus protocols.}
We also compare several consensus protocols that support an optimistic fast path. In the synchronous network setting, a few BFT SMR protocols—such as Thunderella~\cite{pass2018thunderella} and OptSync~\cite{shrestha2020optimality}—can commit optimistically in 2 steps when at least $\floor{\frac{3n}{4}} + 1$ parties behave honestly. In the asynchronous setting, Kursawe~\cite{kursaweOptimisticByzantineAgreement2002} proposed an optimistic Byzantine agreement protocol tolerating $f < n/3$ Byzantine faults, which can terminate in 2 steps under the optimistic condition that all $n$ parties behave honestly i.e., its fast path does not tolerate any failures. Similarly, in the partially synchronous setting, a few BFT SMR protocols—such as Zyzzyva~\cite{kotlaZyzzyvaSpeculativeByzantine2007}, UpRight~\cite{clementUprightClusterServices2009}, and The Next 700 BFT Protocols~\cite{guerraouiNext700BFT2010a}—support 2-step optimistic commits when all $n$ parties behave honestly.


\ignore{
In partial synchrony, Kuznetsov et al.~\cite{kuznetsovRevisitingOptimalResilience2021} recently study optimal resilience for 2-step BFT consensus. Other works presenting consensus protocols with optimistic fast-paths include a protocol of , Bosco~\cite{songBoscoOnestepByzantine2008}, FaB Paxos~\cite{martinFastByzantineConsensus2006a},  UpRight~\cite{clementUprightClusterServices2009}, SBFT~\cite{guetaSBFTScalableDecentralized2019}, and others.
Vukolic et al.~\cite{guerraouiNext700BFT2010a} propose a framework to add optimistic fast-path protocols to existing BFT consensus protocols.
}

\ignore{

\mypara{Partially synchronous DAG-based BFT.}
The partially synchronous version of Bullshark~\cite{spiegelman2022bullshark} can be made signature-free and post-quantum secure by using a signature-free RBC~\cite{bracha1987asynchronous, das2021asynchronous}. Bullshark designates a leader every two rounds, requiring two RBCs to commit a leader vertex and an additional two RBCs to commit non-leader vertices that share a round with the leader.

These protocols are able to commit the leader vertex with a latency of $3\delta$, with an additional $2$ or more steps required to commit the non-leader vertices. BEB-based protocols face a common issue: when Byzantine parties "selectively" send their proposals only to some honest parties, the protocols incur additional latency to download the missing vertices, which worsens the overall latency even when the leader is honest~\cite{arunShoalHighThroughput2025a}. \nibesh{blocks have to be downloaded on critical path increasing latency.}

\mypara{Asynchronous DAG-based BFT.} DAG-Rider~\cite{keidar2021all} is an asynchronous DAG-based BFT protocol that progresses in waves, each consisting of four rounds. It depends on randomness (i.e., a common coin) for leader election and allows a single leader per wave. When the common coin is perfect—meaning all parties observe the same randomness and agree on the leader—DAG-Rider requires an expected six rounds (i.e., six sequential RBCs) to commit the leader vertex, with an additional four RBCs needed to commit non-leader vertices that share a round with the leader. To the best of our knowledge, existing constructions that provide a perfect common coin rely on signatures that are not post-quantum secure~\cite{boneh2004short}. A recent work, HashRand~\cite{bandarupalli2024random}, introduces a method for generating post-quantum secure randomness (i.e., a common coin) based on hash functions. However, their common coin is weak, meaning parties may not always agree on the randomness with some constant probability, thereby increasing the expected number of rounds required to commit the leader vertex.

Tusk~\cite{danezis2022narwhal} is an implementation derived from DAG-Rider. Similarly, GradedDAG~\cite{dai2024gradeddag} and LightDAG~\cite{dai2024lightdag} enhance the latency of asynchronous DAG-based BFT protocols by utilizing weaker primitives such as consistent broadcast~\cite{srikanth1987simulating} instead of RBC. While these primitives improve latency in fault-free scenarios, they necessitate downloading missing vertices during failures, leading to increased overall latency.

\nibesh{todo: add Mahi Mahi~\cite{jovanovic2024mahi}, Remora}

To cite:
K2l-cast https://arxiv.org/pdf/2204.13388
Reliable agreement
}

\begin{acks}
We thank  Chenxu Wang for helpful discussions and for identifying a subtle termination issue in an earlier version of our optimistic RBC protocol.

Xuechao Wang is supported by the Guangzhou-HKUST(GZ) Joint Funding Program (No. 2024A03J0630 and No. 2025A03J3882), the Guangzhou Municipal Science and Technology Project (No. 2025A04J4168), and a gift from Stellar Development Foundation. 
\end{acks}

\bibliographystyle{plain}
\bibliography{main}

\appendix
\section{Extended Preliminaries}
\label{sec:extended-preliminaries}

\subsection{Primitives}
\mypara{Linear erasure codes.} We use standard ($n, b$) Reed-Solomon (RS) codes~\cite{reed1960polynomial}. This code encodes  $b$ data symbols  into code words of $n$ symbols using {\sf ENC} function and can decode the $b$ elements of code words to recover the original data using {\sf DEC} function as explained below.

\begin{itemize}[leftmargin=*]
\item {\sf ENC}.
Given inputs $m_1, \dots, m_{b}$, an encoding function {\sf ENC} computes $(s_1, \dots, s_n)  = {\sf ENC}(m_1, \dots, m_{b})$, where ($s_1, \dots, s_n$) are code words of length $n$. A combination of any $b$ elements of $n$ code words uniquely determines the input message and the remaining of the code word.

\item {\sf DEC}.
The function {\sf DEC} computes $(m_1, \dots, m_{b}) = {\sf DEC}(s_1, ..., s_n)$, and is capable of tolerating up to $c$ errors and $d$ erasures in code words $(s_1, \dots, s_n)$, if and only if $f \ge 2c + d$.
\end{itemize}

\subsection{Verification functions used in ACSS protocol in~\Cref{fig:acss}}
\label{sec:acss-functions}
We adopt the following verification functions from the ACSS protocol of Cachin et al.~\cite{cachin2002asynchronous} in our optimistic ACSS protocol.

\begin{itemize}[leftmargin=*]
\item $\VerifyPoly(\C, i , a, b)$, where $a$ and $b$ are polynomials of degree $f$, i.e., 
$$a(y) = \sum_{l=0}^{f}a_ly^l, \,\,\,\,\,\, b(x) = \sum_{j=0}^{f}b_jy^j$$
The function is satisfied if and only if $g^{a_{l}} = \prod_{j=0}^{f} (C_{jl})^{i^j}$ for all $l \in [0, f]$ and $g^{b_{l}} = \prod_{l=0}^{f} (C_{jl})^{i^l}$ for all $j \in [0, f]$

\item $\VerifyPoint(\C, i, m, \alpha, \beta)$, verifies that the given values $\alpha$, $\beta$ correspond to points $u(m, i)$, $u(i, m)$, respectively committed to $\C$, which $\node{i}$ receives from $\node{m}$ ; it is true if and only if $g^{\alpha} = \prod_{j, l =0}^{f} (C_{j,l})^{m^{j}i^l}$ and $g^{\beta} = \prod_{j,l=0}^{f} (C_{j,l })^{i^jm^l}$

\item $\VerifyShare(\C, m, \sigma)$ verifies $\sigma$ is the valid share of $\node{m}$ with respect to $\C$; it is true if and only if $g^{\sigma} = \prod_{j=0}^t (C_{j0})^{m^l}$.
\end{itemize}
\section{Security Analysis of Optimistic RBC}
\label{sec:opt-rbc-proof}

\begin{claim}\label{clm:no-conflict-vote}
If an honest party optimistically commits $m$, then no honest party will send $\Vote$ or $\Ack$ for any value $m' \not = m$.
\end{claim}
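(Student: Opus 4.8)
The plan is to show that an optimistic commit of $m$ consumes enough $\sig{\Echo,m}$ messages that the quorums required to emit a $\Vote$ or a $\Ready$ for any conflicting value $m'$ can never be assembled. I would first record the structural invariant that an honest party sends $\Echo$ for at most one value, namely the one carried by the first $\Propose$ it accepts; this is immediate from the Echo step. Since the committing party $\node{i}$ applied the optimistic rule, it received $\sig{\Echo,m}$ from at least $\ceil{\frac{n+2f-2}{2}}$ distinct non-broadcaster parties. The central quantitative step is to argue that at least $\ceil{\frac{n}{2}}$ of these echoers are honest; write $h$ for their number. If the broadcaster is faulty there are at most $f-1$ Byzantine non-broadcaster parties, so $h \ge \ceil{\frac{n+2f-2}{2}} - (f-1) = \ceil{\frac{n}{2}}$. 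If the broadcaster is honest, every honest party echoes only the broadcaster's single input; since $\node{i}$ gathered $\ceil{\frac{n+2f-2}{2}} > f$ echoes for $m$ — more than the at most $f$ that Byzantine parties alone could supply — the committed value $m$ must equal that input, so all $n-1-b \ge n-1-f \ge \ceil{\frac{n}{2}}$ honest non-broadcaster parties echo $m$. Thus $h \ge \ceil{\frac{n}{2}}$ in both cases.

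Next I would bound, uniformly, the number of non-broadcaster parties that can ever send $\sig{\Echo,m'}$ for a fixed $m' \ne m$. Such a party is either Byzantine or an honest party that did not echo $m$; writing $b$ for the number of Byzantine non-broadcaster parties, this count is at most $b + \big((n-1-b) - h\big) = n-1-h \le n-1-\ceil{\frac{n}{2}} = \floor{\frac{n}{2}}-1$, where the dependence on $b$ conveniently cancels. Because the $\Vote$ trigger for $m'$ needs $\ceil{\frac{n}{2}}$ such echoes and the echo-based $\Ready$ trigger needs $\ceil{\frac{n+f-1}{2}} \ge \ceil{\frac{n}{2}}$ of them, and $\floor{\frac{n}{2}}-1 < \ceil{\frac{n}{2}}$, neither trigger can ever fire for $m'$. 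In particular no honest party sends $\sig{\Vote,m'}$, and no honest party sends $\sig{\Ready,m'}$ via the echo rule.

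Finally I would dispatch the two remaining ways to emit $\sig{\Ready,m'}$. The vote-based $\Ready$ rule needs $\ceil{\frac{n+f-1}{2}}$ votes for $m'$; since no honest party votes for $m'$, at most $f$ (all Byzantine) exist, and $f < 2f \le \ceil{\frac{n+f-1}{2}}$ under $n \ge 3f+1$, so this rule never fires. For the amplification rule I would consider the earliest honest party (in time) to send $\sig{\Ready,m'}$: before that instant at most $f$ Byzantine $\sig{\Ready,m'}$ messages exist, which is short of the $f+1$ needed, so that party must have used the echo- or vote-based rule — both already excluded — a contradiction, so no honest party ever sends $\sig{\Ready,m'}$. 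The main obstacle is the counting underlying $h \ge \ceil{\frac{n}{2}}$: the final inequality is tight (the conflicting-echo count lands exactly one below the threshold), so the proof must separate the honest- and faulty-broadcaster cases, since that is precisely what pins down the number of Byzantine non-broadcaster parties and makes the cancellation fall on the strict side.
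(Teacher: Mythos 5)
Your proof is correct and follows essentially the same approach as the paper's: both hinge on the count $\ceil{\frac{n+2f-2}{2}} - (f-1) = \ceil{\frac{n}{2}}$ of honest parties that echoed $m$, and then show that neither the $\Vote$ threshold nor any of the $\Ready$ triggers can be met for $m' \neq m$. The only differences are presentational: where the paper argues by quorum intersection (two echo quorums would force some honest party to echo two different values), you count the complement (at most $n-1-h \le \floor{\frac{n}{2}}-1$ parties can ever echo $m'$), and you explicitly close the $f{+}1$-$\Ready$ amplification case with a first-honest-sender argument, a step the paper's proof leaves implicit.
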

\begin{proof}
For an honest party $\node{j}$ to send $\Vote$ on some another value $m'$, it must receive $\sig{\Echo, m'}$ from at least $\ceil{\frac{n}{2}}$ non-broadcaster parties which, since $\ceil{\frac{n}{2}}>f$, is impossible when the broadcaster is honest. Thus, we consider the case where the broadcaster is Byzantine. In this case, there are at most $(f-1)$ Byzantine non-broadcaster parties.

Assume that an honest party $\node{i}$ optimistically commits to the value $m$. This indicates that $\node{i}$ must have received $\sig{\Echo, m}$ from at least $\ceil{\frac{n+2f-2}{2}}$ distinct non-broadcaster parties. Therefore, at least $\ceil{\frac{n+2f-2}{2}} - (f-1) = \ceil{\frac{n}{2}}$ honest parties must have sent $\sig{\Echo, m}$. Now, for the sake of contradiction, suppose an honest party $\node{j}$ sends a $\Vote$ for another value $m' \neq m$. This would mean that $\node{j}$ must have received $\sig{\Echo, m'}$ from $\ceil{\frac{n}{2}}$ distinct non-broadcaster parties, of which at least $\ceil{\frac{n}{2}} - (f-1) = \ceil{\frac{n-2f+2}{2}}$ must have come from honest parties. Observe that $\ceil{\frac{n}{2}} + \ceil{\frac{n-2f+2}{2}} - (n-f) \ge 1$, implying that at least one honest party must have sent $\Echo$ for both $m$ and $m'$. This is a contradiction. Hence, no honest party can send a $\Vote$ for a conflicting value.

Similarly, by a straightforward quorum intersection argument, it is impossible for $\ceil{\frac{n+f-1}{2}}$ $\Echo$ messages for any value $m' \neq m$ to exist. Consequently, no honest party will send an $\Ack$ message for $m'$ upon receiving 
$\ceil{\frac{n+f-1}{2}}$ $\Echo$ messages for $m'$.

Furthermore, since no honest party sends a $\Vote$ for $m'$, no honest party can  receive $\ceil{\frac{n+f-1}{2}}$ $\Vote$ messages for $m'$. Therefore, no honest party will send an $\Ack$ for $m'$.
\end{proof}

\begin{claim}\label{clm:opt-all-commit}
    If an honest party optimistically commit value $m$, all other parties will commit value $m$.
\end{claim}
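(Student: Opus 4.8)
The plan is to show that an optimistic commit of $m$ by some honest \node{i} sets off a cascade---$\Echo \to \Vote \to \Ready \to$ commit---that drives every honest party to commit $m$ through the ordinary $2f+1$-$\Ready$ rule, while \Cref{clm:no-conflict-vote} simultaneously blocks any conflicting commit. First I would establish the key invariant that at least $\ceil{\frac{n}{2}}$ \emph{honest} non-broadcaster parties have already sent $\sig{\Echo, m}$. As in the proof of \Cref{clm:no-conflict-vote}, I split on the broadcaster. If the broadcaster is Byzantine, there are at most $f-1$ Byzantine non-broadcaster parties, so of the $\ceil{\frac{n+2f-2}{2}}$ distinct non-broadcaster $\Echo$ messages that \node{i} collected, at least $\ceil{\frac{n+2f-2}{2}}-(f-1)=\ceil{\frac{n}{2}}$ come from honest parties. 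If the broadcaster is honest it sends the same $m$ to all, so all $n-f$ honest parties echo $m$, and $n-f-1\ge\ceil{\frac{n}{2}}$ of them (using $n\ge 3f+1$) are non-broadcaster parties. Either way the invariant holds.

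Next I would propagate this forward. Because those $\ceil{\frac{n}{2}}$ honest parties send $\sig{\Echo, m}$ to everyone, every honest party eventually receives $\ceil{\frac{n}{2}}$ non-broadcaster $\Echo$ messages for $m$ and hence sends $\sig{\Vote, m}$. Thus all $n-f$ honest parties vote for $m$; since at most one of them is the broadcaster and $n-f-1\ge\ceil{\frac{n+f-1}{2}}$ whenever $n\ge 3f+1$, every honest party eventually collects $\ceil{\frac{n+f-1}{2}}$ non-broadcaster $\Vote$ messages for $m$ and sends $\sig{\Ready, m}$. Finally, all $n-f\ge 2f+1$ honest parties send $\sig{\Ready, m}$, so every honest party eventually receives $2f+1$ $\Ready$ messages for $m$ and commits $m$ by the $\{3,4\}$-step rule.

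It remains to rule out that any honest party commits some $m'\neq m$. By \Cref{clm:no-conflict-vote}, no honest party sends $\Vote$ or $\Ready$ for $m'$, so at most $f$ (Byzantine) $\Ready$ messages for $m'$ ever exist---fewer than the $2f+1$ required---which blocks a $\{3,4\}$-step commit of $m'$. An optimistic commit of $m'$ is likewise impossible: the $\ceil{\frac{n}{2}}$ honest parties that echoed $m$ never echo $m'$, so at most $(n-1)-\ceil{\frac{n}{2}}=\floor{\frac{n}{2}}-1$ non-broadcaster $\Echo$ messages for $m'$ can exist, strictly below the threshold $\ceil{\frac{n+2f-2}{2}}\ge\ceil{\frac{n}{2}}$. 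Hence every honest party commits exactly $m$. I expect the main obstacle to be purely the threshold bookkeeping---in particular verifying $n-f-1\ge\ceil{\frac{n+f-1}{2}}$ and the non-broadcaster off-by-one adjustments under $n\ge 3f+1$---together with remembering to handle the honest- and Byzantine-broadcaster cases separately when seeding the initial $\ceil{\frac{n}{2}}$ honest echoes.
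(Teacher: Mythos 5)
Your proof is correct and follows essentially the same route as the paper's: the key invariant that $\ceil{\frac{n}{2}}$ honest non-broadcaster parties have echoed $m$ drives the $\Echo \to \Vote \to \Ready \to$ commit cascade, with \Cref{clm:no-conflict-vote} blocking any conflicting $\Vote$ or $\Ready$. Your explicit case split on broadcaster honesty, the threshold arithmetic, and the closing paragraph ruling out commits of $m' \neq m$ are just slightly more detailed than the paper's version, which defers the uniqueness of the committed value to the agreement argument in \Cref{thm:opt-rbc-proof}.
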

\begin{proof}
Assume that an honest party $\node{i}$ optimistically commits to the value $m$. This implies that $\node{i}$ must have received $\sig{\Echo, m}$ from at least $\ceil{\frac{n+2f-2}{2}}$ non-broadcaster parties. Therefore, at least $\ceil{\frac{n}{2}}$ honest non-broadcaster parties must have sent $\sig{\Echo, m}$ to all parties, and thus every honest party eventually receives at least $\ceil{\frac{n}{2}}$ $\sig{\Echo, m}$ messages.

By~\Cref{clm:no-conflict-vote}, no honest party sends a $\Vote$ for any $m' \neq m$. Consequently, all honest parties send $\sig{\Vote, m}$ to all parties and eventually receive at least $\ceil{\frac{n+f-1}{2}}$ distinct $\sig{\Vote, m}$ messages. Again, by~\Cref{clm:no-conflict-vote}, no honest party sends an $\Ack$ for $m' \neq m$. Hence, all honest parties send $\sig{\Ack, m}$ and eventually receive at least $\ceil{\frac{n+f-1}{2}}$ $\sig{\Ack, m}$ messages.

Finally, for an honest party to send $\Ready$ for some value $m' \neq m$,
at least one honest party must receive at least
$\ceil{\frac{n+f-1}{2}}$ $\sig{\Ack, m'}$ messages. However, since no honest
party sends an $\Ack$ for $m'$, no honest party sends $\Ready$ for $m'$.
Moreover, since all honest parties eventually receive at least
$\ceil{\frac{n+f-1}{2}}$ $\sig{\Ack, m}$ messages, they will send $\Ready$
for $m$ and eventually receive at least $2f+1$ $\sig{\Ready, m}$ messages,
ensuring that they commit to the value $m$.
\end{proof}

\begin{claim}\label{clm:no-conflict-ready}
If an honest party sends $\Ready$ for $m$, then no honest party will send $\Ready$ for any value $m' \neq m$.
\end{claim}
\begin{proof}
For an honest party to send $\Ready$ for some value $m'$, at least one honest party must have received $\sig{\Ack, m'}$ from at least $\ceil{\frac{n+f-1}{2}}$ non-broadcaster parties. Similarly, for an honest party to send $\sig{\Ack, m'}$, it must have either (i) received $\sig{\Vote, m'}$ from $\ceil{\frac{n+f-1}{2}}$ non-broadcaster parties, or (ii) received $\sig{\Echo, m'}$ from $\ceil{\frac{n+f-1}{2}}$ non-broadcaster parties.

In the latter case, $\ceil{\frac{n+f-1}{2}} > f$, which implies that at least one honest party must have sent $\sig{\Echo, m'}$ which is impossible when the broadcaster is honest. In the former case, at least one honest party must have sent $\sig{\Vote, m'}$, which occurs only after receiving $\ceil{\frac{n}{2}} > f$ $\sig{\Echo, m'}$ messages which is also impossible when the broadcaster is honest. Therefore, it suffices to consider the case where the broadcaster is Byzantine, in which case there are at most
$f-1$ Byzantine non-broadcaster parties.

For an honest party $\node{i}$ to send $\Ready$ for $m$, at least one honest party must have received $\sig{\Ack, m}$ from at least $\ceil{\frac{n+f-1}{2}}$ non-broadcaster parties. Suppose, for the sake of contradiction, that an honest party $\node{j}$ sends $\Ready$ for some value $m' \neq m$. Then at least one honest party must have received $\sig{\Ack, m'}$ from $\ceil{\frac{n+f-1}{2}}$ non-broadcaster parties.

Therefore, there must exist two sets of at least $\ceil{\frac{n+f-1}{2}}$ non-broadcaster parties sending $\Ack$ for $m$ and $m'$ respectively. Observe that $\ceil{\frac{n+f-1}{2}} + \ceil{\frac{n+f-1}{2}} - (n-1) \ge (f-1)+1$. Therefore, at least one honest party must have sent $\Ack$ for both $m$ and $m'$. A contradiction.
\end{proof}

\begin{theorem}\label{thm:opt-rbc-proof}
The protocol in~\Cref{fig:opt_rbc} solves Byzantine reliable broadcast under asynchrony with optimal resilience $f<n/3$ and optimistic good-case latency of $2$ steps, good-case latency of $4$ steps and has bad-case latency of $5$ steps.
\end{theorem}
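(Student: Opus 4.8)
The plan is to establish the three reliable-broadcast properties and then read the latency figures off the message pattern, leaning on \Cref{clm:no-conflict-vote} and \Cref{clm:opt-all-commit} for every execution in which the optimistic rule fires. Integrity is immediate, since each commit rule terminates the party after a single output. For Validity I would note that an honest broadcaster sends an identical $\sig{\Propose, m}$ to all, so all $n-f$ honest parties send $\sig{\Echo, m}$ and none echoes a competing value; because $n \ge 3f+1$ implies $n-f-1 \ge \ceil{\frac{n+f-1}{2}}$, every honest party eventually collects $\ceil{\frac{n+f-1}{2}}$ non-broadcaster $\Echo$ messages for $m$, sends $\sig{\Ready, m}$ via the echo rule, and then receives $n-f \ge 2f+1$ $\Ready$ messages and commits (the optimistic rule may fire earlier, which only helps).

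Agreement is the core, and I would split on whether any honest party ever commits through the optimistic rule. If one does, \Cref{clm:opt-all-commit} forces every honest party to commit the same $m$, and \Cref{clm:no-conflict-vote} shows no honest party ever sends $\Ready$ for a conflicting $m' \neq m$, so no party can assemble the $2f+1$ (hence at least $f+1$ honest) $\Ready$ messages required to commit $m'$; all honest deliveries then agree. The remaining regime is when no optimistic commit occurs, so every delivery goes through the $2f+1$-$\Ready$ rule. Here I would prove the Bracha-style invariant that no two honest parties send $\Ready$ for distinct values, which rules out conflicting commits.

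Tracing the earliest honest $\Ready$ for each value past the $f+1$-amplification rule leaves two possible triggers — an $\Echo$ quorum or a $\Vote$ quorum — and hence three overlap cases. The echo/echo case intersects in at least one honest echoer, and the vote/vote case in at least one honest voter; both are impossible because honest parties echo, and vote, at most once. When the broadcaster is honest this is vacuous, since honest parties echo a single value, so the real work is the Byzantine-broadcaster case, where at most $f-1$ non-broadcaster parties are faulty.

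The step I expect to be the main obstacle is the mixed case, in which one value reaches $\Ready$ through an $\Echo$ quorum and the conflicting value through a $\Vote$ quorum: these triggers live on different message types and so do not collide at the party level. To close it I would convert each trigger back into honest $\Echo$ messages — an echo-quorum $\Ready$ for $m$ witnesses at least $\ceil{\frac{n-f+1}{2}}$ honest $\Echo$s for $m$, while any honest $\Vote$ for $m'$ witnesses at least $\ceil{\frac{n-2f+2}{2}}$ honest $\Echo$s for $m'$ — and then argue that these two disjoint sets of honest echoers cannot coexist under $n \ge 3f+1$. This counting is tight, so the precise ceilings (not their real-valued relaxations) must be tracked carefully, and this is the one place where the interaction of the optimistic, echo, and vote paths genuinely has to be reconciled. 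The latency claims then follow by counting hops: $\Propose$ followed by $\Echo$ delivers an optimistic commit in $2$ steps once $\ceil{\frac{n+2f-2}{2}}$ honest $\Echo$s arrive; the echo-triggered $\Ready$ gives a $3$-step good case; and the $\Vote$-then-$\Ready$ chain guaranteed by \Cref{clm:opt-all-commit}, or the $f+1$-$\Ready$ amplification used by lagging parties after some honest party has terminated, gives the $4$-step bad case.
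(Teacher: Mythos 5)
Your skeleton coincides with the paper's own proof: Integrity is immediate, Validity is the same echo-count argument, and the optimistic branch of Agreement is handled exactly as in the paper via \Cref{clm:no-conflict-vote} and \Cref{clm:opt-all-commit}. Where you go beyond the paper is the branch with no optimistic commit: the paper disposes of it in a single sentence (``similarly, no two honest parties can commit different values using the $3$- or $4$-step rule''), whereas you attempt to actually prove the Bracha-style invariant that no two honest parties send $\Ready$ for different values, and you correctly identify the mixed trigger---Echo quorum for $m$ against Vote quorum for $m'$---as the only case that a single-message-type quorum intersection does not kill.

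The gap is that your resolution of that mixed case fails, and not merely because the ceilings need more care. You need $\ceil{\frac{n-f+1}{2}} + \ceil{\frac{n-2f+2}{2}} > n-f$, but at $n=10$, $f=3$ the left-hand side is $4+3=7=n-f$ (and the inequality also fails for every $n$ when $f\ge 4$): the two disjoint sets of honest echoers can perfectly well coexist. Moreover, no sharper counting can close the case, because the invariant you are aiming at is false for the protocol as written: the $\Vote$ rule fires on whichever value first accumulates $\ceil{\frac{n}{2}}$ echoes in a party's view, with no requirement that the party echoed that value itself, so an honest party that echoed $m$ may vote for $m'$. Concretely, with $n=10$, $f=3$, let four honest parties echo $m$, three echo $m'$, and let the two Byzantine non-broadcasters echo both; then $m$ reaches $6=\ceil{\frac{n+f-1}{2}}$ echoes, so an honest party scheduled to see them first sends $\Ready$ for $m$ by the echo rule, while $m'$ reaches $5=\ceil{\frac{n}{2}}$ echoes, so the remaining six honest parties vote for $m'$, and those votes plus the two Byzantine votes reach $\ceil{\frac{n+f-1}{2}}$, triggering honest $\Ready$s for $m'$ by the vote rule. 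Conflicting honest $\Ready$s coexist, so the statement you planned to prove cannot be proved.

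This is not only your problem: it is exactly the case the paper hides behind ``similarly'', and the protocol is genuinely fragile there. At $n=3f+1$ conflicting \emph{commits} can still be excluded by counting at the commit level ($2f+1$ $\Ready$s force $f+1$ honest $\Ready$ senders, and $2(f+1)>n-f$), but that argument dies for $n\ge 3f+2$; worse, the totality half of Agreement and the $4$-step bad-case claim also lean on the false no-conflicting-$\Ready$ statement: in the execution above, one Byzantine $\Ready$ lets a single honest party collect $2f+1$ $\Ready$s for $m'$ and terminate, while the other honest parties sit at $2f$ forever, because the party that sent $\Ready$ for $m$ is barred by the ``if not already sent'' guard from amplifying $m'$. (The paper's TLC check at $n=6$, i.e.\ $f=1$, cannot detect this; indeed your counting is valid for $f\le 2$.) So the proof cannot be completed along your lines---or the paper's---for general $f<n/3$; it goes through only if the protocol is amended, e.g.\ by allowing a party to vote only for the value it itself echoed, in which case the mixed case falls to precisely the quorum-intersection argument you wanted to use.
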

\begin{proof}
\mypara{Validity and good-case latency.}
When the broadcaster is honest, it proposes the same value $m$ to all parties.
Consequently, all honest parties send $\Echo$ for $m$, ensuring that every honest
party receives at least $\ceil{\frac{n+f-1}{2}}$ $\Echo$ messages for $m$.
This triggers them to send $\Ack$ for $m$. Subsequently, all honest parties
receive $\sig{\Ack, m}$ from at least $\ceil{\frac{n+f-1}{2}}$ non-broadcaster
parties and send $\Ready$ for $m$. Once every honest party receives at least
$2f+1$ $\Ready$ messages for $m$, they commit to $m$. This protocol therefore
achieves a good-case latency of 4 steps.

Furthermore, if at least $\ceil{\frac{f-1}{2}}$ other parties send $\Echo$ for $m$ to at least one honest party, that party can commit $m$ within $2$ steps, resulting in an optimistic good-case latency of 2 steps.

\mypara{Agreement and bad-case latency.} We first argue that no two honest parties commit to different values. If two honest parties optimistically commit to different values, then there must be at least $\ceil{\frac{n}{2}} + \ceil{\frac{n}{2}} - (n-f) \ge f $ honest parties who sent $\Echo$ for both of these values. This is impossible because honest parties only send $\Echo$ for a single value. Similarly, no two honest parties can commit different value using $4$ or $5$ step commit rule.

Now we show that if an honest party $\node{i}$ optimistically commits $m$ and another honest party $\node{j}$ commits $m'$ using $4$ step commit rule, then it must be that $m = m'$. By~\Cref{clm:no-conflict-vote}, no honest party sends $\Vote$ or $\Ack$ for value $m'$. Thus, no honest party sends $\Ready$ for $m'$ and no honest party commits value $m'$ using $4$ step commit rule. Thus, $m=m'$.

Next, we argue that if an honest party commits value $m$, all honest parties will commit $m$. If the broadcaster is honest, by validity, all honest parties will commit the same value. Now consider the case when the broadcaster is Byzantine. When an honest party $\node{i}$ optimistically commits to $m$. By~\Cref{clm:opt-all-commit}, all honest parties will commit $m$.

Now, consider the case where an honest party commits to $m$ upon receiving $2f+1$ $\Ready$ messages for $m$. This implies that at least $f+1$ honest parties must have sent $\Ready$ for $m$, and these messages will eventually be received by all honest parties. Furthermore, by ~\Cref{clm:no-conflict-ready}, no honest party sends $\Ready$ for any value $m' \neq m$. Consequently, any honest parties that have not yet sent a $\Ready$ message will send $\Ready$ for $m$. As a result, all honest parties will eventually receive $2f+1$ $\Ready$ messages for $m$ and commit $m$. This processs achieves a bad-case latency of $5$ steps.

\mypara{Integrity.} Integrity follows immediately as each honest party commits and outputs a single value.
\end{proof}

\section{Security Analysis of Balanced Optimistic RBC}
\label{sec:bal-rbc-proof}

The proof of ~\Cref{clm:bal-rbc-no-conflict-vote} remains identical to~\Cref{clm:no-conflict-vote} and the proof of~\Cref{clm:bal-rbc-no-conflict-ready} remains identical to~\Cref{clm:no-conflict-ready}.

\begin{claim}\label{clm:bal-rbc-no-conflict-vote}
    If an honest party optimistically commits value $m$, no honest party sends $\Vote$ or $\Ack$ for $h' \not= h$ where $h$ is the Merkle root corresponding to $m$.
\end{claim}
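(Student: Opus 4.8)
The plan is to mirror the proof of \Cref{clm:no-conflict-vote} essentially line for line, substituting the Merkle root $h$ for the raw message $m$ throughout, and to lean on the single analogue needed in the coded setting: in the protocol of \Cref{fig:balanced_opt_rbc} every honest party sends at most one $\Echo$, carrying exactly the root contained in the first valid $\Propose$ it accepts. This ``single-root'' property plays the role that ``honest parties only send $\Echo$ for a single value'' plays in the original argument, so that the sets of honest parties echoing two distinct roots are disjoint.

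First I would dispense with the honest-broadcaster case. An honest broadcaster computes one root $h$ over $\Enc(m,\cdot,\cdot)$ and distributes mutually consistent codewords and Merkle proofs, so every honest party that echoes does so for $h$; since at most $f$ parties are Byzantine and $\ceil{\frac{n}{2}}>f$, it is impossible to accumulate $\ceil{\frac{n}{2}}$ valid $\Echo$ messages for any $h'\neq h$, and the $\Vote$ trigger for $h'$ never fires. This reduces the problem to a Byzantine broadcaster, in which case at most $f-1$ non-broadcaster parties are Byzantine, leaving all $n-f$ honest parties as non-broadcasters. Now suppose an honest party optimistically commits $m$. By the Opt Commit rule it received valid $\Echo$ messages for $h$ from $\ceil{\frac{n+2f-2}{2}}$ distinct non-broadcaster parties, so at least $\ceil{\frac{n+2f-2}{2}}-(f-1)=\ceil{\frac{n}{2}}$ honest parties echoed $h$. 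For contradiction, assume an honest party sends $\Vote$ for $h'\neq h$; by the $\Vote$ rule it saw $\ceil{\frac{n}{2}}$ non-broadcaster $\Echo$ messages for $h'$ with valid proofs, of which at least $\ceil{\frac{n}{2}}-(f-1)=\ceil{\frac{n-2f+2}{2}}$ are honest. Because $\ceil{\frac{n}{2}}+\ceil{\frac{n-2f+2}{2}}-(n-f)\ge 1$ (exactly the inequality verified in \Cref{clm:no-conflict-vote}), some honest party echoed both $h$ and $h'$, contradicting the single-root property. The $\Ready$ part follows at once: the $\ceil{\frac{n+f-1}{2}}$-$\Echo$ threshold for $h'$ is even stronger and hence also unreachable, and since no honest party ever $\Vote$s for $h'$, the $\ceil{\frac{n+f-1}{2}}$-$\Vote$ trigger for $h'$ cannot fire either, so no honest party sends $\Ready$ for $h'$.

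The only ingredient genuinely new relative to \Cref{clm:no-conflict-vote} is justifying the single-root property in the erasure-coded protocol, which holds because an honest party echoes only after checking the Merkle proof against the root of its first accepted $\Propose$ and never re-echoes; the combinatorial core of the argument is untouched. I do not expect a real obstacle here: the ceiling arithmetic is identical to the original claim and holds for all $n\ge 3f+1$ of either parity, and hash collision-resistance is needed only to prevent a Byzantine party from exhibiting a valid proof for a root it did not legitimately obtain, which degrades the statement merely by the usual $\negl(\k)$ probability rather than changing the combinatorial reasoning.
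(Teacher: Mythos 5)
Your proposal is correct and takes the same route as the paper: the paper's own justification of \Cref{clm:bal-rbc-no-conflict-vote} is simply the remark that the proof is identical to that of \Cref{clm:no-conflict-vote} (since the $\Vote$/$\Ready$ thresholds on $\Echo$ and $\Vote$ counts are unchanged), and your writeup is exactly that argument transplanted to Merkle roots. Your added observations—the ``single-root'' property of honest $\Echo$ messages and the $\negl(\k)$ caveat from collision resistance—are sound and only make explicit what the paper leaves implicit.
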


\begin{claim}\label{clm:bal-rbc-no-conflict-ready}
If an honest party sends $\Ready$ for $h$, then no honest party will send $\Ready$ for any value $h' \neq h$.
\end{claim}

\begin{claim}\label{clm:opt-all-commit-v3}
If an honest party optimistically commit value $m$, all other parties will commit value $m$.
\end{claim}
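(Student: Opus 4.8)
The plan is to mirror the proof of \Cref{clm:opt-all-commit}, translating each quorum argument into the erasure-coded, Merkle-committed setting of \Cref{fig:balanced_opt_rbc} and invoking \Cref{clm:bal-rbc-no-conflict-vote} in place of its plaintext analogue. First I would unpack what the Opt Commit rule guarantees: if an honest party $\node{i}$ optimistically commits $m$ with associated Merkle root $h$, then it received valid $\sig{\Echo, s_*, w_*, h}$ messages from at least $\ceil{\frac{n+2f-2}{2}}$ distinct non-broadcaster parties and verify\_interpolation$(\{s_*\}, h)$ returned some $M\neq\bot$ with $m = \Dec(M)$ and Merkle root over $M$ equal to $h$. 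Removing the at most $f-1$ Byzantine non-broadcaster parties (the honest-broadcaster case being immediate from validity), at least $\ceil{\frac{n+2f-2}{2}} - (f-1) = \ceil{\frac{n}{2}}$ honest non-broadcaster parties must have multicast valid $\Echo$ messages carrying their genuine codewords under root $h$; hence every honest party eventually receives at least $\ceil{\frac{n}{2}}$ such valid $\Echo$ messages for $h$.

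Next I would argue that these $\ceil{\frac{n}{2}}$ honest codewords suffice to trigger a $\Vote$ for $h$ at every honest party. Since a valid Merkle proof $w_j$ binds the codeword $s_j$ to the $j$-th leaf committed under $h$, and $\node{i}$'s successful verify\_interpolation call already certified that the leaves under $h$ form a consistent encoding $M$, interpolating any $\ceil{\frac{n-f+1}{2}}$ of these honest codewords recovers exactly $M$ and reproduces the root $h$. Because $\ceil{\frac{n}{2}} \ge \ceil{\frac{n-f+1}{2}}$ for $f \ge 1$, every honest party eventually holds enough valid codewords, so its own verify\_interpolation call returns $M \neq \bot$ and it sends $\sig{\Vote, s_*, w_*, h}$. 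By \Cref{clm:bal-rbc-no-conflict-vote}, no honest party votes for any $h' \neq h$, so these are the only $\Vote$ messages honest parties emit in this instance.

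From here the argument is a routine amplification. Every honest party eventually collects $\Vote$ messages for $h$ from all $\ge n-f$ honest parties, and since $n-f \ge \ceil{\frac{n+f-1}{2}}$ under $n \ge 3f+1$, each honest party crosses the $\ceil{\frac{n+f-1}{2}}$-$\Vote$ threshold and sends $\sig{\Ready, h}$. As $n-f \ge 2f+1$, every honest party then receives $2f+1$ $\sig{\Ready, h}$ messages, satisfying the first condition of the \{3,4\}-step commit rule. To actually decode and commit, a party must additionally wait for $\ceil{\frac{n-f+1}{2}}$ distinct valid $\Echo$ messages for $h$; this is guaranteed because we already established that each honest party receives at least $\ceil{\frac{n}{2}} \ge \ceil{\frac{n-f+1}{2}}$ such messages, all consistent with $M$. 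Decoding $M$ yields exactly $m$, so every honest party commits $m$.

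I expect the main obstacle to be the consistency step rather than the counting: I must make precise that the Merkle root $h$ \emph{binds} the honest codewords to a single encoding, so that verify\_interpolation passes for \emph{every} honest party (not merely $\node{i}$) and returns the \emph{same} $M$, hence the same decoded $m$. This is where I would appeal to the collision resistance of the accumulator to rule out two distinct codeword sets sharing root $h$; without it an adversary might, in principle, cause different honest parties to decode different values under a common root. Once this binding is established, the remaining thresholds ($\ceil{\frac{n}{2}}$, $\ceil{\frac{n+f-1}{2}}$, $2f+1$, and $\ceil{\frac{n-f+1}{2}}$) all reduce to elementary inequalities under $n \ge 3f+1$.
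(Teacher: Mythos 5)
Your proof is correct and follows essentially the same route as the paper's: deduce that $\ceil{\frac{n}{2}}$ honest non-broadcaster parties sent $\Echo$ for root $h$, invoke \Cref{clm:bal-rbc-no-conflict-vote} to rule out conflicting $\Vote$/$\Ready$ messages, and then amplify through $\Vote$, $\Ready$, and the $2f+1$-$\Ready$ commit rule with decoding from $\ceil{\frac{n-f+1}{2}}$ codewords. The only differences are refinements the paper leaves implicit --- your explicit Merkle-binding argument for why \Call{verify\_interpolation}{} succeeds at every honest party with the same $M$, and your separate handling of the honest-broadcaster case (which the paper sidesteps by only invoking the claim when the broadcaster is Byzantine, so that the $f-1$ bound on faulty non-broadcaster parties applies).
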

\begin{proof}
When an honest party $\node{i}$ optimistically commits to the value $m$, it must have received valid $\sig{\Echo, s_*, w_*, h}$ messages from at least  $\ceil{\frac{n+2f-2}{2}}$ non-broadcaster parties. Therefore, at least $\ceil{\frac{n}{2}}$ honest non-broadcaster parties must have already sent $\sig{\Echo, s_*, w_*, h}$ to all parties, implying that every honest party  will eventually receive at least $\ceil{\frac{n}{2}}$ such $\sig{\Echo, s_*, w_*, h}$ messages. Note that this is sufficient to decode $m$, since only $\ceil{\frac{n-f+1}{2}}$ valid codewords are required for decoding.

By~\Cref{clm:bal-rbc-no-conflict-vote}, no honest party sends a $\Vote$ for any $h' \neq h$. Consequently, all honest parties send  $\sig{\Vote, s_*, w_*, h}$ to all parties and eventually receive such $\sig{\Vote, s_*, w_*, h}$ messages from at least $\ceil{\frac{n+f-1}{2}}$ non-broadcaster parties. Hence, all honest parties send $\sig{\Ack, h}$ and subsequently receive at least
$\ceil{\frac{n+f-1}{2}}$ $\sig{\Ack, h}$ messages.

Finally, for an honest party to send $\Ready$ for some value $h' \neq h$, at least one honest party must receive at least $\ceil{\frac{n+f-1}{2}}$ $\sig{\Ack, h'}$ messages. However, by~\Cref{clm:bal-rbc-no-conflict-vote}, no honest party sends $\Ack$ for $h'$. Therefore, no honest party sends $\Ready$ for $h'$ and all honest parties sends $\sig{\Ready,h}$ upon receiving $\sig{\Ack, h}$ messages from $\ceil{\frac{n+f-1}{2}}$ non-broadcaster parties and eventually receive at least $2f+1$ $\sig{\Ready, h}$ messages, decode $m$, and commit to it.
\end{proof}

\begin{theorem}
    The protocol in~\Cref{fig:balanced_opt_rbc} solves Byzantine reliable broadcast under asynchrony with optimal resilience $f<n/3$ and optimistic good-case latency of $2$ steps, good-case latency of $4$ steps and has bad-case latency of $5$ steps.
\end{theorem}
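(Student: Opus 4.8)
The plan is to follow the same four-part template as the proof of \Cref{thm:opt-rbc-proof}, establishing validity (with good-case latency $3$ and optimistic good-case latency $2$), agreement (with bad-case latency $4$), and integrity, but now threading the Reed--Solomon encoding and the Merkle-root authentication through each step. The optimistic path is already discharged by the two preparatory claims: \Cref{clm:bal-rbc-no-conflict-vote} shows that an optimistic commit of $m$ (with root $h$) forbids any honest $\Vote$ or $\Ready$ for a conflicting root $h'$, and \Cref{clm:opt-all-commit-v3} upgrades a single optimistic commit to a commit by all honest parties. So the genuinely new work lies in the $\{3,4\}$-step path, and specifically in the fact that committing there requires not just counting $\Ready$ messages but assembling $\ceil{\frac{n-f+1}{2}}$ valid codewords and \emph{decoding} them to a well-defined message.

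For validity, I would argue that an honest broadcaster encodes $m$ consistently, so every honest party receives a valid Merkle proof in its $\sig{\Propose,\ldots}$ and sends a valid $\sig{\Echo,\ldots}$. Since $n-f \ge \ceil{\frac{n+f-1}{2}}$, every honest party collects $\ceil{\frac{n+f-1}{2}}$ valid $\Echo$ codewords; because these originate from the honest broadcaster's encoding, verify\_interpolation recomputes the root $h$ and returns a nonempty $M$, triggering $\sig{\Ready,h}$. With $n-f \ge 2f+1$ honest parties, all honest parties then gather $2f+1$ $\sig{\Ready,h}$ messages and at least $\ceil{\frac{n-f+1}{2}}$ valid $\Echo$ codewords, which suffices to decode $m$ under the $(n,\ceil{\frac{n-f+1}{2}})$-RS code; this is $3$ steps. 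The optimistic commit rule fires one step earlier as soon as $\ceil{\frac{n+2f-2}{2}}$ valid $\Echo$ codewords reach a party (verify\_interpolation then returns a decodable $M$), yielding optimistic good-case latency $2$.

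For agreement I would first exclude conflicting commits. Two optimistic commits for distinct roots are impossible by a quorum-intersection count on honest $\Echo$ senders, each of which echoes a single codeword/root, exactly as in \Cref{thm:opt-rbc-proof}; a mixed optimistic/$4$-step conflict is excluded by \Cref{clm:bal-rbc-no-conflict-vote}, which prevents any honest $\Ready$ for the conflicting root and hence makes its $2f+1$-$\Ready$ threshold unreachable; and two $4$-step commits for distinct roots are ruled out by the standard Bracha intersection on the $\ceil{\frac{n+f-1}{2}}$-$\Echo$ (or $\Vote$) thresholds. I would then prove liveness of agreement: the honest-broadcaster and optimistic cases follow from validity and \Cref{clm:opt-all-commit-v3}, while a $4$-step commit for $h$ yields $\ge f+1$ honest $\sig{\Ready,h}$ senders, whose messages propagate through the $f+1$-$\Ready$ amplification rule so that every honest party eventually holds $2f+1$ $\sig{\Ready,h}$ and sends $\Ready$ for $h$ (and, by \Cref{clm:bal-rbc-no-conflict-vote}, for no conflicting root). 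Integrity is immediate, since each honest party commits and outputs once.

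The step I expect to be the main obstacle is the \emph{decode-availability and uniqueness} argument, which has no analogue in \Cref{thm:opt-rbc-proof}: I must show that every honest party reaching a $\{3,4\}$-step commit can actually collect $\ceil{\frac{n-f+1}{2}}$ valid $\Echo$ codewords and that all such codewords decode to the \emph{same} $m$. The uniqueness half rests on verify\_interpolation together with the binding property of the Merkle accumulator: its success certifies that $h$ is the root of a genuine RS codeword vector $M'$, so (assuming collision resistance of $H$, which is what makes the guarantee hold except with probability $\negl(\k)$) any $\ceil{\frac{n-f+1}{2}}$ valid codewords for $h$ interpolate to the same low-degree polynomial and decode to one fixed $m$. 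The availability half is the delicate part: I must certify that at least $\ceil{\frac{n-f+1}{2}}$ \emph{honest} parties sent a valid $\sig{\Echo,\ldots}$ for $h$, since only honest $\Echo$ messages are guaranteed to reach everyone. I would establish this as an invariant—whenever any honest party sends $\sig{\Ready,h}$, at least $\ceil{\frac{n-f+1}{2}}$ honest parties have echoed valid codewords for $h$—by analyzing the first honest $\Ready$: the $\Echo$-threshold trigger gives $\ceil{\frac{n+f-1}{2}}-(f-1)=\ceil{\frac{n-f+1}{2}}$ honest echoers directly, and the $\Vote$-threshold trigger must be traced back through verify\_interpolation and the honest $\Vote$ senders to the same bound. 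Getting this back-tracing tight (so that the honest-echo count exactly meets the decode threshold) is where I expect the real effort to lie.
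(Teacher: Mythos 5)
Your overall decomposition---the two preparatory claims, then validity, agreement, and integrity with the stated latencies---tracks the paper's proof, and everything up to the decode-availability step is sound. The genuine gap is exactly at the step you flagged as the hard one, and it is not merely hard: the invariant you propose is false. You want to show that whenever any honest party sends $\sig{\Ready, h}$, at least $\ceil{\frac{n-f+1}{2}}$ honest parties have \emph{echoed} valid codewords for $h$. The $\Echo$-threshold branch gives this, but the $\Vote$-threshold branch cannot be traced back to it: an honest party may send $\Vote$ without ever sending $\Echo$ (a Byzantine broadcaster simply never sends it a valid $\Propose$), so honest voters do not convert into honest echoers. All the $\Vote$ threshold yields is that some honest voter received $\ceil{\frac{n}{2}}$ valid $\Echo$s from non-broadcaster parties, of which at least $\ceil{\frac{n}{2}}-(f-1)=\ceil{\frac{n-2f+2}{2}}$ are honest; this is the best obtainable lower bound on honest echoers, and it falls short of the decode threshold $\ceil{\frac{n-f+1}{2}}$ by roughly $\frac{f-1}{2}$ (for $n=3f+1$ and $f\ge 3$: only $\ceil{\frac{f+3}{2}}$ honest echoers versus the $f+1$ you need). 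The shortfall is realizable, not just unprovable: a Byzantine broadcaster can send valid $\Propose$s to exactly $\ceil{\frac{n-2f+2}{2}}$ honest parties, while the $f-1$ Byzantine non-broadcasters send valid $\Echo$s to just enough honest parties ($\ceil{\frac{n-f+1}{2}}$ of them) that those parties vote, and then honest plus Byzantine $\Vote$s push every honest party over the $\Ready$ threshold---so honest $\Ready$s for $h$ exist while the honest-echoer count sits strictly below $\ceil{\frac{n-f+1}{2}}$. No back-tracing argument can close this.

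The paper's proof closes the step with different bookkeeping: it never lower-bounds honest \emph{echoers}, but honest parties that sent $\Echo$ \emph{or} $\Vote$. Tracing the first honest $\Ready$ through either trigger and subtracting the at most $f-1$ Byzantine non-broadcasters gives $\ceil{\frac{n-f+1}{2}}$ honest $\Echo$-or-$\Vote$ senders, and this suffices because a $\Vote$ in~\Cref{fig:balanced_opt_rbc} is not a bare hash vote: an honest voter has run verify\_interpolation, holds the full codeword vector, and sends each party $\node{j}$ the Merkle-verified pair $(s_j, w_j)$ for root $h$. So $\Vote$ messages themselves carry decodable symbols, and the waiting condition of the $\{3,4\}$-step commit is fed by both message types; this is how the paper's agreement argument concludes that every honest party eventually holds enough valid codewords to decode $m$. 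Your instinct that this is the delicate point of the whole theorem is correct---indeed, the counting remains subtle even on the paper's route, since every honest $\Vote$ delivers to a given recipient the \emph{same} index (its own), so counting senders is not literally the same as counting distinct symbols---but the route through ``honest echoers only'' is a dead end, and you need the paper's $\Echo$-or-$\Vote$ accounting to complete the proof.
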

\begin{proof}
\mypara{Validity and Good-case Latency.}
When the broadcaster is honest, it sends $\sig{\Propose, s_j, w_j, h}$ to every party $\node{j}$, for all $j \in [n]$. Subsequently, each honest party $\node{i}$ sends $\sig{\Echo, s_i, w_i, h}$, ensuring that every honest party eventually receives valid $\sig{\Echo, s_*, w_*, h}$ messages from at least $\ceil{\frac{n+f-1}{2}}$ non-broadcaster parties, which is sufficient to decode $m$.

This causes all honest parties to send $\sig{\Ack, h}$ and subsequently receive $\sig{\Ack, h}$ from at least $\ceil{\frac{n+f-1}{2}}$ non-broadcaster parties. Consequently, all honest parties send $\sig{\Ready, h}$ and eventually receive at least $2f+1$ $\sig{\Ready, h}$ messages, allowing them to decode $m$ and commit to it.

This process achieves a good-case latency of $4$ communication steps.

Furthermore, if at least $\ceil{\frac{f-1}{2}}$ other parties send $\sig{\Echo, s_*, w_*, h}$ to at least one honest party, that party can decode $m$ and commit it within $2$ steps, resulting in an optimistic good-case latency of 2 steps.

\mypara{Agreement and bad-case latency.} The proof that no two honest parties commit to different values follows identical to~\Cref{thm:opt-rbc-proof}.

Next, we argue that if an honest party commits value $m$, all honest parties will commit $m$. If the broadcaster is honest, by validity, all honest parties will commit the same value. Now consider the case when the broadcaster is Byzantine. In this case, there are at most $(f-1)$ Byzantine non-broadcaster parties. When an honest party $\node{i}$ optimistically commits to $m$. By~\Cref{clm:opt-all-commit-v3}, all honest parties will commit $m$.

Now consider the scenario in which an honest party commits to $m$ upon
receiving at least $2f+1$ $\sig{\Ready, h}$ messages. This implies that at
least $f+1$ honest parties must have already sent $\sig{\Ready, h}$, ensuring
that every honest party will eventually receive at least $f+1$
$\sig{\Ready, h}$ messages. Moreover, by~\Cref{clm:bal-rbc-no-conflict-ready},
no honest party sends $\sig{\Ready}$ for any conflicting value
$h' \neq h$. Therefore, any honest party that has not yet sent a
$\sig{\Ready, h}$ message will eventually do so. Consequently, all honest
parties eventually receive at least $2f+1$ $\sig{\Ready, h}$ messages.

Furthermore, at least one honest party must have sent $\sig{\Ready, h}$ upon
receiving $\sig{\Ack, h}$ from at least $\ceil{\frac{n+f-1}{2}}$ non-broadcaster parties. This, in turn, implies that at least one honest party must have sent $\sig{\Ack, h}$ upon receiving either $\sig{\Echo, s_*, w_*, h}$ or $\sig{\Vote, s_*, w_*, h}$ messages from at least $\ceil{\frac{n+f-1}{2}}$ non-broadcaster parties.

Since there can be at most $f-1$ Byzantine non-broadcaster parties when the
broadcaster is Byzantine, at least $\ceil{\frac{n-f+1}{2}}$ honest parties must have sent $\sig{\Echo, s_*, w_*, h}$ or $\sig{\Vote, s_*, w_*, h}$. Therefore, every honest party eventually receives at least
$\ceil{\frac{n-f+1}{2}}$ valid codewords $(s_*, w_*, h)$, which are sufficient
to decode $m$. Hence, all honest parties eventually commit to $m$ and
terminate.

This process incurs a bad-case latency of $5$ communication steps.

\mypara{Integrity.} Integrity follows immediately as each honest party commits and outputs a single value.
\end{proof}

\begin{lemma}[Communication Complexity]
Let $L$ be the size of the input, $\k$ be the size of hash. The communication complexity of the protocol in~\Cref{fig:balanced_opt_rbc} has a communication complexity of $O(nL+ \k n^2 \log{n})$.
\end{lemma}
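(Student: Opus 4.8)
The plan is to bound the total number of bits sent across all parties by accounting for each phase of the protocol separately and then identifying the dominant term. First I would fix the sizes of the individual message components. The broadcaster encodes the $L$-bit input with an $(n, \ceil{\frac{n-f+1}{2}})$-RS code; since the reconstruction threshold $\ceil{\frac{n-f+1}{2}} = \Theta(n)$, the $L$ input bits are spread across $\Theta(n)$ symbols, so each codeword $s_j$ has size $O(L/n + \k)$ bits (the additive $\k$ absorbs the minimum symbol size needed to address $n$ evaluation points). Using a Merkle tree over the $n$ codewords, each Merkle proof $w_j$ consists of $O(\log n)$ hashes and hence has size $O(\k \log n)$, while the Merkle root $h$ has size $O(\k)$. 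Thus every $\Propose$, $\Echo$, and $\Vote$ message carries $O(L/n + \k \log n)$ bits, whereas each $\Ready$ message carries only $O(\k)$ bits.

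Next I would tally the messages phase by phase. In the $\Propose$ phase the broadcaster sends one message to each of the $n$ parties, contributing $n \cdot O(L/n + \k \log n) = O(L + \k n \log n)$ bits. In the $\Echo$ phase each party sends its $\Echo$ to all $n$ parties, so a single party contributes $O(L + \k n \log n)$ bits and summing over all $n$ parties gives $O(nL + \k n^2 \log n)$. The $\Vote$ phase follows the same all-to-all pattern, except each party sends a distinct codeword $s_j$ to each recipient $\node{j}$; the per-party cost is again $n \cdot O(L/n + \k \log n) = O(L + \k n \log n)$, so the phase contributes $O(nL + \k n^2 \log n)$. Finally, in the $\Ready$ phase each party sends an $O(\k)$-sized message to all parties, for a total of $n \cdot n \cdot O(\k) = O(\k n^2)$ bits.

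Summing these contributions, $O(L + \k n \log n) + O(nL + \k n^2 \log n) + O(nL + \k n^2 \log n) + O(\k n^2)$ simplifies to $O(nL + \k n^2 \log n)$, with the $\Echo$ and $\Vote$ phases dominating. The step I would handle most carefully is justifying the per-phase, per-party cost: I must argue that each honest party processes only the \emph{first} $\Echo$ (resp.\ $\Vote$) message from any given sender, so that a Byzantine party cannot inflate the count by replaying messages, and that each party emits at most one $\Echo$, one $\Vote$, and a constant number of $\Ready$ messages to each recipient. Given these observations, the number of messages per ordered pair of parties is $O(1)$ in each phase, and the bound follows directly from the message sizes established above.
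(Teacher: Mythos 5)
Your proof is correct and follows essentially the same approach as the paper's: a phase-by-phase accounting of message sizes (codeword $O(L/n)$, Merkle proof $O(\k\log n)$, root $O(\k)$), with the $\Echo$ and $\Vote$ all-to-all phases dominating at $O(nL+\k n^2\log n)$ and the $\Ready$ phase contributing only $O(\k n^2)$. Your added care about per-symbol minimum size and about counting only the first message per sender--recipient pair is a harmless refinement of the same argument, not a different route.
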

\begin{proof}
In the propose step, the broadcaster sends a code word of size $O(L/n)$, a Merkle proof of size $O(\k \log{n})$ and $O(\k)$-sized Merkle root to each party. This results in a communication complexity of $O(L+ \k n \log{n})$. 

During the echo and vote steps, each party sends a code word of size $O(L/n)$, a Merkle proof of size $O(\k \log{n})$ and $O(\k)$-sized Merkle root to all parties. Consequently, these steps incur a communication complexity of $O(nL+ \k n^2 \log{n})$. The $\Ack$ and $\Ready$ steps contributes an additional communication complexity of $O(\k n^2)$. Thus, the total communication complexity is $O(nL+ \k n^2 \log{n})$.
\end{proof}

\section{Security Analysis of Setup-free ACSS}
\label{sec:acss-security-proof}
Under normal conditions, the protocol operates identically to the AVSS protocol of Cachin et al.~\cite{cachin2002asynchronous}. Consequently, its security guarantees are the same as those of the AVSS protocol by Cachin et al.~\cite{cachin2002asynchronous}. Therefore, we only need to argue the termination property of the protocol when an honest party optimistically completes the $\Sh$ phase.

The proof of the following claim is identical to~\Cref{clm:no-conflict-vote}, as honest parties send $\Vote$ and $\Ready$ messages upon receiving the same number of $\Echo$ and $\Vote$ messages.
\begin{claim}\label{clm:acss-no-conflict-vote}
If an honest party optimistically terminates the $\Sh$ protocol for some commitment $\C$, then no honest party sends $\Vote$ or $\Ack$ for a conflicting commitment $\C'$.
\end{claim}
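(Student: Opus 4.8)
The plan is to mirror the argument of~\Cref{clm:no-conflict-vote} essentially verbatim, replacing the broadcast value $m$ by the commitment $\C$ and treating the dealer $\node{d}$ as the ``broadcaster.'' The one structural feature I must pin down first is an invariant specific to~\Cref{fig:acss}: an honest party sends an $\Echo$ only after validating, via $\VerifyPoly$, the \emph{first} $\Deal$ it accepts, and hence contributes an $\Echo$ toward at most one commitment. Consequently the honest $\Echo$-senders for two distinct commitments $\C$ and $\C'$ form disjoint sets, which is precisely what makes the quorum-intersection counting of~\Cref{clm:no-conflict-vote} applicable here.

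First I would split on whether the dealer is honest. If $\node{d}$ is honest, every honest party $\Echo$s the same $\C$, so a conflicting $\C'$ can accumulate at most $f<\ceil{\frac{n}{2}}$ valid $\Echo$ messages, and neither the $\Vote$ threshold $\ceil{\frac{n}{2}}$ nor either $\Ready$ threshold $\ceil{\frac{n+f-1}{2}}$ can be met for $\C'$; the claim is then immediate. The substantive case is a Byzantine dealer, where at most $f-1$ non-dealer parties are faulty. From the hypothesis that an honest $\node{i}$ terminated optimistically on $\C$—so it tallied $\ceil{\frac{n+2f-2}{2}}$ valid $\Echo$ for $\C$—I would deduce that at least $\ceil{\frac{n+2f-2}{2}}-(f-1)=\ceil{\frac{n}{2}}$ \emph{honest} parties sent $\Echo$ for $\C$. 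If some honest $\node{j}$ sent $\Vote$ for $\C'\neq\C$, it tallied $\ceil{\frac{n}{2}}$ valid $\Echo$ for $\C'$, at least $\ceil{\frac{n-2f+2}{2}}$ of them honest; the bound $\ceil{\frac{n}{2}}+\ceil{\frac{n-2f+2}{2}}-(n-f)\ge 1$ then forces an honest party to have echoed both commitments, contradicting the invariant.

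For the $\Ready$ messages in the Byzantine-dealer case I would reuse the same two observations as in~\Cref{clm:no-conflict-vote}. The intersection bound already rules out $\ceil{\frac{n+f-1}{2}}$ valid $\Echo$ for $\C'$, so the $\Echo$-triggered $\Ready$ rule never fires for $\C'$; and since no honest party ever $\Vote$s for $\C'$, the at-most $f-1$ Byzantine non-dealer parties cannot by themselves supply the $\ceil{\frac{n+f-1}{2}}$ $\Vote$ messages required by the $\Vote$-triggered $\Ready$ rule. Hence no honest party sends $\Ready$ for $\C'$.

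I expect the only genuine obstacle to be stating the disjointness invariant cleanly—confirming that ``an $\Echo$ for $\C$'' is well defined per honest party and that the $\VerifyPoint$ tallying never credits a single honest party's evaluations to two incompatible commitments. Once that is fixed, the counting is identical to that in~\Cref{clm:no-conflict-vote} and introduces no new ideas.
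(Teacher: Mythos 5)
Your proposal is correct and follows essentially the same route as the paper: the paper's own proof is precisely the observation that the argument of~\Cref{clm:no-conflict-vote} carries over verbatim since the $\Vote$/$\Ready$ thresholds on $\Echo$ and $\Vote$ counts are unchanged, and your adaptation (dealer as broadcaster, $\C$ in place of $m$, identical quorum-intersection counting) is exactly that. Your explicit statement of the single-$\Echo$ invariant and the separate honest-dealer case merely spell out details the paper leaves implicit.
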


\begin{claim}\label{clm:acss-all-terminate}
If an honest party optimistically terminates the $\Sh$ protocol for some commitment $\C$, then all honest parties will terminate the $\Sh$ protocol for $\C$.
\end{claim}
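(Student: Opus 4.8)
The plan is to mirror the argument used for the analogous optimistic-commit claims \Cref{clm:opt-all-commit} and \Cref{clm:opt-all-commit-v3}, adapting it to the bivariate-polynomial setting of the ACSS. The key structural observation is that the protocol in \Cref{fig:acss} has exactly the same message-flow skeleton as the optimistic RBC: an $\Echo$ count of $\ceil{\frac{n}{2}}$ triggers a $\Vote$, an $\Echo$ or $\Vote$ count of $\ceil{\frac{n+f-1}{2}}$ triggers a $\Ready$, $f+1$ $\Ready$s amplify to a further $\Ready$, and $2f+1$ $\Ready$s trigger (non-optimistic) termination, while optimistic termination fires at $\ceil{\frac{n+2f-2}{2}}$ valid $\Echo$s. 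So I would simply replay the propagation chain $\Echo \Rightarrow \Vote \Rightarrow \Ready \Rightarrow$ terminate, invoking \Cref{clm:acss-no-conflict-vote} exactly where the RBC proof invoked \Cref{clm:no-conflict-vote}.

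First I would fix an honest party $\node{i}$ that optimistically terminates for commitment $\C$. By the Opt-Commit rule it collected $\ceil{\frac{n+2f-2}{2}}$ valid $\Echo$ messages (passing $\VerifyPoint$) for $\C$ from non-dealer parties. If the dealer is honest, Validity already forces termination, so assume the dealer is Byzantine; then at most $f-1$ of those $\Echo$s are Byzantine, leaving at least $\ceil{\frac{n+2f-2}{2}}-(f-1)=\ceil{\frac{n}{2}}$ honest senders who broadcast their $\Echo$ (carrying points consistent with $\C$) to everyone. Hence every honest party eventually holds at least $\ceil{\frac{n}{2}}$ valid $\Echo$ messages for $\C$, enough to interpolate $\bar a,\bar b$, and by \Cref{clm:acss-no-conflict-vote} (no honest $\Vote$ for a conflicting $\C'$) every honest party sends $\sig{\Vote,\C,\bar a(j),\bar b(j)}$ rather than voting for anything else.

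Next I would push the chain forward using the $\ge n-f$ honest parties as a guaranteed quorum. Since all honest parties send a $\Vote$ for $\C$ and $n-f\ge\ceil{\frac{n+f-1}{2}}$ under $n\ge 3f+1$, every honest party eventually receives $\ceil{\frac{n+f-1}{2}}$ valid $\Vote$ messages for $\C$ and therefore sends $\sig{\Ready,\C,\cdot,\cdot}$ (again using \Cref{clm:acss-no-conflict-vote} to exclude a $\Ready$ for a conflicting commitment). Consequently all honest parties send $\Ready$ for $\C$, so each eventually collects at least $n-f\ge 2f+1$ valid $\Ready$ messages, interpolates $\bar a$, sets its share $s_i = \bar a(0)$, and terminates the $\Sh$ protocol for $\C$.

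I do not expect a deep obstacle; the only point needing genuine care is the first step, namely confirming that the surviving honest $\Echo$s really do let every honest receiver pass $\VerifyPoint$ and interpolate a \emph{consistent} pair $\bar a,\bar b$ — i.e., that the point-based structure plays the role that the value $m$ played in the RBC proof. Because $\VerifyPoint$ checks each received evaluation against $\C$, and $\ceil{\frac{n}{2}}>f$ under $n\ge 3f+1$ exceeds the degree-$f$ reconstruction threshold, any $\ceil{\frac{n}{2}}$ valid points uniquely determine $\bar a,\bar b$, so the interpolations in the $\Vote$ and $\Ready$ steps are well-defined and agree across honest parties. The remaining threshold inequalities ($\ceil{\frac{n}{2}}>f$, $n-f\ge\ceil{\frac{n+f-1}{2}}$, and $n-f\ge 2f+1$) are the same arithmetic already established in \Cref{thm:opt-rbc-proof}, so I would state them and defer to that computation.
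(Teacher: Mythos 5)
Your proof is correct and takes essentially the same route as the paper's: the same $\Echo \Rightarrow \Vote \Rightarrow \Ready \Rightarrow$ terminate propagation chain with identical thresholds, invoking \Cref{clm:acss-no-conflict-vote} at exactly the point where the paper does. The additional care you take—dispatching the honest-dealer case via Validity before assuming at most $f-1$ Byzantine non-dealer parties, and noting that $\ceil{\frac{n}{2}} > f$ valid points uniquely determine $\bar{a}$ and $\bar{b}$ so the interpolations are well-defined—merely makes explicit steps the paper leaves implicit.
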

\begin{proof}
Suppose an honest party $\node{i}$ optimistically terminates the $\Sh$ protocol for some commitment $\C$. This implies that $\node{i}$ must have received valid $\sig{\Echo, \C, *, *}$ from at least $\ceil{\frac{n+2f-2}{2}}$ non-dealer parties. Therefore, at least $\ceil{\frac{n}{2}}$ honest non-dealer parties must have sent $\sig{\Echo, \C, *, *}$ to all parties and all honest parties will receive at least $\ceil{\frac{n}{2}}$ $\sig{\Echo, \C ,* , *}$ messages. By~\Cref{clm:acss-no-conflict-vote}, no honest party sends a $\Vote$ for $\C' \not = \C$. Therefore, all honest parties will send $\Vote$ for $\C$ to all parties and subsequently receive at least $\ceil{\frac{n+f-1}{2}}$ distinct $\sig{\Vote, \C, *, *}$ messages which is sufficient to interpolate $\bar{a}$ and $\bar{b}$. Hence, all honest parties will send $\sig{\Ack, \C, *, *}$ and eventually receive at least $\ceil{\frac{n+f-1}{2}}$ distinct $\sig{\Ack, \C, *, *}$. Subsequently, all honest parties send $\sig{\Ready, \C, *, *}$ and receive $2f+1$ $\sig{\Ready, \C, *, *}$, ensuring that they terminate the $\Sh$ protocol for $\C$.
\end{proof}

\Cref{clm:acss-all-terminate} establishes the termination property when an honest party optimistically completes the sharing phase of the protocol. The proof of following theorem follows similar to~\Cref{thm:opt-rbc-proof}

\begin{theorem}
The protocol in~\Cref{fig:acss} is an ACSS protocol tolerating $f$ Byzantine faults under $n \ge 3f+1$ with an optimistic good-case latency of $2$ steps, a good-case latency of $4$ steps and a bad-case latency of $5$ steps.
\end{theorem}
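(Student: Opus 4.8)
The plan is to decompose the proof along the line the preceding text already draws: the non-optimistic part of the protocol in~\Cref{fig:acss}---the \Echo handling, the $\Ready$ sent on $\ceil{\frac{n+f-1}{2}}$ \Echo messages, the $f+1$-$\Ready$ amplification, the $2f+1$-$\Ready$ termination, and the $\Rec$ phase---is verbatim the setup-free ACSS of Cachin et al.~\cite{cachin2002asynchronous}. So I would first invoke their analysis to obtain validity, correctness, secrecy, and completeness for every execution in which no honest party exercises a blue rule. The genuinely new work concerns the two additions: the \Vote broadcast upon the $\ceil{\frac{n}{2}}$-th valid \Echo, and the optimistic termination on the $\ceil{\frac{n+2f-2}{2}}$-th valid \Echo. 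My task reduces to showing these additions neither create conflicting outputs, nor leak the secret, and that they terminate.

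For agreement and correctness I would lean on~\Cref{clm:acss-no-conflict-vote}: once any honest party optimistically terminates for a commitment $\C$, no honest party ever sends \Vote or $\Ready$ for a conflicting $\C'$, so the committed commitment is unique and the $\Ready$-driven path of Cachin et al.\ cannot diverge to a second value. For completeness I would argue that the share $\bar a(0)$ extracted at an optimistic termination is the correct evaluation of the global polynomial: each \Echo point is checked by $\VerifyPoint$ against $\C$, so every verified point genuinely lies on the committed bivariate polynomial, and since $\ceil{\frac{n}{2}} \ge f+1$ such points suffice to determine the unique degree-$f$ row polynomial bound by $\C$, the interpolated $\bar a,\bar b$ coincide with the committed row/column polynomials. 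Hence $\bar a(0)=p(i)$ for the single degree-$f$ polynomial $p(x)=u(x,0)$ with $p(0)=s'$, matching the share that every honest party eventually holds through the normal path. Secrecy is preserved because the \Vote that $\node{i}$ sends to $\node{j}$ carries exactly the points $u(i,j)$ and $u(j,i)$, which an honest $\node{j}$ already receives inside the dealer's \Deal message as $b_j(i)$ and $a_j(i)$; thus, when the dealer is honest, the \Vote messages reveal nothing to a $\le f$-bounded adversary beyond what it already obtains in Cachin et al.

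For termination I would split into two cases. If some honest party terminates through the $2f+1$-$\Ready$ rule, termination of all honest parties follows exactly as in Cachin et al.: at least $f+1$ of those $\Ready$ messages are honest, the $f+1$-$\Ready$ amplification forces every remaining honest party to send $\Ready$, and $2f+1$ honest $\Ready$ messages are eventually delivered everywhere. If instead some honest party terminates optimistically, I would cite~\Cref{clm:acss-all-terminate}, which already shows that the $\ceil{\frac{n}{2}}$ honest \Echo messages propagate, trigger \Vote at every honest party, yield $\ceil{\frac{n+f-1}{2}}$ \Vote messages, and hence $2f+1$ $\Ready$ messages everywhere.

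Finally, for the latency figures I would count message delays exactly as in~\Cref{thm:opt-rbc-proof}, since the message pattern mirrors the optimistic RBC: optimistic termination uses \Deal followed by \Echo, giving $2$ steps; the honest-dealer good case uses \Deal, \Echo, then $\Ready$ triggered by $\ceil{\frac{n+f-1}{2}}$ \Echo messages, giving $3$ steps; and the bad case routes through \Deal, \Echo, \Vote, and the \Vote-triggered $\Ready$, giving $4$ steps. I expect the main obstacle to be the completeness argument for the optimistic path rather than termination: one must verify carefully that the polynomials interpolated from only $\ceil{\frac{n}{2}}$ verified \Echo points are the genuine committed row/column polynomials, so that the optimistically extracted share is exactly $p(i)$ for one global degree-$f$ polynomial. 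This is the single place where optimistic behavior departs from the Cachin et al.\ template that the rest of the proof reuses as a black box, and the argument depends on the binding of $\VerifyPoint$ to $\C$ together with $\ceil{\frac{n}{2}} \ge \ceil{\frac{n-f+1}{2}} = f+1$ under $n \ge 3f+1$.
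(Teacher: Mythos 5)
Your proposal is correct and follows essentially the same route as the paper: reduce the non-optimistic path to the analysis of Cachin et al., handle the optimistic additions via \Cref{clm:acss-no-conflict-vote} and \Cref{clm:acss-all-terminate}, and count steps exactly as in the optimistic-RBC analysis. The only difference is that you additionally make explicit the completeness of the optimistically extracted share (via the binding of $\VerifyPoint$ to $\C$ and $\ceil{\frac{n}{2}} \ge f+1$ verified points determining the committed row polynomial) and the secrecy of the new $\Vote$ messages (their payloads are points the recipient already receives from an honest dealer), both of which the paper subsumes under its blanket statement that under normal conditions the protocol operates identically to Cachin et al.; your version is, if anything, the more complete one.
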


\section{Extension: Optimistic Asynchronous Verifiable Information Dispersal}
\label{sec:opt-avid}
This section presents $(2, 4, 5)$ optimistic AVID protocol. We begin by formally defining the AVID problem.

\begin{definition}[Asynchronous Verifiable Information Dispersal~\cite{cachin2005asynchronous}]
\label{def:avid}
An AVID scheme for a message $m$ consists of a pair of protocols $\DISPERSE$ and $\RETRIEVE$ which satisfy the following requirements under asynchrony:
\begin{itemize}[noitemsep,leftmargin=*]
    \item[-] \textbf{Termination.} If an honest client invokes $\DISPERSE(m)$, then every honest party eventually finishes the dispersal phase for $m$.
    \item[-] \textbf{Agreement.} If any honest party finishes the dispersal phase on message $m$, all honest parties eventually finish the dispersal phase for message $m$.
    \item[-] \textbf{Availability.} If an honest party has finished the dispersal phase, and some honest client initiates $\RETRIEVE$, then the client eventually reconstructs some message $m'$
    \item[-] \textbf{Correctness.} If an honest party has finished the dispersal phase, then honest clients always reconstruct the same message $m'$ when invoking $\RETRIEVE$. Furthermore, if an honest client invoked $\DISPERSE(m)$, then $m' = m$.
\end{itemize}
\end{definition}

We define latency in AVID by the number of steps required to complete the dispersal phase, distinguishing between optimistic good-case, good-case, and bad-case latency.

In AVID, the client is a separate entity distinct from the $n$ parties in the system. As a result, unlike in RBC, there is no differentiation between broadcaster and non-broadcaster parties within the $n$ parties, and there are always $f$ Byzantine parties. This contrasts with RBC, where the broadcaster is included among the $n$ parties, and if the broadcaster is Byzantine, only $f-1$ non-broadcaster parties are Byzantine. Consequently, the quorum size required in AVID is slightly larger than in RBC.

\begin{figure}[ht]
	\footnotesize
	\begin{boxedminipage}[t]{\columnwidth}
        Refer to~\Cref{fig:balanced_opt_rbc} for verify\_interpolation function.

        // Dispersal phase $(\DISPERSE)$
		\begin{enumerate}[leftmargin=*]

			\item The dispersing client with input $m$ computes $M := [s_1, \ldots, s_n] := \Enc(m, n, \ceil{\frac{n-f+1}{2}})$. The client then sends  $\sig{\Disperse, s_j, w_j, h}$ to party $\node{j}$ $\forall j \in [n]$, where $h$ is the Merkle root over $M$ and $w_j$ is the Merkle proof of $s_j$.
			
			\item \textbf{Echo.} Upon receiving a valid $\sig{\Disperse, s_i, w_i, h}$, party $\node{i}$ sends $\sig{\Echo, s_i, w_i, h}$ to all parties.

            {\color{blue}
			\item \textbf{Vote.} Upon receiving valid $\sig{\Echo, s_*, w_*, h}$ from $\ceil{\frac{n+1}{2}}$ parties, let $M =$ verify\_interpolation($\{s_*\}, h$). If $M \not = \bot$, send $\sig{\Vote, s_i, w_i, h}$ to all parties.
            }

			{\color{blue}
			\item \textbf{Ack.} Party $\node{i}$ sends $\sig{\Ack, h}$ message to all parties (if not already sent), under either of the following conditions:
			\begin{itemize}[leftmargin=*]
				\item[-] Upon receiving valid $\sig{\Echo, s_*, w_*, h}$ from $\ceil{\frac{n+f+1}{2}}$ parties, if $M =$ \Call{verify\_interpolation}{$\{s_*\}$, h} and  $M \not = \bot$,
				
				\item[-] Upon receiving valid $\sig{\Vote, s_*, w_*, h}$ from $\ceil{\frac{n+f+1}{2}}$ parties, if $M =$ \Call{verify\_interpolation}{$\{s_*\}$, h} and  $M \not = \bot$,
			\end{itemize}
			}

			\item \textbf{Ready.}  Party $\node{i}$ sends a $\Ready$ message to all parties (if not already sent), under either of the following conditions:
 
			      \begin{itemize}[leftmargin=*]
			      	\item[-] Upon receiving $\sig{\Ack, h}$ from $\ceil{\frac{n+f+1}{2}}$ parties, send $\sig{\Ready, h}$
                    
			      	\item[-] Upon receiving $f+1$ $\sig{\Ready, h}$, send $\sig{\Ready, h}$
			      \end{itemize}
                  
            {\color{blue}
			\item \textbf{Opt Output.} Upon receiving $\sig{\Echo, s_*, w_*, h}$ from $\ceil{\frac{n+2f+1}{2}}$ distinct parties, party $\node{i}$ decodes $m$ and \textbf{output} $(s_i, w_i, h)$.
            }
            
			\item \textbf{Output.} Upon receiving $2f+1$ $\sig{\Ready, h}$ messages, wait for $\ceil{\frac{n-f+1}{2}}$ $\sig{\Echo, s_*, w_*, h}$ or $\ceil{\frac{n-f+1}{2}}$ $\sig{\Vote, s_*, w_*, h}$, decode $m$ and \textbf{output} $(s_i, w_i, h)$.
		\end{enumerate}
        
        \vspace{0.5em}
        
        // Retrieval phase ($\RETRIEVE$)
        \begin{enumerate}[leftmargin=*]
        \setcounter{enumi}{6}
            \item Upon receiving $\sig{\Retrieve, h}$ from the client, send $\sig{\Symbol, s_i, w_i, h}$ to the client. \Comment{// Server \node{i}}
            \item Upon receiving $\sig{\Symbol, s_*, w_*, h}$ from $ \ceil{\frac{n-f+1}{2}}$ parties, decode and output $m$. \Comment{// Client}
        \end{enumerate}

	\end{boxedminipage}
	\caption{(2, 4, 5) Optimistic AVID}
	\label{fig:opt_avid}
\end{figure}

\mypara{Protocol details.} Our $(2, 4, 5)$ optimistic AVID protocol is presented in~\Cref{fig:opt_avid}. We build upon the AVID protocol of Cachin et al.~\cite{cachin2005asynchronous}, incorporating modifications that enable the dispersal phase to optimistically complete in $2$ steps under optimistic conditions.

In the dispersal phase, the client encodes the input message $m$ using $(n, \ceil{\frac{n-f+1}{2}})$-RS codes to generate the codewords $M := [s_1, \ldots, s_n]$. It then sends $\sig{\Disperse, s_j, w_j, h}$ to each party $\node{j}$ for all $j \in [n]$, where $w_j$ is the Merkle proof for the $j^{th}$ codeword, and $h$ is the Merkle root over $M$. Upon receiving the first $\sig{\Propose, s_i, w_i, h}$ from the client, each party $\node{i}$ sends $\sig{\Echo, s_i, w_i, h}$ to all parties if $w_i$ is a valid Merkle proof for root $h$.

Upon receiving valid $\sig{\Echo, s_*, w_*, h}$ from at least $\ceil{\frac{n+1}{2}}$ parties, party $\node{i}$ verifies whether the client correctly computed the Merkle root using  the encoding of $m$ via verify\_interpolation function (defined in~\Cref{fig:balanced_opt_rbc}). If the verification succeeds, it sends $\sig{\Vote, s_i, w_i, h}$ to all parties. 

An honest party $\node{i}$ decodes $m$, optimistically completes the dispersal phase for $m$ and stores $(s_i, w_i, h)$ upon receiving valid $\sig{\Echo, s_*, w_*, h}$ messages from at least $\ceil{\frac{n+2f+1}{2}}$ parties. Observe that if an honest party optimistically completes the dispersal phase for $m$, all honest parties receive  $\ceil{\frac{n+1}{2}}$ valid $\sig{\Echo, s_*, w_*, h}$ messages. To help with agreement, parties send  $\sig{\Ack, h}$ message to all parties upon receiving at least $\ceil{\frac{n+f+1}{2}}$ $\sig{\Vote, s_*, w_*, h}$ messages.


$\node{i}$ also sends a $\sig{\Ack, h}$ message to all parties upon receiving $\sig{\Echo, s_*, w_*, h}$ from at least $\ceil{\frac{n+f+1}{2}}$ parties if verify\_interpolation succeeds. 

An honest party sends $\sig{\Ready, h}$ to all parties upon receiving
$\sig{\Ack, h}$ from at least $\ceil{\frac{n+f+1}{2}}$ parties. To ensure agreement, an honest party also sends $\sig{\Ready, h}$ upon receiving $f+1$
$\sig{\Ready, h}$ messages from other parties, provided it has not already
done so.

Finally, party $\node{i}$ completes the dispersal phase for $m$ and stores $(s_i, w_i, h)$ upon receiving at least $2f+1$ $\sig{\Ready, h}$ messages.

During the retrieval phase, the client sends $\sig{\Retrieve, h}$ to all parties. In response, each party $\node{i}$ sends $\sig{\Symbol, s_i, w_i, h}$ back to the client. The client successfully decodes $m$ once it has received at least $\ceil{\frac{n-f+1}{2}}$ valid $\Symbol$ messages containing codewords that are consistent with $h$.

\mypara{Dispersal, retrieval and storage cost.} The optimistic AVID protocol (in~\Cref{fig:opt_avid}) builds on the AVID protocol of Cachin et al.~\cite{cachin2005asynchronous}, inheriting the same client dispersal cost of $O(L + \kappa n \log n)$, asymptotically same server-side dispersal cost of $O(nL + \kappa n^2 \log n)$, and the same total storage cost of $O(L + \kappa n \log n)$. However, due to the additional $\Vote$ messages—which include codewords of size $O(L/n)$—our protocol incurs a higher concrete dispersal cost, approximately twice that of Cachin et al.~\cite{cachin2005asynchronous}.

\mypara{Reducing server-side dispersal cost to $O(L + \k n^2 \log{n})$.} In DispersedLedger~\cite{yang2022dispersedledger}, the client sends each party $\node{j}$ its codeword $s_j$, Merkle proof $w_j$, and Merkle root $h$. Upon validating the Merkle proof, each party $\node{i}$ sends $\Echo$, $\Ack$, and $\Ready$ messages containing only the root $h$. Parties neither exchange codewords nor verify the correctness of the encoding. The dispersal phase completes once a party receives $2f+1$ $\sig{\Ready, h}$ messages, reducing the server-side dispersal cost to $O(L + \kappa n^2 \log n)$.

However, this design allows a malicious client to complete dispersal with inconsistent input. To ensure correct retrieval, a client must verify the reconstructed data using \Call{verify\_interpolation}{} function and output $\bot$ if verification fails. As a result, the protocol offers slightly weaker availability guarantees.

Our protocol can also leverage this approach to achieve the same server-side cost of $O(L + \kappa n^2 \log n)$ while additionally providing 2-step optimistic good-case latency. Moreover, it does not significantly increase the concrete dispersal cost, as parties only send a $\Vote$ message on the Merkle root.




\mypara{Related work on AVID.}
We compare existing AVID protocols for an input of $L$ bits. The AVID protocol of Cachin et al.~\cite{cachin2005asynchronous} incurs a client dispersal cost of $O(L + \k n \log n)$, a server-side dispersal cost of $O(nL + \k n^2 \log n)$, and a total storage cost of $O(L + \k n \log n)$, with a good-case latency of 3 steps. DispersedLedger~\cite{yang2022dispersedledger} improves the server-side cost to $O(L + \kappa n^2 \log n)$ by allowing dispersal to complete even with inconsistent inputs, which may result in the client outputting $\bot$ during retrieval—offering slightly weaker availability guarantees. In the same weak availability setting, Alhaddad et al.~\cite{alhaddad2022asynchronous,Alhaddad_2022} propose an AVID protocol with a lower client cost of $O(L + \kappa n)$, server-side cost of $O(L + \kappa n^2)$, and total storage of $O(L + \kappa n)$, but with a 5-step good-case latency. We refer readers to Alhaddad et al.~\cite{alhaddad2022asynchronous,Alhaddad_2022} for a comprehensive comparison of AVID protocols. Compared to these works, our optimistic AVID protocol can match the dispersal and storage efficiency of Cachin et al.~\cite{cachin2005asynchronous} and DispersedLedger~\cite{yang2022dispersedledger}, while offering a 2-step optimistic good-case latency.

\subsection{Security Analysis of Optimistic AVID}
\label{sec:avid-proof}
\begin{claim}\label{clm:avid-no-conflict-vote}
If an honest party optimistically completes the dispersal phase for $m$, then no honest party will send $\Vote$ or $\Ack$ for any other value $h' \not = h$ where $h$ is the Merkle root corresponding to value $m$.
\end{claim}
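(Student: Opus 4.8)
The plan is to mirror the quorum-intersection argument of Claim~\ref{clm:no-conflict-vote}, but with the AVID-specific thresholds, since here all $f$ faults may lie among the $n$ parties (there is no broadcaster to absorb one fault). First I would extract from the optimistic-output rule that the committing party received valid $\sig{\Echo, s_*, w_*, h}$ from a set $Q$ of $\ceil{\frac{n+2f+1}{2}}$ distinct parties. Since an honest party sends an $\Echo$ for exactly one Merkle root (the one in the first valid $\Disperse$ it accepts), the whole proof reduces to showing that any quorum that would trigger $\Vote$ or $\Ready$ for a conflicting root $h'$ must share an honest party with $Q$.

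For the $\Vote$ case, an honest party sends $\sig{\Vote, \cdot, \cdot, h'}$ only after receiving $\ceil{\frac{n+1}{2}}$ valid $\Echo$ messages for $h'$; call this set $Q'$. Then $|Q \cap Q'| \ge |Q| + |Q'| - n \ge \frac{n+2f+1}{2} + \frac{n+1}{2} - n = f+1 > f$, so $Q \cap Q'$ contains at least one honest party, which would have echoed both $h$ and $h'$ --- a contradiction. The $\Ready$-via-$\Echo$ rule is handled identically, intersecting $Q$ with the $\ceil{\frac{n+f+1}{2}}$-sized $\Echo$ quorum for $h'$; the intersection is at least $\frac{3f+2}{2} > f$, again forcing a shared honest echoer and hence a contradiction.

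The remaining two routes to emitting $\Ready$ for $h'$ are ruled out by propagation arguments. For the $\Ready$-via-$\Vote$ rule, the first part already shows no honest party ever $\Vote$s for $h'$, so at most $f$ such $\Vote$ messages can exist, which is strictly below the required threshold $\ceil{\frac{n+f+1}{2}} \ge 2f+1$ under $n \ge 3f+1$. For the amplification rule, which fires on $f+1$ $\sig{\Ready, h'}$ messages, any such set contains at least one honest $\Ready$; considering the first honest party to send $\Ready$ for $h'$ yields a contradiction, since it could not have used amplification (no earlier honest $\Ready$ exists) nor either of the now-excluded $\Echo$/$\Vote$ rules.

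The main obstacle is getting the threshold arithmetic to close with the correct slack: unlike the RBC proof, each intersection bound must strictly exceed $f$ (not $f-1$), which is precisely why the AVID quorums are inflated by one relative to their RBC counterparts. I would verify each ceiling bound to confirm the $+1$ margin survives the rounding, and make explicit the ordering in the amplification case so that the ``first honest $\Ready$'' step is a rigorous minimal-counterexample argument rather than a hand-wave.
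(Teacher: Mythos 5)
Your proof is correct and follows essentially the same quorum-intersection approach as the paper's: both show that any $\Echo$ quorum supporting a conflicting root $h'$ must share an honest echoer with the optimistic-commit quorum for $h$ (a contradiction, since honest parties echo only one root), then rule out the $\Ready$-via-$\Echo$, $\Ready$-via-$\Vote$, and amplification rules in turn. The only differences are presentational: you intersect raw quorums over all $n$ parties where the paper counts honest echoers against the $n-f$ honest total, and you make explicit the first-honest-$\Ready$ minimal-counterexample argument for the amplification rule, which the paper leaves implicit.
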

\begin{proof}
Suppose an honest party $\node{i}$ optimistically completes the dispersal phase for value $m$. This indicates that $\node{i}$ must have received $\sig{\Echo, s_*, w_*, h}$ from at least $\ceil{\frac{n+2f+1}{2}}$ distinct parties. Therefore, at least $\ceil{\frac{n+2f+1}{2}} - f = \ceil{\frac{n+1}{2}}$ honest parties must have sent $\Echo$ messages. Now, for the sake of contradiction, suppose an honest party $\node{j}$ sends a $\Vote$ for another value $h' \neq h$. This would mean that $\node{j}$ must have received $\sig{\Echo, s'_*, w'_*, h'}$ from $\ceil{\frac{n+1}{2}}$ distinct parties, of which at least $\ceil{\frac{n+1}{2}} - f = \ceil{\frac{n-2f+1}{2}}$ must have come from honest parties. Observe that $\ceil{\frac{n+1}{2}} + \ceil{\frac{n-2f+1}{2}} - (n-f) \ge 1$, implying that at least one honest party must have sent $\Echo$ for both $h$ and $h'$. This is a contradiction. Hence, no honest party can send a $\Vote$ for a conflicting value.

Similarly, by a straightforward quorum intersection argument, it is impossible for $\ceil{\frac{n+f+1}{2}}$ $\Echo$ messages for $h'$ to exist. As a result, no honest party will send $\Ack$ for $h'$ upon receiving $\ceil{\frac{n+f+1}{2}}$ $\Echo$ for $h'$. Furthermore, since no honest party will send $\Vote$ for $h'$, no honest party can receive $\ceil{\frac{n+f+1}{2}}$ $\Vote$ for $h'$. Consequently, no honest party will send $\Ack$ for $h'$.
\end{proof}

\begin{claim}\label{clm:avid-no-conflict-ready}
If an honest party sends $\Ready$ for $h$, then no honest party will send $\Ready$ for any value $h' \neq h$.
\end{claim}
\begin{proof}
For an honest party $\node{i}$ to send $\Ready$ for $h$, at least one honest party must have received $\sig{\Ack, h}$ from at least $\ceil{\frac{n+f+1}{2}}$ parties. Suppose, for the sake of contradiction, that an honest party $\node{j}$ sends $\Ready$ for some value $h' \neq h$. Then at least one honest party must have received $\sig{\Ack, h'}$ from $\ceil{\frac{n+f+1}{2}}$ parties.

Therefore, there must exist two sets of at least $\ceil{\frac{n+f+1}{2}}$  parties sending $\Ack$ for $h$ and $h'$ respectively. Observe that $\ceil{\frac{n+f+1}{2}} + \ceil{\frac{n+f+1}{2}} - n \ge f+1$. Therefore, at least one honest party must have sent $\Ack$ for both $h$ and $h'$. A contradiction.    
\end{proof}

\begin{claim}\label{clm:avid-opt-all-complete}
    If an honest party optimistically completes the dispersal phase for value $m$, all honest parties will complete the dispersal phase for $m$.
\end{claim}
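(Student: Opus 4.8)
The plan is to mirror the arguments used for the analogous RBC claims (\Cref{clm:opt-all-commit} and \Cref{clm:opt-all-commit-v3}), adapting the quorum thresholds to the AVID setting, where there are always $f$ Byzantine parties rather than the $f-1$ available when an RBC broadcaster is corrupt. First I would observe that if an honest party $\node{i}$ optimistically completes dispersal for $m$, it must have received valid $\sig{\Echo, s_*, w_*, h}$ messages from at least $\ceil{\frac{n+2f+1}{2}}$ distinct parties. Since at most $f$ of these are Byzantine, at least $\ceil{\frac{n+2f+1}{2}} - f = \ceil{\frac{n+1}{2}}$ of them are honest and have broadcast their $\Echo$ to all parties. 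Hence every honest party eventually receives at least $\ceil{\frac{n+1}{2}}$ valid $\sig{\Echo, s_*, w_*, h}$ messages; because $\ceil{\frac{n+1}{2}} \ge \ceil{\frac{n-f+1}{2}}$, this already suffices for verify\_interpolation to succeed and for decoding $m$.

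Next, by \Cref{clm:avid-no-conflict-vote}, no honest party sends a $\Vote$ for any conflicting root $h' \ne h$. Combined with the previous step, every honest party meets the $\ceil{\frac{n+1}{2}}$-$\Echo$ threshold and therefore sends $\sig{\Vote, s_j, w_j, h}$ to all parties. Since there are at least $n-f$ honest parties, every honest party eventually collects at least $n-f$ valid $\sig{\Vote, s_*, w_*, h}$ messages for $h$. The inequality to verify here is $n-f \ge \ceil{\frac{n+f+1}{2}}$, which is equivalent to $n \ge 3f+1$; thus every honest party triggers the $\Vote$-based $\Ready$ rule and sends $\sig{\Ready, h}$.

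Finally I would close the loop through the $\Ready$ messages: all $n-f$ honest parties send $\sig{\Ready, h}$, and since $n-f \ge 2f+1$ under $n \ge 3f+1$, every honest party receives at least $2f+1$ such messages, satisfying the $\textbf{Output}$ condition. Each honest party has already accumulated $\ceil{\frac{n+1}{2}} \ge \ceil{\frac{n-f+1}{2}}$ valid $\Echo$ messages, so the subsequent wait for $\ceil{\frac{n-f+1}{2}}$ codewords is immediately met; it can then decode $m$, store $(s_i, w_i, h)$, and complete the dispersal phase.

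The main obstacle is purely the threshold bookkeeping. Unlike in RBC, the client sits outside the set of $n$ parties, so the full budget of $f$ faults may lie among the echoing and voting parties, which shifts every threshold upward. I must therefore double-check that the shifted quorums $\ceil{\frac{n+2f+1}{2}}$, $\ceil{\frac{n+1}{2}}$, and $\ceil{\frac{n+f+1}{2}}$ still leave enough honest $\Echo$ and $\Vote$ messages to propagate through each stage. This reduces to confirming $n-f \ge \ceil{\frac{n+f+1}{2}}$ and $n-f \ge 2f+1$ — both equivalent to the resilience bound $n \ge 3f+1$ — together with the fact that $\ceil{\frac{n+1}{2}}$ honest echoes always suffice to decode, which follows from $\ceil{\frac{n+1}{2}} \ge \ceil{\frac{n-f+1}{2}}$.
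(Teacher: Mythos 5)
Your proof is correct and follows essentially the same route as the paper's: the optimistic Echo quorum minus the $f$ Byzantine parties yields $\ceil{\frac{n+1}{2}}$ honest echoes, which—combined with \Cref{clm:avid-no-conflict-vote}—propagate through the $\Vote$ and $\Ready$ stages so that every honest party reaches the $2f+1$-$\Ready$ output condition and can decode $m$. The only difference is presentational: you make the quorum arithmetic explicit (e.g., $n-f \ge \ceil{\frac{n+f+1}{2}}$ and $n-f \ge 2f+1$ both reducing to $n \ge 3f+1$), which the paper leaves implicit.
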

\begin{proof}
When an honest party $\node{i}$ optimistically completes the dispersal phase for value $m$, it must have received valid codeword $\sig{\Echo, s_*, w_*, h}$ from at least $\ceil{\frac{n+2f+1}{2}}$ parties. Therefore, at least $\ceil{\frac{n+1}{2}}$ honest parties must have sent $\sig{\Echo, s_*, w_*, h}$ to all parties and all honest parties will receive at least $\ceil{\frac{n+1}{2}}$ $\sig{\Echo, s_*, w_*, h}$ messages. Note that this is sufficient for decoding $m$, as only $\ceil{\frac{n-f+1}{2}}$ valid codewords are required to decode $m$. By~\Cref{clm:avid-no-conflict-vote}, no honest party sends a $\Vote$ for $h' \not = h$. Therefore, all honest parties will send $\Vote$ for $h$ to all parties and subsequently receive $\sig{\Vote, s_*, w_*, h}$ from at least $\ceil{\frac{n+f+1}{2}}$  parties. Hence, all honest parties will send $\sig{\Ack, h}$ and eventually receive at least $\ceil{\frac{n-f+1}{2}}$ $\sig{\Ack, h}$ messages. 

Finally, for an honest party to send $\Ready$ for some value $h' \neq h$, at least one honest party must receive at least $\ceil{\frac{n+f+1}{2}}$ $\sig{\Ack, h'}$ messages. However, since no honest party sends an $\Ack$ for $h'$, no honest party sends $\Ready$ for $h'$. Moreover, since all honest parties eventually receive at least $\ceil{\frac{n+f-1}{2}}$ $\sig{\Ack, h}$ messages, they will send $\Ready$ for $h$ and eventually receive at least $2f+1$ $\sig{\Ready, m}$ messages, decode $m$, thereby completing the dispersal phase for $m$.
\end{proof}

\begin{theorem}
The protocol in~\Cref{fig:opt_avid} is an AVID protocol with an optimistic good-case latency of $2$ steps, a good-case latency of $3$ steps and a bad-case latency of $4$ steps.
\end{theorem}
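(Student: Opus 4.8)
The plan is to establish the four AVID properties—termination, agreement, availability, and correctness—together with the stated latency bounds, following the same template as the balanced optimistic RBC proof in \Cref{sec:bal-rbc-proof} and reusing \Cref{clm:avid-no-conflict-vote} and \Cref{clm:avid-opt-all-complete}. The only substantive difference from the RBC argument is that AVID always admits $f$ Byzantine parties (there is no privileged broadcaster among the $n$ parties), so every quorum-intersection computation must be redone with the inflated thresholds $\ceil{\frac{n+1}{2}}$, $\ceil{\frac{n+f+1}{2}}$, and $\ceil{\frac{n+2f+1}{2}}$ in place of their RBC counterparts.

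First I would dispatch termination and the latency figures. When the client is honest it sends a valid $\sig{\Disperse, s_j, w_j, h}$ to every $\node{j}$, so all $n-f$ honest parties send $\sig{\Echo, s_*, w_*, h}$. Since $n \ge 3f+1$ gives $n-f \ge \ceil{\frac{n+f+1}{2}}$ and verify\_interpolation succeeds on the honest encoding, every honest party sends $\sig{\Ready, h}$, collects $2f+1$ such messages, decodes $m$, and outputs—yielding the $3$-step good-case latency. For the optimistic case, if $\ceil{\frac{n+2f+1}{2}}$ parties are honest and echo, any honest party reaches the Opt Output rule after two steps.

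Next I would prove agreement and the $4$-step bad-case bound. The consistency half—no two honest parties finish on different roots—splits into cases by commit rule; for two optimistic completions on $h \ne h'$, the overlap $2\ceil{\frac{n+2f+1}{2}} - n \ge 2f+1$ forces at least $f+1$ honest parties to have echoed both roots, a contradiction, and the remaining cases reduce to this by \Cref{clm:avid-no-conflict-vote}. For liveness of agreement, an optimistic completion propagates to all honest parties by \Cref{clm:avid-opt-all-complete}; a completion via the Output rule means $f+1$ honest parties sent $\sig{\Ready, h}$, which reach everyone, and since no honest party readies a conflicting root, all honest parties eventually gather $2f+1$ $\sig{\Ready, h}$. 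The delicate point here—and the step I expect to be the main obstacle—is arguing that every honest party can actually \emph{decode} $m$ when it finishes via the Output rule: I would show that the first honest $\Ready$ was triggered by $\ceil{\frac{n+f+1}{2}}$ $\Echo$ or $\Vote$ messages, of which at least $\ceil{\frac{n+f+1}{2}} - f = \ceil{\frac{n-f+1}{2}}$ originate from honest parties, so every honest party eventually receives $\ceil{\frac{n-f+1}{2}}$ consistent codewords—exactly the decoding threshold—giving the $4$-step bad-case latency.

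Finally, availability and correctness follow from agreement. Once any honest party finishes, all honest parties finish and store $(s_i, w_i, h)$ for the common root $h$; thus at least $n-f \ge \ceil{\frac{n-f+1}{2}}$ honest servers answer $\RETRIEVE$ with valid $\sig{\Symbol, s_*, w_*, h}$, letting the client decode some $m'$. Because all honest parties share the same $h$ and verify\_interpolation guarantees $h$ is the Merkle root of a valid encoding, every retrieval reconstructs the same $m'$; and when the dispersing client is honest, $h$ is the root of $\Enc(m, n, \ceil{\frac{n-f+1}{2}})$, so $m' = m$.
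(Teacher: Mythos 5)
Your overall decomposition---termination/latency first, then agreement and the bad case via \Cref{clm:avid-no-conflict-vote} and \Cref{clm:avid-opt-all-complete}, then availability and correctness---mirrors the paper's proof, and most of your quorum arithmetic is correct. The genuine gap sits exactly at the step you yourself flagged as the main obstacle, and your proposed resolution does not go through. You argue that the first honest $\Ready$ was triggered by $\ceil{\frac{n+f+1}{2}}$ $\Echo$ \emph{or} $\Vote$ messages, that at least $\ceil{\frac{n+f+1}{2}} - f = \ceil{\frac{n-f+1}{2}}$ of those senders are honest, and that therefore every honest party eventually receives $\ceil{\frac{n-f+1}{2}}$ consistent codewords. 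This counting is sound for $\Echo$ but not for $\Vote$, because the two message types carry different payloads: an honest party that echoes broadcasts \emph{its own} symbol $s_i$ to everyone, so $k$ honest echoers supply $k$ distinct symbols to every party; but an honest party that votes sends to each party $\node{j}$ only \emph{that party's} symbol $s_j$. Hence $\ceil{\frac{n-f+1}{2}}$ honest voters give party $\node{j}$ many copies of the single symbol $s_j$, which is useless for Reed--Solomon decoding, since decoding needs $\ceil{\frac{n-f+1}{2}}$ symbols at \emph{distinct} evaluation points. Tracing back one more step does not close the gap either: an honest voter is only guaranteed to have received $\ceil{\frac{n+1}{2}}$ valid $\Echo$ messages, of which merely $\ceil{\frac{n+1}{2}} - f = \ceil{\frac{n-2f+1}{2}}$ come from honest parties---short of the decoding threshold $\ceil{\frac{n-f+1}{2}}$ by roughly $f/2$. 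Moreover, the Output rule in \Cref{fig:opt_avid} explicitly waits for $\ceil{\frac{n-f+1}{2}}$ $\sig{\Echo, s_*, w_*, h}$ messages, so $\Vote$ payloads would not count toward that rule even if they were distinct.

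For comparison, the paper's proof handles this step by asserting that at least one honest party sent its $\Ready$ upon the $\Echo$-threshold condition (i.e., it treats only the $\Echo$-triggered case), from which $\ceil{\frac{n+f+1}{2}} - f = \ceil{\frac{n-f+1}{2}}$ honest echoers, and hence decodability at every honest party, follows immediately. To make your version airtight you must either justify why the first honest $\Ready$ can be assumed to be $\Echo$-triggered, or give a separate argument for the $\Vote$-triggered case that accounts for the directional payload of $\Vote$ messages; your current uniform treatment of the two message types silently assumes they distribute codewords in the same way, which is false.
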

\begin{proof}
\mypara{Termination and good-case latency.} When the client is honest, it sends valid and consistent codewords $s_j$ to every party $\node{j}$, for all $j \in [n]$. Consequently, each honest party $\node{i}$ sends $\sig{\Echo, s_i, w_i, h}$ to all parties, ensuring that every honest party eventually receives at least $\ceil{\frac{n+f+1}{2}}$ valid $\sig{\Echo}$ messages corresponding
to $h$.

This triggers all honest parties to send $\sig{\Ack, h}$, causing every
honest party $\node{i}$ to eventually receive at least
$\ceil{\frac{n+f+1}{2}}$ $\sig{\Ack, h}$ messages. As a result, all
honest parties send $\sig{\Ready, h}$ and eventually receive at least
$2f+1$ $\sig{\Ready, h}$ messages, enabling them to decode $m$,
output $(s_i, w_i, h)$, and terminate. This process achieves a good-case latency of $4$ communication steps.

Furthermore, if at least $\ceil{\frac{f}{2}}$ other parties send $\Echo$ for $h$ to at least one honest party, that party can complete the disperal phase for $m$ within $2$ steps, achieving an optimistic good-case latency of 2 steps.

\mypara{Agreement and bad-case latency.}
If an honest party optimistically completes the dispersal phase on message $m$, then by~\Cref{clm:avid-opt-all-complete}, all honest parties will eventually finish the dispersal phase for message $m$. 

Next, consider the scenario when an honest party completes the dispersal phase for $m$ upon receiving $2f+1$ $\sig{\Ready, h}$ messages. This indicates that at least $f+1$ honest parties must have sent $\Ready$ for $h$, and these messages will eventually be received by all honest parties. Additionally,by~\Cref{clm:avid-no-conflict-ready}, honest parties cannot send $\Ready$ for a conflicting value $h' \neq h$.  As a result, all honest parties will send $\sig{\Ready, h}$ and eventually receive $2f+1$ $\Ready$ messages for $h$. Furthermore, at least one honest party must have send $\Ack$ upon receiving $\ceil{\frac{n+f+1}{2}}$ $\sig{\Echo, s_*, w_*, h}$ messages or $\ceil{\frac{n+f+1}{2}}$ $\sig{\Vote, s_*, w_*, h}$ messages. Therefore, all honest parties will receive at least $\ceil{\frac{n-f+1}{2}}$ $\sig{\Echo, s_*, w_*, h}$ messages or   $\ceil{\frac{n-f+1}{2}}$ $\sig{\Vote, s_*, w_*, h}$ messages which is sufficient to decode $m$. Therefore, all honest parties will complete the dispersal phase for $m$. This process achieves a bad-case latency of $4$ steps.

\mypara{Availability.}
If an honest party has completed the dispersal phase, then by the agreement property, all honest parties will eventually complete the dispersal phase. This guarantees that each honest party $\node{i}$ has stored $(s_i, w_i, h)$ for some message $m'$. Consequently, when a client issues a $\sig{\Retrieve, h}$ request, all honest parties will respond with their stored values. As a result, the client will receive at least $\ceil{\frac{n-f+1}{2}}$ codewords sufficient to decode $m'$.

\mypara{Correctness.}
First, we argue that no two honest parties can complete the dispersal phase for different values. This ensures that honest clients will always reconstruct the same message $m'$ when invoking $\RETRIEVE$.

If two honest parties optimistically complete the dispersal phase for different values, then at least $\ceil{\frac{n+1}{2}} + \ceil{\frac{n+1}{2}} - (n-f) \ge f+1$ honest parties must have sent $\Echo$ messages for both values. However, this is impossible since honest parties send an $\Echo$ message for only a single value. By a similar argument, no two honest parties can complete the dispersal phase for different values upon receiving $2f+1$ $\Ready$ messages.

Next, we argue that if an honest party $\node{i}$ optimistically completes the dispersal phase for $m$ and another honest party $\node{j}$ completes the dispersal phase for $m'$ upon receiving $2f+1$ $\Ready$ messages for $h'$, then it must be that $m = m'$. By~\Cref{clm:avid-no-conflict-vote}, no honest party sends a $\Vote$ or $\Ack$ message for a conflicting value $m'$. Consequently, no honest party will send $\Ready$ for $h'$ and commit to $m'$ upon receiving $2f+1$ $\Ready$ messages for $h'$.

Finally, we argue that if an honest client invokes $\DISPERSE(m)$, then it must be that $m' = m$. This follows from the termination property, which ensures that all honest parties complete the dispersal phase for $m$. Consequently, honest clients will always reconstruct the same message $m$ when invoking $\RETRIEVE$.
\end{proof}
\section{Security Analysis of \name}
\label{sec:dag_bft_proofs}
We say that a \textit{leader vertex $v_i$ is committed directly} by party $\node{i}$ if $\node{i}$ invokes \Call{commit\_leader}{$v_i$}. Similarly, we say that a \textit{leader vertex $v_j$ is committed indirectly} if it is added to $leaderStack$ in Line~\ref{step:indirect-commit}. In addition, we say party $\node{i}$ consecutively directly commit leader vertices $v_k$ and $v_{k'}$ if $\node{i}$ directly commits $v_k$ and $v_{k'}$ in rounds $r$ and $r'$ respectively and does not directly commit any leader vertex between $r$ and $r'$.

\ignore{
\nibesh{the following statements read a bit weird} \qianyu{Updated.}
We define that an honest party $\node{i}$ \textit{accepts} a round $r$ vertex $v$ 
if one of the following holds (see Line~\ref{line: accept}):
\begin{itemize}
    \item \label{rule: accept rule1} $v$ has a strong path to the round $r-1$ leader vertex $v_{\ell}$, or
    \item $v$ is a non-leader vertex, or
    \item If $v$ is a leader vertex but does not satisfy Rule~\ref{rule: accept rule1}, then
    $\node{i}$ accepts it if it has delivered at least $2f+1$ round $r$ vertices 
    that do not have a strong path to $v_{\ell}$.
\end{itemize}
\begin{claim} \label{claim: is_valid}
If an honest party $\node{i}$ delivers a vertex $v_k$ and accepts it, 
then every honest party will eventually accept the vertex.
\end{claim}

\begin{proof}
    Since the honest party $\node{i}$ has delivered and accepts a round $r$ vertex $v_k$ 
    if $v_k$ has a strong path to the round $r-1$ leader vertex $v_\ell$ or $v_k$ is a non-leader vertex, 
    the agreement property of reliable broadcast guarantees that all honest parties will eventually 
    deliver vertex $v_k$, and subsequently, all honest parties will accept $v_k$. 
    On the other hand, if $v_k$ is a leader vertex and lacks a strong path to $v_\ell$, 
    then $\node{i}$ must have delivered $2f+1$ round $r$ vertices that do not have 
    a strong path to $v_\ell$. 
    And the agreement property of reliable broadcast guarantees that all honest parties 
    will eventually deliver these $2f+1$ vertices. 
    Therefore, all honest parties will accept vertex $v_k$.
\end{proof}
}


The following fact is immediate from using reliable broadcast to propagate a vertex $v$ and waiting for the entire causal history of $v$ to be added to the DAG before adding $v$.

\begin{fact}\label{fact}
For every two honest parties \( P_i \) and \( P_j \) (i) for every round \( r \), \( \bigcup_{r' \leq r} DAG_i[r'] \) is eventually equal to \( \bigcup_{r' \leq r} DAG_j[r'] \), (ii) At any given time \( t \) and round \( r \), if \( v \in DAG_i[r] \) and  \( v' \in DAG_j[r] \) such that \( v.source = v'.source \), then \( v = v' \). Moreover, for every round \( r' < r \), if \( v'' \in DAG_i[r'] \) and there is a path from \( v \) to \( v'' \), then \( v'' \in DAG_j[r'] \) and there is a path from \( v' \) to \( v'' \).
\end{fact}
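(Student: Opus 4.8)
The plan is to derive both parts of \Cref{fact} from two ingredients: the agreement and integrity properties of the reliable broadcast used to disseminate vertices, and the fact that \Call{try\_add\_to\_dag}{$v$} inserts $v$ into the local DAG only after every vertex reachable from $v$ via its strong and weak edges already lies in the DAG. I would first establish non-equivocation, i.e., the first assertion of part (ii). Each pair $(p,r)$ of a source and a round corresponds to a single RBC instance, since an honest party broadcasts exactly one vertex per round and a Byzantine party cannot cause honest parties to r\_deliver two distinct vertices from the same instance. A vertex enters $DAG_i[r]$ only after $P_i$ has r\_delivered it. Hence if $v \in DAG_i[r]$ and $v' \in DAG_j[r]$ with $v.source = v'.source$, both were r\_delivered from the same RBC instance, and by RBC agreement and integrity the two honest parties deliver the same value, so $v = v'$.

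For the ``Moreover'' clause I would argue by induction along the path. By the non-equivocation just shown, $v = v'$, so $v$ and $v'$ carry the same strong, weak, and no-vote edges. Consider the path $v = v_1 \to v_2 \to \cdots \to v_k = v''$ witnessing reachability in $DAG_i$. Since $P_j$ added $v = v'$ to its DAG, the guard of \Call{try\_add\_to\_dag}{} guarantees that every vertex referenced by $v$'s edges---in particular $v_2$---already lies in $\bigcup_{k\ge 1} DAG_j[k]$. By non-equivocation, the copy of $v_2$ in $DAG_j$ equals the one in $DAG_i$ and therefore carries the same outgoing edges; repeating the argument down the path shows each $v_\ell$ is in $DAG_j$ and coincides with its counterpart in $DAG_i$. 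Thus $v'' \in DAG_j[r']$, and the same edge sequence witnesses a path from $v' = v$ to $v''$.

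For part (i) I would show eventual set-equality by proving that $\bigcup_{r'\le r} DAG_i[r'] \subseteq \bigcup_{r'\le r} DAG_j[r']$ holds eventually; symmetry then yields equality. Take any $v \in DAG_i[r]$. Then $P_i$ r\_delivered $v$, so by RBC agreement $P_j$ eventually r\_delivers $v$ as well. I would induct on the round (equivalently, the causal depth) of $v$: by the induction hypothesis every vertex referenced by $v$'s edges is eventually added to $DAG_j$, so the guard of \Call{try\_add\_to\_dag}{$v$} is eventually satisfied and $v$ is inserted---if not at first delivery, then on a later retry from \textit{buffer}. The only additional blocking condition is the wait on $v.nvEdges$ for a leader vertex that omits the previous leader (Line~\ref{line: deliver vertex}); but those no-vote vertices are themselves r\_delivered, so RBC agreement again guarantees $P_j$ eventually clears the wait. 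Hence $v$ is eventually added to $DAG_j[r]$.

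The step I expect to be the main obstacle is the path-preservation argument in part (ii): it must simultaneously invoke non-equivocation, to equate the vertices along the path across the two DAGs, and the causal-history guard of \Call{try\_add\_to\_dag}{}, to propagate membership down the path. The subtlety is that the induction must be driven by the edges stored \emph{inside} the vertices---which are identical across honest parties by non-equivocation---rather than by any party-local bookkeeping, so that the witnessing path transfers verbatim from $P_i$ to $P_j$.
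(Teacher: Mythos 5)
Your overall approach is exactly the paper's: the paper gives no detailed proof, stating only that the fact ``is immediate from using reliable broadcast to propagate a vertex $v$ and waiting for the entire causal history of $v$ to be added to the DAG before adding $v$,'' and your two ingredients---RBC agreement/integrity for non-equivocation and eventual delivery, plus the causal-history guard of \Call{try\_add\_to\_dag}{}---are precisely these. Your part (i) and the non-equivocation half of part (ii) are correct as elaborated.

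There is, however, a concrete gap in your path-preservation induction, at exactly the step you flag as the crux. You assert that ``the guard of \Call{try\_add\_to\_dag}{} guarantees that every vertex referenced by $v$'s edges---in particular $v_2$---already lies in $\bigcup_{k\ge 1} DAG_j[k]$.'' That is not what the code does: the guard in \Cref{fig:dag_construction} checks only $v.strongEdges \cup v.weakEdges$, whereas the \Call{path}{} relation of \Cref{fig:data_structures} (the relation the Fact quantifies over) may also traverse $v.nvEdges$. For a leader vertex that omits the previous leader, the handler at Line~\ref{line: deliver vertex} merely waits for the nv-edge vertices to be \emph{r\_delivered}, not for them to be inserted into the DAG; such a vertex can therefore sit in $P_j$'s \textit{buffer} (because one of \emph{its own} strong edges is still missing) while the leader vertex is already in $DAG_j$. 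Concretely, take $u \in v.nvEdges$ and $w \in u.strongEdges$ with $w \notin v.strongEdges$: party $P_i$ may hold $v$, $u$, and $w$ in its DAG, so that $\mathrm{path}(v,w)$ holds in $DAG_i$ with $w$ in round $r-1 < r$, while $P_j$ has $v \in DAG_j$ but neither $u$ nor $w$---falsifying your induction step, and indeed the ``at any given time'' reading of clause (ii), for paths through nv edges. The paper's one-line justification (``the entire causal history'') glosses over the same point, so the repair is either to restrict the paths in part (ii) to strong/weak edges, where your induction is sound, or to settle for the eventual form of the statement, which already follows from your part (i) argument.
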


\begin{claim} \label{claim: strong path} 
    If an honest party \( P_i \) directly commits a round $r$ leader vertex $v_k$, 
    then for every round \( r' \) leader vertex \( v_\ell \)  such that \( r' > r \), 
    there exists a strong path from \( v_\ell \) to $v_k$.
\end{claim}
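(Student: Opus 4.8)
The plan is to prove the claim by induction on the round $r'$, showing that the property ``has a strong path to $v_k$'' propagates upward through the DAG once $v_k$ is directly committed. The starting point is the observation that a direct commitment of $v_k$---whether via \Cref{commit rule: messages} or \Cref{commit rule: vertices}---guarantees at least $f+1$ \emph{genuine} round $r+1$ vertices with strong paths to $v_k$. For \Cref{commit rule: vertices} this is immediate, since $\node{i}$ delivers $f+1$ such vertices and, by \Cref{fact}, every honest party eventually holds them. For \Cref{commit rule: messages}, among the $2f+1$ observed first RBC messages at least $f+1$ come from honest senders, and for an honest sender the first RBC message coincides with the value eventually delivered; hence at least $f+1$ honest round $r+1$ vertices referencing $v_k$ are (or will be) delivered by all honest parties.

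I would then settle the base case $r' = r+1$ via a no-vote impossibility argument. Since each party broadcasts a single round $r+1$ vertex and RBC enforces uniqueness, there are at most $n = 3f+1$ distinct round $r+1$ vertices; with at least $f+1$ of them carrying strong paths to $v_k$, at most $2f$ can lack one. By the validity constraint, the round $r+1$ leader vertex $v_\ell$ either has a strong path to $v_k$ (the round $r$ leader), or it references a set of $2f$ round $r+1$ non-leader vertices in $v_\ell.nvEdges$ that lack strong paths to $v_k$. In the latter case, $v_\ell$ together with those $2f$ vertices would form $2f+1$ distinct round $r+1$ vertices without strong paths to $v_k$, requiring $3f+2$ distinct sources and thus exceeding $n$. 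This contradiction forces $v_\ell$ to have a strong path to $v_k$.

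For the inductive step I would prove the stronger statement that \emph{every} valid round $s$ vertex has a strong path to $v_k$ for all $s \ge r+2$. A valid round $s$ vertex carries at least $2f+1$ strong edges to distinct round $s-1$ vertices. Applying this at $s = r+2$: since at most $2f$ round $r+1$ vertices lack a strong path to $v_k$, any $2f+1$ distinct round $r+1$ references must include at least one voter, so every round $r+2$ vertex inherits a strong path to $v_k$. A trivial induction then carries this upward---if every round $s$ vertex has a strong path to $v_k$, then so does every round $s+1$ vertex, as all of its $2f+1$ strong edges point to round $s$ vertices that do. In particular the round $r'$ leader vertex $v_\ell$ has a strong path to $v_k$ for every $r' \ge r+2$, which together with the base case completes the proof; consistency of strong-path membership across honest DAGs is guaranteed throughout by \Cref{fact}.

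The main obstacle---and the step I would treat most carefully---is justifying the base-case count of $f+1$ round $r+1$ vertices with strong paths to $v_k$ under \Cref{commit rule: messages}, where $\node{i}$ commits on first RBC messages rather than on delivered vertices. Here I would lean on the ``peek'' property that an honest broadcaster's first RBC message equals its eventual output, so that the $f+1$ honest contributors are bound to vertices that genuinely reference $v_k$ and cannot later be equivocated. The remaining steps reduce to the two counting arguments above---$2f+1$ claimed no-vote vertices versus at most $2f$ possible non-voters, and $2f+1$ strong edges forced to overlap the $\ge f+1$ voters---both of which use only $n = 3f+1$ and the DAG consistency of \Cref{fact}.
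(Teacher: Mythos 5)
Your proof is correct and takes essentially the same approach as the paper's: both establish a set of at least $f+1$ round $r+1$ vertices with strong paths to $v_k$ (handling Commit Rule 1 via the fact that an honest broadcaster's first RBC message equals its eventually delivered vertex), rule out a valid no-vote proof for the round $r+1$ leader by the same counting/quorum-intersection argument under $n = 3f+1$, and propagate the strong-path property to all rounds beyond $r+1$ through the $2f+1$ strong edges of each vertex. The only cosmetic difference is that you phrase the final step as a clean round-by-round induction, while the paper argues round $r+2$ directly and then appeals to transitivity for higher rounds.
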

\begin{proof}
    Since $\node{i}$ directly committed $v_k$ in round $r$, it must have either received the first messages for $2f+1$ round $r+1$ vertices that have a strong path to $v_k$ or delivered $f+1$ round $r+1$ vertices that have a strong path to $v_k$. In the former case, at least $f+1$ of those vertices originate from honest parties and will eventually be delivered due to the validity property of reliable broadcast. In either case, there exists a set $\HN$ of at least $f+1$ round $r+1$ vertices (with a strong path to $v_k$) that have been or will eventually be delivered by $\node{i}$. We complete the proof by showing that the statement holds for any $r' > r$.

    \textbf{Case} $r' = r + 1$: 
    If $v_{\ell} \in \HN$, we are trivially done. Otherwise, the vertices in $\HN$ are from round $r+1$ honest non-leader parties. Since $|\HN| \ge f+1$, by standard quorum intersection argument, there does not exist a set of $2f+1$ round $r+1$ vertices that do not have a strong path to $v_k$. Consequently, $L_{r+1}$ cannot propose a valid leader vertex by including reference to at least $2f+1$ round $r+1$ vertices without strong path to $v_k$. Moreover, honest parties that proposed vertices in $\HN$ have delivered $v_k$. By the agreement property of reliable broadcast, $L_{r+1}$ will eventually deliver $v_k$. Thus, if $v_{\ell}$ exists, there must exist a strong path from $v_{\ell}$ to $v_k$.
 
    
    \textbf{Case} $r' > r + 1$: Note that all vertices in $\HN$ will eventually be delivered by all honest parties and included in $DAG[r+1]$. Additionally, a round $r+2$ vertex has a strong path to $2f+1$ round $r+1$ vertices. By standard quorum intersection, this includes at least $1$ vertex in $\HN$ which has a strong path to $v_k$. Thus, all round $r+2$ vertices (including round $r+2$ leader vertex) have a strong path to $v_k$. Moreover, each round $r''> r+2$ vertex has strong paths to 
    at least $2f+1$ vertices in round $r''-1$. By transitivity, each vertex at round $r''$ has strong paths to at least $2f+1$ vertices in round $r+2$. This implies $v_{\ell}$ must have a strong path to $v_k$.
\end{proof}

\begin{claim}\label{claim: commit}
    If an honest party \( P_i \) directly commits a leader vertex $v_k$ in round \( r \) and
    an honest party $P_j$ directly commits a leader vertex \( v_\ell \) in round \( r'\ge r \),
    then $P_j$ (directly or indirectly) commits $v_k$ in round \( r \).
\end{claim}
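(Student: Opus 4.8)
The plan is to prove the statement by induction on $r'$ (for fixed $P_i$, $v_k$, and $r$), using \Cref{claim: strong path} as the key ingredient together with the agreement property of reliable broadcast. The base case $r' = r$ is immediate: since both $v_\ell$ and $v_k$ are round $r$ leader vertices, and \Cref{fact} (via RBC agreement) guarantees that the round $r$ leader vertex is unique across honest parties' DAGs, we have $v_\ell = v_k$, so $P_j$ directly commits $v_k$.

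For the inductive step, assume $r' > r$ and that the claim holds for every round $r''$ with $r \le r'' < r'$. I would examine the value of $P_j$'s \texttt{committedRound} immediately before it invokes \texttt{commit\_leader}$(v_\ell)$. Because \texttt{try\_commit} only calls \texttt{commit\_leader} when $\textit{committedRound} < v.round$, and \texttt{commit\_leader} then sets $\textit{committedRound} \gets v.round$, the variable increases monotonically and $P_j$ directly commits leader vertices in strictly increasing round order; in particular $\textit{committedRound} < r'$ at this point. If $\textit{committedRound} \ge r$, then $P_j$ previously directly committed the leader vertex of some round $r'' = \textit{committedRound}$ with $r \le r'' < r'$, and the inductive hypothesis immediately gives that $P_j$ already (directly or indirectly) committed $v_k$, completing this case.

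The heart of the argument is the remaining case $\textit{committedRound} < r$. Here the backward loop inside \texttt{commit\_leader}$(v_\ell)$ runs while the loop index exceeds \texttt{committedRound}, so since $\textit{committedRound} < r \le r'-1$ it necessarily processes round $r$. I would track the loop's running vertex $v'$: it starts as $v_\ell$ at round $r' > r$ and is only ever reassigned to $\textsc{get\_leader\_vertex}(\rho)$ for rounds $\rho$ processed before round $r$, all of which are $> r$. Hence, at the moment the loop reaches round $r$, the vertex $v'$ is a leader vertex at some round strictly greater than $r$. By \Cref{claim: strong path}, since $P_i$ directly committed $v_k$ at round $r$, every leader vertex at a round $> r$ has a strong path to $v_k$; thus $\textsc{strong\_path}(v', v_k)$ holds. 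Moreover $v' \in DAG_j$, and the \texttt{try\_add\_to\_dag} discipline together with \Cref{fact} ensures the full causal history of $v'$ — which includes the unique round $r$ leader vertex $v_k$ — lies in $DAG_j$; therefore $\textsc{get\_leader\_vertex}(r)$ returns exactly $v_k$, which is then pushed onto $leaderStack$ at Line~\ref{step:indirect-commit}, i.e.\ $P_j$ indirectly commits $v_k$.

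I expect the main obstacle to be the bookkeeping in the last case: arguing rigorously that the running vertex $v'$ is always a leader vertex at a round $> r$ when the loop reaches round $r$ (which relies on monotone, decreasing processing of rounds and on \Cref{claim: strong path} applying to \emph{any} leader vertex at a round $> r$, not merely committed ones), and confirming that $\textsc{get\_leader\_vertex}(r)$ indeed yields $v_k$ rather than some other vertex — this last point hinging on uniqueness of the round $r$ leader vertex (RBC agreement) and on causal-history completeness from \Cref{fact}. The $\textit{committedRound}\ge r$ case, by contrast, is routine once the induction is set up correctly.
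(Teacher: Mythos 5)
Your proof is correct and follows essentially the same route as the paper's: the base case $r'=r$ via \Cref{fact}, a case split on the last directly committed round ($\mathit{committedRound}$) relative to $r$, an appeal to \Cref{claim: strong path} when the backward loop of commit\_leader reaches round $r$, and an inductive argument when $P_j$ already directly committed a leader vertex at a round in $[r, r')$. If anything, your write-up is more careful than the paper's, since you explicitly track the loop's running vertex $v'$ and justify that it is always a leader vertex of a round greater than $r$ when round $r$ is processed, and that $\textsc{get\_leader\_vertex}(r)$ returns $v_k$ in $DAG_j$ --- bookkeeping the paper glosses over.
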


\begin{proof}
If $r' = r$, by Fact~\ref{fact}, $v_k = v_\ell$ and $P_j$ directly commits $v_k$. When $r' > r$, by Claim~\ref{claim: strong path}, $v_\ell$ has a strong path to $v_k$. By the code of commit\_leader, after directly committing a leader vertex in round $r'$,  $P_j$ tries to commit a leader vertex if there exists a strong path between the two leader vertices from a smaller round until it meets the last round $r'' < r'$ that directly committed a leader vertex. If $r'' < r < r'$, $P_j$ must indirectly commit $v_k$ in round $r$. If $r < r''$, by inductive argument and~\Cref{claim: strong path}, $P_j$ must indirectly commit $v_k$ after directly committing round $r''$ leader vertex.
\end{proof}

\begin{claim} \label{claim: same order} 
    Let $v_k$ and \( v'_k \) be two leader vertices consecutively directly committed by a party
     \( P_i \) in rounds \( r_i \) and \( r'_i > r_i \) respectively. 
     Let \( v_\ell \) and \( v'_\ell \) be two leader vertices consecutively directly committed by 
     party \( P_j \) in rounds \( r_j \) and \( r'_j > r_j \) respectively. 
     Then, \( P_i \) and \( P_j \) commit the same leader vertices between rounds \( \max(r_i, r_j) \) 
     and \( \min(r'_i, r'_j) \), and in the same order.
\end{claim}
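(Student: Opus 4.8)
The plan is to reduce the claim to a statement about a single deterministic ``strong-path chain'' anchored at the top of the overlapping interval, and then to invoke \Cref{fact} to argue that both parties compute the identical chain. Write $a = \max(r_i, r_j)$ and $b = \min(r'_i, r'_j)$, and assume without loss of generality that $r'_i \le r'_j$, so that $b = r'_i$ and $\node{i}$ directly commits $v'_k$ in round $b$; the symmetric case is handled identically with the roles of $\node{i}$ and $\node{j}$ swapped.

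First I would pin down the top of the interval. Since $\node{i}$ directly commits $v'_k$ in round $b = r'_i$ and $\node{j}$ directly commits $v'_\ell$ in round $r'_j \ge b$, \Cref{claim: commit} guarantees that $\node{j}$ also (directly or indirectly) commits the round-$b$ leader vertex, which by \Cref{fact} must equal $v'_k$. Thus both parties commit $v'_k$ in round $b$. The key structural observation is that \Call{commit\_leader}{} walks downward through rounds and, each time it commits a round-$r$ leader vertex, resets its reference vertex $v'$ to that vertex. Consequently, once $\node{j}$'s downward walk reaches round $b$ and commits $v'_k$ (if $r'_j > b$ this follows from \Cref{claim: strong path} applied to $v'_k$; if $r'_j = b$ it is immediate), its reference vertex coincides with $\node{i}$'s reference vertex at round $b$, namely $v'_k$. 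From round $b-1$ downward both walks are then driven by the same starting anchor and by the same \Call{strong\_path}{} predicate, which by \Cref{fact} evaluates identically in $DAG_i$ and $DAG_j$. Hence the two walks commit exactly the same leader vertices, in exactly the same order, throughout rounds $a+1,\dots,b-1$.

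It remains to match the two endpoints. At round $b$ both commit $v'_k$, as shown. At round $a$ I would argue as follows: whichever party directly committed a leader vertex in round $a$ (namely $\node{i}$ if $a = r_i$, or $\node{j}$ if $a = r_j$) did so via \Call{commit\_leader}{}, and by \Cref{claim: strong path} every leader vertex of a round above $a$ has a strong path to this round-$a$ vertex; in particular the last vertex committed by the other party's chain before reaching round $a$ has a strong path to it, so that chain also commits precisely this round-$a$ leader vertex. When $a = r_i = r_j$, both parties directly commit the round-$a$ leader vertex and \Cref{fact} forces these to be equal. This closes the argument and shows that the two sequences of committed leader vertices coincide on $[a,b]$.

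The main obstacle is the bookkeeping of the reference-vertex synchronization across the two separate \Call{commit\_leader}{} invocations, since the two parties in general start their downward walks from different vertices at different top rounds ($r'_i$ versus $r'_j$) and stop at different bottom rounds ($r_i+1$ versus $r_j+1$). The crux is to show that these differences are invisible within the overlapping window $[a,b]$: the walks synchronize at round $b$ via \Cref{claim: commit}, and neither walk has terminated before round $a$ (as $a=\max(r_i,r_j)$ lies at or above both stopping rounds, with the coincident boundary handled separately), so the portion of each walk lying in $[a,b]$ is the same deterministic strong-path chain. Reconciling a direct commit on one side with an indirect commit on the other at the boundary round $a$ is the only genuinely delicate point, and it is resolved exactly by \Cref{claim: strong path}.
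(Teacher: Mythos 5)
Your proof is correct and takes essentially the same route as the paper's: both arguments synchronize the two parties at round $\min(r'_i,r'_j)$ via \Cref{claim: commit} and \Cref{fact}, and then observe that the downward walk of commit\_leader is deterministic and evaluates identically in the two DAGs once the reference vertices coincide. You are somewhat more explicit than the paper at the two delicate points---the reference-vertex synchronization at round $b$ and the direct-versus-indirect mismatch at the boundary round $a$ (resolved via \Cref{claim: strong path})---which the paper compresses into ``due to the fixed logic of the commit\_leader code.''
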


\begin{proof}
If $r'_i < r_j$ or $r'_j < r_i$, then there are no rounds between $\max(r_i, r_j)$ and $\min(r'_i, r'_j)$ and we are trivially done. Otherwise, assume without loss of generality that \( r_i \leq r_j < r_i' \leq r_j' \). By Claim~\ref{claim: commit}, both \( P_i \) and \( P_j \) will (directly or indirectly) commit the same leader vertices in round \( \min(r_i', r_j') \). Assume without loss of generality that \( \min(r'_i, r'_j) = r'_i \). By Fact~\ref{fact}, both \( {DAG}_i \) and \( {DAG}_j \) will contain \( v'_k \) and all vertices have a path from \( v'_k \) in \( {DAG}_i \). According to the commit\_leader procedure, after directly or indirectly committing the leader vertex \( v'_k \), each party will attempt to indirectly commit leader vertices from earlier rounds until reaching a round in which they have previously directly committed a leader vertex. Consequently, both \( P_i \) and \( P_j \) will indirectly commit all leader vertices from rounds \( \min(r'_i, r'_j) \) down to \( \max(r_i, r_j) \). Furthermore, due to the fixed logic of the commit\_leader code, both parties will commit the same leader vertices between rounds \( \min(r'_i, r'_j) \) and \( \max(r_i, r_j) \) in the same order.
\end{proof}

By inductively applying~\Cref{claim: same order} for every pair of honest parties, we get the following:
\begin{corollary} \label{corollary: same order} 
    Honest parties commit the same leader vertices in the same order.
\end{corollary}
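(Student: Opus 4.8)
The plan is to lift the window-local, pairwise agreement of \Cref{claim: same order} to a statement quantified over all honest parties and all rounds. First I would observe that the order in which a party commits leader vertices is exactly the round order: in \textsc{commit\_leader} the directly committed vertex and every indirectly committed ancestor are pushed onto $leaderStack$ with the lowest round ending up on top, and \textsc{order\_vertices} then pops the stack, so leader vertices (and their causal histories) are emitted in strictly increasing round order while $committedRound$ grows monotonically. Hence, for each honest party, the committed sequence is fully determined by the set of rounds at which it commits a leader vertex together with which vertex it commits at each such round, and the shared round ordering already guarantees a common relative order. It therefore suffices to show, for any two honest parties $\node{i}$ and $\node{j}$, that (i) they commit the same leader vertex at every round at which both commit one, and (ii) they eventually commit exactly the same set of leader vertices.

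For the per-round agreement, I would fix a round $r$ at which both $\node{i}$ and $\node{j}$ commit a leader vertex. Each such commit, whether direct or indirect, occurs inside a window bounded by two consecutive direct commits of that party: there is a direct-commit round $r_i' \ge r$ and an immediately preceding direct-commit round $r_i < r$ for $\node{i}$ (and analogously $r_j, r_j'$ for $\node{j}$), where the round-$r$ vertex is committed directly when $r = r_i'$ and indirectly during the traceback from $r_i'$ otherwise. Since $\max(r_i, r_j) < r \le \min(r_i', r_j')$, the round $r$ lies in the overlap range to which \Cref{claim: same order} applies, so $\node{i}$ and $\node{j}$ commit the identical round-$r$ leader vertex. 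Iterating over all common commit rounds yields agreement at every shared round, and the monotone round ordering pins down the relative order identically for both parties.

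For agreement on the committed set, I would invoke \Cref{claim: commit} together with liveness: if $\node{i}$ directly commits a round-$r$ leader vertex $v_k$ and $\node{j}$ later directly commits a leader vertex in some round $r' \ge r$, then $\node{j}$ also (directly or indirectly) commits $v_k$. Because the protocol guarantees that every honest party's $committedRound$ grows without bound, each honest party eventually performs a direct commit in a round exceeding $r$, so every leader vertex directly committed by one honest party is eventually committed by all; combined with the per-round agreement above, all honest parties produce identical committed sequences, and generalizing over all pairs $(\node{i}, \node{j})$ gives the corollary. The main obstacle I anticipate is the careful handling of boundary cases: the first direct commit of a party has no preceding direct commit, which I would accommodate by treating the initial value $committedRound = 0$ as a virtual direct commit at round $0$ so that the hypothesis of \Cref{claim: same order} is always satisfied, together with checking the degenerate window in which $r$ is itself a direct-commit round. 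I also note that the set-agreement step leans on the protocol's separately established liveness guarantee, which I would cite rather than reprove here.
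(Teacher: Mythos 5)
Your proof is correct and takes essentially the same route as the paper, whose entire argument is to apply \Cref{claim: same order} inductively over every pair of honest parties and their windows of consecutive direct commits (with the initial $committedRound = 0$ playing exactly the role of your virtual direct commit at round $0$). The one divergence is your final set-agreement step via liveness of $committedRound$, which is unnecessary for this corollary: it is invoked in the paper purely as a safety statement (consistency of the committed sequences, used in \Cref{lemma: total order}), which your per-round agreement argument already establishes, while eventual commitment by all parties is proved separately in \Cref{lemma: agreement}.
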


\begin{lemma}[Total order] \label{lemma: total order}
    If an honest party $P_i$ outputs $a\_deliver_i$ $(b, r, P_k)$ before 
    $a\_deliver_i(b', r', P_\ell)$, then no honest party $P_j$ outputs 
    $a\_deliver_j(b', r', P_\ell)$ before $a\_deliver_j(b, r, P_k)$. 
\end{lemma}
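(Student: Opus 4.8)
The plan is to prove total order by establishing the stronger fact that every honest party produces the \emph{same} global sequence of $a\_deliver$ outputs; the lemma then follows immediately, since if all honest parties agree on the relative order of the two relevant vertices, then $P_i$ delivering one before the other forces every $P_j$ to do likewise. By \Cref{fact}, a round number together with a source uniquely identifies a vertex in the eventually-common DAG, so each output $a\_deliver(b,r,P_k)$ corresponds to a unique vertex $v$ with block $b$, round $r$, and source $P_k$. It therefore suffices to reason about the relative order in which vertices are output, rather than about the $(b,r,P)$ triples themselves.

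First I would describe the delivery mechanism precisely. Outputs are produced in \emph{batches} by the order\_vertices procedure (\Cref{func:order-vertices}): each time a leader vertex is directly committed, commit\_leader pushes the directly-committed leader together with the leaders it indirectly commits, arranged so that order\_vertices processes them in increasing round order; for each processed leader $u$ it then delivers, in a fixed deterministic order, exactly the set of not-yet-delivered vertices reachable from $u$ under the path relation. Consequently every delivered vertex $v$ is attributed to a unique leader $L(v)$ — the earliest-committed leader vertex having a path to $v$ — and $v$ is output during the batch of $L(v)$.

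The key step is to show that both the global sequence of committed leaders and the batch attributed to each leader are identical across honest parties. The first is exactly \Cref{corollary: same order}. For the second, I would proceed by induction on the position $t$ of a leader $u_t$ in the common commit sequence. When $u_t$ enters the DAG, the try\_add\_to\_dag rule guarantees its entire causal history is already present (a vertex is added only after all its predecessors are), and \Cref{fact} guarantees that this causal history and the path relation out of $u_t$ coincide for all honest parties. Combined with the inductive hypothesis that the batches of $u_1,\dots,u_{t-1}$ — and hence the value of the set \textit{deliveredVertices} at the moment $u_t$ is processed — are identical across parties, it follows that the set of vertices output in $u_t$'s batch, and their fixed deterministic internal ordering, are the same for every honest party. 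Hence $L(v)$ is well defined and party-independent.

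Finally I would conclude with a two-case analysis on the two vertices $v,v'$ underlying the outputs in the statement. If $L(v)\neq L(v')$, their relative order equals the commit order of their attributing leaders, which is common to all parties by \Cref{corollary: same order}. If $L(v)=L(v')$, both are delivered within the same batch, whose contents and deterministic internal order are party-independent by the previous paragraph. In either case $v$ and $v'$ are output in the same relative order by $P_i$ and $P_j$, establishing total order. I expect the main obstacle to be the inductive consistency argument of the third paragraph: making rigorous that the ``already delivered'' set agrees at the logically (not temporally) corresponding point of each party's execution, which requires carefully coupling the common leader order from \Cref{corollary: same order} with the causal-history convergence and path-preservation guarantees of \Cref{fact}.
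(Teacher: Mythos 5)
Your proof is correct and takes essentially the same route as the paper's: both reduce total order to \Cref{corollary: same order} (all honest parties commit the same leader vertices in the same order), \Cref{fact} (identical causal histories for each committed leader), and the determinism of order\_vertices. The paper states this in a four-sentence sketch; your batch-attribution induction merely makes explicit the step the paper leaves implicit, namely that the \textit{deliveredVertices} sets coincide at logically corresponding points of each party's execution.
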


\begin{proof}
    By~\Cref{corollary: same order}, all honest parties commit the same leader vertices in the same order. According to the logic of order\_vertices, parties process committed leader vertices in that order and a\_deliver all vertices in their causal history based on a predefined rule. By~\Cref{fact}, every honest party has an identical causal history in their DAG for each committed leader. This establishes the lemma.
\end{proof}

\begin{lemma}[Agreement] \label{lemma: agreement}
    If an honest party $\node{i}$ outputs $a\_deliver_i$ $(b,r,P_k)$, then every other honest party $P_j$ eventually outputs $a\_deliver_j$ $(b,r,P_k)$.
\end{lemma}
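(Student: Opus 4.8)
The plan is to trace the $a\_deliver$ event back to an underlying \emph{direct} commit and then invoke the already-established agreement machinery on committed leader vertices. First I would observe that $\node{i}$ emits $a\_deliver_i(b,r,P_k)$ only from within order\_vertices, which is invoked at the end of commit\_leader; and commit\_leader is itself reached only from try\_commit when $\node{i}$ \emph{directly} commits some leader vertex $v^{*}$ in a round $r^{*}$ (via Commit Rule~\ref{commit rule: messages} or Commit Rule~\ref{commit rule: vertices}). Hence there is a leader vertex $w$, pushed onto \textit{leaderStack} during this call with $w.round \le r^{*}$, such that $path(w,v)=\True$ and $v$ (the round-$r$ vertex of $P_k$ carrying block $b$) lies in the causal history $\node{i}$ delivers for $w$.

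Next I would show that every honest $P_j$ eventually directly commits a leader vertex of round $\ge r^{*}$, which in turn forces it to commit $v^{*}$. From the proof of \Cref{claim: strong path}, the direct commit of $v^{*}$ guarantees a set $\HN$ of at least $f+1$ round-$(r^{*}{+}1)$ vertices proposed by honest parties, each with a strong path to $v^{*}$. By validity of reliable broadcast every honest party eventually delivers all of $\HN$ and, by \Cref{fact}, adds them to its local DAG. Thus $P_j$ eventually holds at least $f+1$ round-$(r^{*}{+}1)$ vertices with a strong path to $v^{*}$, so try\_add\_to\_dag invokes try\_commit at Line~\ref{line: commit2} with threshold $f+1$ and $P_j$ directly commits $v^{*}$---unless $P_j$ has already directly committed a leader vertex of round $\ge r^{*}$. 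In that latter case \Cref{claim: commit} applies and $P_j$ commits $v^{*}$ (directly or indirectly) regardless. Either way $v^{*}$, and hence the entire prefix of leader vertices down to $P_j$'s previously committed round---which includes $w$---is committed by $P_j$.

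Then I would appeal to \Cref{corollary: same order}: honest parties commit the same leader vertices in the same order, so $P_j$ commits exactly the same leader vertex $w$ that caused $\node{i}$ to deliver $v$. By \Cref{fact}, $P_j$'s DAG eventually contains $v$ with $path(w,v)=\True$ holding identically to $\node{i}$'s view, so $v$ is placed in \textit{verticesToDeliver} when $P_j$ processes $w$ in order\_vertices; the deterministic per-leader ordering then makes $P_j$ output $a\_deliver_j(b,r,P_k)$, while the \textit{deliveredVertices} bookkeeping ensures it does so exactly once.

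The main obstacle I anticipate is the second paragraph: rigorously guaranteeing that $P_j$ eventually performs a direct commit of a round $\ge r^{*}$ leader vertex. This needs the liveness content of reliable broadcast (validity propagating the honest vertices of $\HN$) together with \Cref{fact} to move them into $P_j$'s DAG, and a careful case split on whether $P_j$ is ``behind'' (\textit{committedRound} $< r^{*}$, so Commit Rule~\ref{commit rule: vertices} fires directly) or ``ahead'' (\textit{committedRound} $\ge r^{*}$, so \Cref{claim: commit} supplies the commit). A secondary point of care is confirming, through the buffering logic of try\_add\_to\_dag, that $v$ itself and its full causal history are eventually admitted to $P_j$'s DAG, so that the $path$ predicate evaluates identically at both parties.
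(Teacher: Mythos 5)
Your proposal is correct and takes essentially the same route as the paper's own proof: trace the delivery back to a directly committed leader $v^{*}$ whose stack entry $w$ causally contains $v$, use the $f+1$ round-$(r^{*}{+}1)$ vertices with strong paths to $v^{*}$ together with \Cref{fact} to force $P_j$ to commit $v^{*}$ (hence $w$), and conclude via the order\_vertices logic and causal-history equality. If anything you are slightly more explicit than the paper—handling the case where $P_j$ has already committed a round $\ge r^{*}$ via \Cref{claim: commit} and \Cref{corollary: same order}—and the only blemish is calling all of $\HN$ honest-proposed and delivered "by validity," which is accurate only under Commit Rule~\ref{commit rule: messages}; under Commit Rule~\ref{commit rule: vertices} propagation to $P_j$ rests on RBC agreement, i.e., exactly \Cref{fact}, which you invoke anyway.
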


\begin{proof}
   Since $\node{i}$ outputs a\_deliver$_i(v_i.block, v_i.round,$ $v_i.source)$, it follows from the order\_vertices logic there exists a leader vertex $v_k$ committed either directly or indirectly by $\node{i}$ such that $v_i$ is in the causal history of $v_k$. If $v_k$ was indirectly committed, let $v_\ell$ be the vertex directly committed by $\node{i}$ that led to the commitment of $v_k$; otherwise, $v_\ell = v_k$. Let $r$ be the round of $v_\ell$. Since $v_\ell$ has been directly committed by $\node{i}$, there exists a set of at least $f+1$ round $r+1$ vertices with strong paths to $v_\ell$. By~\Cref{fact}, $\node{j}$ will eventually add those $f+1$ vertices to $DAG_j$, including all the causal histories of these vertices. Consequently, $\node{j}$ will commit $v_\ell$. By the logic of order\_vertices, $\node{j}$ will order all vertices in the causal history of $v_\ell$, which includes $v_i$. Thus, $\node{j}$ eventually outputs $a\_deliver_j(v_i.block, v_i.round,$ $v_i.source)$.
\end{proof}

\ignore{
\qianyu{Updated.}\nibesh{The proof is incorrect. $v_\ell$ could be in round lower than $v_k$}\nibesh{Updated proof above.}
\begin{proof}
    If an honest party \( P_i \) outputs a\_deliver$_i(v_i.block, v_i.round,$ $v_i.source)$, 
    then according to the order\_vertices logic, there exists a leader vertex $v_k$ committed by \( P_i \) 
    such that \( v_i \) is in the causal history of $v_k$. By Fact~\ref{fact}, the causal histories of $v_k$ 
    in \( DAG_i \) and \( DAG_j \) are identical.
    Therefore, we need only show that \( P_j \) eventually commits the leader vertex $v_k$. Let \( v_\ell \) be 
    the leader vertex with the lowest round number greater than \( v_i.round \) that \( P_i \) directly commits. 
    For \( P_i \) to commit \( v_\ell \), it must deliver at least \( f+1 \) vertices, 
    one of which will trigger commit\_leader and cause \( v_\ell \) to be committed.
    By Fact~\ref{fact}, \( P_j \) will eventually add these \( f+1 \) vertices to \( DAG_j \) and will call commit\_leader with some vertex \( v \), leading to the commitment of \( v_\ell \). By Fact~\ref{fact} again, the causal histories 
    of these \( f+1 \) vertices in \( DAG_i \) and \( DAG_j \) are equivalent. Consequently, \( P_j \) directly commits \( v_\ell \) as well.
    Since the causal history of \( v_\ell \) in \( DAG_i \) is equivalent to that in \( DAG_j \), \( P_j \) will also commit $v_k$. 
    Therefore, when \( P_j \) orders the causal history of $v_k$, it outputs a\_deliver$_j(v_i.block, v_i.round, v_i.source)$.
\end{proof}
}

\begin{lemma}[Integrity]
    For every round $r\in \mathbb{N}$ and party $P_k \in \mathcal{P}$, 
    an honest party $P_i$ outputs $a\_deliver_i(b,r,P_k)$ at most once regardless of $b$.
\end{lemma}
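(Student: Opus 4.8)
The plan is to reduce the statement to two observations: first, that every \Call{a\_deliver}{} output by an honest party $P_i$ is produced inside the \textsc{order\_vertices} procedure and is guarded by the $deliveredVertices$ set; and second, that for a fixed round $r$ and source $P_k$, the local DAG of $P_i$ contains at most one vertex. Integrity then follows because each qualifying vertex is output at most once and the $(r, P_k)$ pair pins down a unique vertex.

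First I would observe that the only place $P_i$ outputs \Call{a\_deliver}{} is within \textsc{order\_vertices} (\Cref{func:order-vertices}). There, the delivered vertex $v'$ is drawn from $\bigcup_{r>0} DAG_i[r]$ and is required to satisfy $v' \notin deliveredVertices$ at the moment of output; immediately afterwards $v'$ is inserted into $deliveredVertices$. Since $deliveredVertices$ is a persistent, monotonically growing set across all invocations of \textsc{order\_vertices}, each distinct DAG vertex $v'$ can trigger at most one \Call{a\_deliver}{} output. Hence \Call{a\_deliver}{} is output at most once per vertex.

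Next I would invoke \Cref{fact}(ii) with $i = j$: within $P_i$'s own DAG, if $v, v' \in DAG_i[r]$ and $v.source = v'.source$, then $v = v'$. This uniqueness follows from the integrity property of the underlying reliable broadcast, since each party proposes a single vertex per round and RBC delivers at most one message per sender. Consequently, for any fixed round $r$ and party $P_k$, there is at most one vertex $v \in \bigcup_{r'>0} DAG_i[r']$ with $v.round = r$ and $v.source = P_k$.

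Combining the two observations completes the argument: an output \Call{a\_deliver}{$b, r, P_k$} corresponds to delivering the unique DAG vertex $v$ with $v.round = r$ and $v.source = P_k$ (whose block is necessarily the $b$ carried by that vertex). Since that vertex is delivered at most once and no other vertex with the same round and source exists in $P_i$'s DAG, the output \Call{a\_deliver}{$\cdot, r, P_k$} occurs at most once, regardless of $b$. The argument is essentially bookkeeping rather than a deep combinatorial claim; the only point requiring care will be confirming that $deliveredVertices$ persists across successive commits, so that vertices appearing in the causal history of several committed leader vertices are not re-delivered — which is precisely the role of the membership test in \textsc{order\_vertices}.
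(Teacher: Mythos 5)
Your proof is correct and takes essentially the same route as the paper's: the paper likewise reduces Integrity to the Integrity property of the underlying reliable broadcast, observing that a vertex can be a\_delivered only after it enters $DAG_i$ via an r\_deliver event, so each $(r, P_k)$ pair corresponds to at most one vertex. Your explicit treatment of the $deliveredVertices$ guard in order\_vertices (ensuring no vertex is output twice across successive commits) is a bookkeeping step the paper leaves implicit, and including it makes the argument more self-contained.
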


\begin{proof}
    An honest party $P_i$ outputs a\_deliver$_i(v.block, v.round, v.$ $source)$ 
    only when vertex $v$ is in $DAG_i$.
    Note that $v$ is added with to $P_i$'s DAG upon the reliable broadcast 
    r\_deliver$_i(v.block, v.round, v.$ $source)$ event.
    Therefore, the proof follows from the Integrity property of reliable broadcast.
\end{proof}

\mypara{Validity.} We rely on GST to prove validity. Additionally, we use RBC protocol of Bracha~\cite{bracha1987asynchronous} to establish validity. Bracha's RBC protocol operates in $3$ communication steps and ensures the RBC properties at all times. After GST, it offers the following stronger guarantees:

\begin{property} \label{property: 3Delta}
    Let $t$ be a time after GST. If an honest party reliably broadcasts a message $M$ at time $t$, then all honest parties will deliver $M$ by time $t+3\Delta$.
\end{property}

\begin{property} \label{property: 2Delta} 
    Let $t_g$ denote the GST. If an honest party delivers message M at time $t$, then all honest parties deliver M by time $\max(t_g, t)+2\Delta$.
\end{property}

\begin{claim} \label{claim: timeout} 
    Let $t_g$ denote the GST. If an honest party receives $2f+1$ $\tuple{\Timeout, r}$ messages at time $t$, then all honest parties will have received $2f+1$ $\tuple{\Timeout, r}$ messages by time \( \max(t_g, t)+2\Delta \).
\end{claim}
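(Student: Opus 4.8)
The plan is to rely on the Bracha-style amplification of \Timeout messages together with the post-GST delivery guarantee for plain multicast messages. Since the observing honest party collected $2f+1$ $\tuple{\Timeout, r}$ messages by time $t$ and at most $f$ parties are Byzantine, at least $f+1$ of these messages were multicast by honest parties, and each such message was necessarily sent no later than $t$. First I would invoke the synchronous delivery bound: after GST, a message multicast by an honest party at time $s$ reaches every honest party by time $\max(t_g, s) + \Delta$. Applying this with $s \le t$ shows that all honest parties receive these $\ge f+1$ honest $\tuple{\Timeout, r}$ messages by time $\max(t_g, t) + \Delta$.

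Next, I would trigger the amplification rule (Line~\ref{line: f+1 timeout}): upon receiving $f+1$ $\tuple{\Timeout, r}$ messages, every honest party multicasts its own $\tuple{\Timeout, r}$ message if it has not already done so. Hence by time $\max(t_g, t) + \Delta$ all honest parties have multicast $\tuple{\Timeout, r}$, and since $n \ge 3f+1$ there are at least $2f+1$ honest parties. Applying the delivery bound a second time to these $\ge 2f+1$ honest timeout messages—each sent by time $\max(t_g, t) + \Delta$—shows they are all received by every honest party by time $\max(t_g, t) + \Delta + \Delta = \max(t_g, t) + 2\Delta$. Therefore every honest party has received $2f+1$ $\tuple{\Timeout, r}$ messages by time $\max(t_g, t) + 2\Delta$, as claimed.

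The main obstacle is bookkeeping the two timing layers correctly: I must ensure that the $f+1$ honest timeout messages used in the first step were genuinely multicast to all parties (they were, since honest parties always multicast \Timeout, whether the message originated from their own timer expiry or from the amplification rule) and were sent by time $t$, and that the $\max(t_g, \cdot)$ clamp composes so that the two $\Delta$ hops stack to exactly $2\Delta$ independently of how far before GST any individual message happened to be sent. A secondary check, needed to conclude that the amplified round actually produces $2f+1$ received messages, is that the honest population is at least $2f+1$; this follows immediately from $n \ge 3f+1$.
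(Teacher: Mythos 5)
Your proof is correct and follows essentially the same route as the paper's: extract the $f+1$ honest \Timeout{} senders from the $2f{+}1$ received messages, use the post-GST $\Delta$-delivery bound to get these to all honest parties by $\max(t_g,t)+\Delta$, invoke the amplification rule so every honest party has multicast $\tuple{\Timeout, r}$ by that time, and apply the delivery bound once more to conclude at $\max(t_g,t)+2\Delta$. Your explicit bookkeeping of the $\max(t_g,\cdot)$ clamp and of the honest population size $n-f \ge 2f+1$ only makes precise what the paper leaves implicit.
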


\begin{proof}
If an honest party $\node{i}$ receives $2f+1$ $\tuple{\Timeout, r}$ messages at time $t$, at least \( f+1 \) of these messages must have been sent by honest parties who sent the $\Timeout$ messages to all other parties. Consequently, all honest parties must have received at least \( f+1 \) $\tuple{\Timeout, r}$ messages by time \( \max(t_g, t) + \Delta \). Since, honest parties send $\tuple{\Timeout, r}$ upon receiving  \( f+1 \) $\tuple{\Timeout, r}$ messages (if they have not already done so), all honest parties will have sent their $\tuple{\Timeout, r}$ by time \( \max(t_g, t) + \Delta \). Therefore, all honest parties will receive  $2f+1$ $\tuple{\Timeout, r}$ by time \( \max(t_g, t) + 2\Delta \).
\end{proof}

\begin{claim} \label{claim: optimistic}
    Let $t_g$ denote the GST and $\node{i}$ be the first honest party to enter round $r$. If $\node{i}$ enters round $r$ at time $t$ via delivering round $r-1$ leader vertex, then all honest parties will enter round $r$ or higher by $\max(t_g, t) + 2\Delta$.
\end{claim}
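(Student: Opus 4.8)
The plan is to reduce the claim to the $2\Delta$ delivery guarantee of the underlying RBC after GST (\Cref{property: 2Delta}): the event of $\node{i}$ entering round $r$ via the round $r-1$ leader vertex is witnessed entirely by a set of RBC-delivered vertices that $\node{i}$ holds in its local DAG at time $t$, so once every honest party holds those same vertices, each can fire the advance/jump rule for round $r$.

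First I would unpack what it means for $\node{i}$ to enter round $r$ at time $t$ via the round $r-1$ leader vertex. By the advance-round condition (\Cref{line: advance round}), at time $t$ party $\node{i}$ has $|DAG_i[r-1]| \ge 2f+1$ and holds the round $r-1$ leader vertex $v_\ell$ in $DAG_i[r-1]$. Every vertex in $DAG_i[r-1]$ was added via try\_add\_to\_dag upon being RBC-delivered, so $\node{i}$ has RBC-delivered at least $2f+1$ round $r-1$ vertices, including $v_\ell$, by time $t$. Because a vertex is installed in the DAG only after its entire causal history is present, $\node{i}$ has also RBC-delivered that causal history by time $t$; in particular, if $v_\ell$ omits a reference to the round $r-2$ leader vertex, the vertices in $v_\ell.nvEdges$ (whose delivery $\node{i}$ awaited in order to validate $v_\ell$) are among them.

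Next I would apply \Cref{property: 2Delta} to each such delivered vertex: since $\node{i}$ delivered it at some time $\le t$, every honest party delivers it by $\max(t_g,t)+2\Delta$. Combined with \Cref{fact}, which guarantees that honest DAGs converge and agree on causal histories, every honest party can install all of these vertices---the $2f+1$ round $r-1$ vertices, the leader vertex $v_\ell$, and their causal histories (including $v_\ell.nvEdges$, so that the leader-vertex validity check passes)---into its own DAG by time $\max(t_g,t)+2\Delta$. Finally I would close the round-entry step. Fix any honest party $\node{j}$. If $\node{j}$ is already in a round $\ge r$ by the deadline, the claim holds for it. Otherwise $\node{j}$ sits in some round $<r$, and by the previous step it has, by $\max(t_g,t)+2\Delta$, at least $2f+1$ round $r-1$ vertices and the leader vertex $v_\ell$ in $DAG_j[r-1]$; this is exactly the trigger of \Cref{line: advance round} (equivalently, the jump-to-$r$ condition), so $\node{j}$ advances or jumps to round $r$. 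The only delicate subcase is $\node{j}=L_r$, whose advance is gated by an additional wait-until for either the round $r-1$ leader vertex or a set of $2f$ round $r$ no-vote vertices; since $\node{j}$ holds $v_\ell$, the first disjunct is satisfied and $L_r$ is not blocked. Hence every honest party enters round $r$ or higher by $\max(t_g,t)+2\Delta$.

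I expect the main obstacle to be the gap between RBC-delivering a vertex and actually installing it in the DAG, since try\_add\_to\_dag succeeds only once the full causal history is present and, for a leader vertex, only once its no-vote edges are delivered and checked. The resolution is to exploit that $\node{i}$ had already cleared exactly these prerequisites by time $t$; the same delivered set therefore suffices for every honest party, with \Cref{property: 2Delta} transferring the timing and \Cref{fact} transferring the structural agreement.
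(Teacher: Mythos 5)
Your proof is correct and follows essentially the same route as the paper's: invoke \Cref{property: 2Delta} to conclude that the $2f+1$ round $r-1$ vertices (including the leader vertex) delivered by $\node{i}$ at time $t$ are delivered by all honest parties by $\max(t_g,t)+2\Delta$, which triggers the advance-round rule for round $r$ unless a party is already in a higher round. The paper's proof is just a terser version of this; your additional care about DAG installation (causal histories, nvEdges) and the $\node{j}=L_r$ wait-until subcase fills in details the paper leaves implicit, and both are resolved exactly as you argue.
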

\begin{proof}
    Observe that $\node{i}$ must have delivered $2f+1$ round $r-1$ vertices along with round $r-1$ leader vertex by time $t$. By~\Cref{property: 2Delta}, all honest parties must have delivered $2f+1$ round $r-1$ vertices along with round $r-1$ leader vertex by $\max(t_g, t) + 2\Delta$. Thus, all honest parties will enter round $r$ by $\max(t_g, t) + 2\Delta$ if they have not already entered a higher round.
\end{proof}

\begin{claim} \label{claim: pessimistic}
    Let $t_g$ denote the GST and $\node{i}$ be the first honest party to enter round $r$. If $\node{i}$ enters round $r$ at time $t$ by receiving $2f+1$ $\tuple{\Timeout, r-1}$, then (i) all honest parties (except $L_r$ when $\node{i} \neq L_r$) enter round $r$ or higher by $\max(t_g, t) + 2\Delta$, and (ii) $L_r$ (if honest and $P_i \neq L_r$) enters round $r$ or higher by $\max(t_g, t) + 5\Delta$.
\end{claim}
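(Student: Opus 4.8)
The plan is to prove the two parts in turn, leaning on the timeout-propagation bound of \Cref{claim: timeout} and the two post-GST delivery bounds \Cref{property: 2Delta,property: 3Delta} for Bracha's RBC, together with the DAG-consistency guarantees of \Cref{fact}.

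For part (i), I would first note that since $\node{i}$ advances to round $r$ through the timeout path at time $t$, at that moment it has $|DAG_i[r-1]|\ge 2f+1$ and has received $2f+1$ $\tuple{\Timeout, r-1}$ messages. \Cref{claim: timeout} then gives that every honest party receives $2f+1$ $\tuple{\Timeout, r-1}$ by $\max(t_g,t)+2\Delta$. In parallel, each of the $2f+1$ round $r-1$ vertices of $DAG_i[r-1]$, along with its causal history, was RBC-delivered by $\node{i}$ no later than $t$, so \Cref{property: 2Delta} ensures that every honest party RBC-delivers and adds these same $2f+1$ vertices to its local DAG by $\max(t_g,t)+2\Delta$. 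Hence every honest party meets the basic round $r-1$ advance predicate by that time. Every honest party other than $L_r$ carries no extra wait-until clause, so upon meeting the predicate it advances to round $r$ or higher at once (the predicate fires for any $r-1\ge round$, so a lagging party jumps directly), which proves (i); when $\node{i}=L_r$ the statement is immediate for $L_r=\node{i}$ since $t\le\max(t_g,t)+2\Delta$.

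For part (ii), assume $\node{i}\neq L_r$ and $L_r$ honest. By part (i), $L_r$ already satisfies the basic advance predicate by $\max(t_g,t)+2\Delta$, so the only remaining obstacle is its wait-until clause: either the round $r-1$ leader vertex appears in its DAG, or it RBC-delivers $2f$ round $r$ vertices having no strong edge to $L_{r-1}$. Let $k$ denote the number of honest parties other than $L_r$ that enter round $r$ by $\max(t_g,t)+2\Delta$; part (i) gives $k\ge n-f-1\ge 2f$. Each immediately RBC-broadcasts its round $r$ vertex. I would then split these $k$ vertices into the $a$ vertices with no strong edge to $L_{r-1}$ and the remaining $b=k-a$ that do reference $L_{r-1}$, and run a dichotomy. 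If $a\ge 2f$, then by \Cref{property: 3Delta} $L_r$ RBC-delivers these $\ge 2f$ non-referencing vertices within $3\Delta$ of broadcast, i.e.\ by $\max(t_g,t)+5\Delta$, satisfying the second disjunct. If $a<2f$, then $b\ge 1$, so some honest party had already added the round $r-1$ leader vertex (with its full causal history) to its DAG by $\max(t_g,t)+2\Delta$; applying \Cref{property: 2Delta} to that vertex and its history shows $L_r$ can add it to its own DAG by $\max(t_g,t)+4\Delta$, satisfying the first disjunct. Either way $L_r$ advances by $\max(t_g,t)+5\Delta$.

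The hard part is the dichotomy in part (ii). A timeout-triggered entry into round $r$ does not by itself determine whether honest round $r$ vertices reference the round $r-1$ leader vertex, since a party can collect $2f+1$ $\tuple{\Timeout, r-1}$ messages (e.g.\ through amplification) while still holding that leader vertex; consequently neither disjunct of the wait-until clause is guaranteed in isolation. The counting bound $k\ge 2f$ is precisely what closes this gap: whenever too few non-referencing vertices exist to form the no-vote set, at least one honest referencing vertex witnesses that the round $r-1$ leader vertex is present and, by \Cref{property: 2Delta}, reaches $L_r$ in time. A secondary subtlety I would be careful about is that the second disjunct needs only RBC delivery of the $2f$ vertices (hence the $3\Delta$ term and the $5\Delta$ total), whereas the first disjunct requires the leader vertex's entire causal history to be in the DAG (hence the $2\Delta+2\Delta=4\Delta$ term).
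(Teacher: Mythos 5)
Your proof is correct and takes essentially the same route as the paper's: part (i) via \Cref{claim: timeout} and \Cref{property: 2Delta}, and part (ii) via a two-case analysis in which the no-vote disjunct is closed by \Cref{property: 3Delta} (yielding $\max(t_g,t)+5\Delta$) and the leader-vertex disjunct by applying \Cref{property: 2Delta} a second time (yielding $\max(t_g,t)+4\Delta$). The only difference is cosmetic: you split on the count of non-referencing round-$r$ vertices ($a \ge 2f$ versus $a < 2f$, using $k \ge n-f-1 \ge 2f$), whereas the paper splits on whether some honest party delivered the round $r-1$ leader vertex by $\max(t_g,t)+2\Delta$ — contrapositive phrasings of the same dichotomy.
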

\begin{proof}
    Observe that $\node{i}$ must have delivered $2f+1$ round $r-1$ vertices and received $2f+1$ $\tuple{\Timeout, r-1}$ messages by time $t$. By~\Cref{property: 2Delta}, all honest parties must have delivered $2f+1$ round $r-1$ vertices by $\max(t_g, t) + 2\Delta$. Furthermore, by~\Cref{claim: timeout}, all honest parties must have received $2f+1$ $\tuple{\Timeout, r-1}$ by $\max(t_g, t) + 2\Delta$. Thus, all honest parties (except $L_r$) will enter round $r$ by $\max(t_g, t) + 2\Delta$ if they have not already entered a higher round. This proves part (i) of the claim.

    Observe that if no honest party delivered round $r-1$ leader vertex $v_\ell$ by $\max(t_g, t) + 2\Delta$, each honest party $\node{i}$ (excluding $L_r$) will reliably broadcast their round $r$ vertices without strong path to $v_\ell$. Thus, $L_r$ will deliver these $2f$ vertices by time $\max(t_g, t) + 5\Delta$. On the other hand, if at least one honest party delivered round $r-1$ leader vertex by $\max(t_g, t) + 2\Delta$, by~\Cref{property: 2Delta}, $L_r$ will deliver round $r-1$ leader vertex by $\max(t_g, t) + 4\Delta$. Thus, $L_r$ enters round $r$ by $\max(t_g, t) + 5\Delta$ if it has not already entered a higher round. This proves part (ii) of the claim.
\end{proof}

\begin{claim} \label{claim: keep entering}
    All honest parties keep entering higher rounds.
\end{claim}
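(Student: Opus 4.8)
The plan is to prove the statement by contradiction, reducing the global liveness property to a single-round progress step. I would suppose that rounds are bounded and let $r$ be the largest round ever entered by any honest party. Invoking \Cref{claim: optimistic} and \Cref{claim: pessimistic} — which guarantee that once the first honest party enters round $r$, every honest party eventually enters a round $\ge r$ — I conclude that all honest parties eventually sit in round exactly $r$, since none can exceed $r$ by maximality. The goal then becomes showing that from this configuration some honest party is forced to enter round $r+1$, contradicting maximality. Note this part needs no GST: the catch-up claims and the arguments below only use eventual RBC delivery, finite local timers, and the timeout amplification.

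The heart of the argument is the progress step. Since all honest parties are in round $r$, each broadcasts a round $r$ vertex via RBC; by RBC validity and agreement together with \Cref{fact}, every honest party eventually delivers the $\ge 2f+1$ honest round $r$ vertices, so eventually $|DAG_i[r]| \ge 2f+1$ for all honest $\node{i}$. I then split on whether the round $r$ leader vertex is ever delivered. In the \textbf{optimistic case}, if some honest party delivers the round $r$ leader vertex (including, if needed, its $nvEdges$), RBC agreement propagates it to all, so the advance condition on Line~\ref{line: advance round} fires for every non-leader honest party, and $L_{r+1}$'s wait is discharged by that same leader vertex; hence all advance to $r+1$.

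In the \textbf{pessimistic case}, no honest party ever delivers the round $r$ leader vertex. Then each honest party's finite timer expires while still in round $r$ without having seen $L_r$, so it multicasts $\langle\Timeout, r\rangle$ (Line~\ref{line: timeout}) unless it already advanced — but advancing here requires $2f+1$ timeouts, as the leader-vertex disjunct is unavailable. I make this precise with an all-or-nothing argument: were no honest party ever to collect $2f+1$ timeouts, then no honest party advances, every honest timer expires, and all $\ge 2f+1$ honest parties emit $\langle\Timeout, r\rangle$, delivering $2f+1$ timeouts to everyone — a contradiction. Hence some honest party collects $2f+1$ timeouts, of which at least $f+1$ are honest; by the $f+1$-amplification rule (Line~\ref{line: f+1 timeout}) every honest party then emits its timeout and eventually receives $2f+1$, so every non-leader honest party advances to $r+1$. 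Finally, $L_{r+1}$'s wait is satisfied because the $\ge 2f$ honest non-leader parties already in round $r+1$ contribute round $r+1$ vertices with no strong path to $L_r$, exactly as in \Cref{claim: pessimistic}(ii). In either case round $r+1$ is entered, contradicting the maximality of $r$ and completing the proof.

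The main obstacle I anticipate is the interaction between \emph{round-jumping} and the leader's no-vote wait: I must rule out a stall in which honest parties skip round $r$ so that fewer than $2f+1$ round $r$ vertices ever exist, and I must ensure $L_{r+1}$ is never the sole bottleneck. Both are neutralized by the reduction above — the catch-up guarantees of \Cref{claim: optimistic} and \Cref{claim: pessimistic} force all honest parties into round exactly $r$ before the stall is assumed, so the $2f+1$ round $r$ vertices do materialize, and $L_{r+1}$'s wait depends only on vertices produced by the $\ge 2f$ non-leader honest parties, which are themselves unblocked.
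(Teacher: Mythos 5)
Your proposal is correct and follows essentially the same route as the paper's proof: both reduce liveness to single-round progress using the catch-up guarantees of \Cref{claim: optimistic} and \Cref{claim: pessimistic}, then case-split on whether any honest party ever delivers the round-$r$ leader vertex, using RBC agreement in the optimistic case and timeout amplification plus the $2f$ no-strong-path round-$(r{+}1)$ vertices to discharge $L_{r+1}$'s wait in the pessimistic case. Your contradiction framing via a maximal round and the explicit all-or-nothing argument for collecting $2f+1$ timeouts are only presentational refinements of the paper's direct argument.
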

\begin{proof}
    Suppose all honest parties are in round $r$ or higher. Let party $\node{i}$ be in round \( r \). If there exists an honest party \( P_j \) in a round \( r' > r \) at any time, then by Claim~\ref{claim: optimistic} and Claim~\ref{claim: pessimistic},  all honest parties will eventually enter round \( r' \) or higher. Otherwise, all honest parties remain in round \( r \). Note that upon entering round \( r \), all honest parties will r\_bcast a round \( r \) vertex, so each party will deliver \( 2f+1 \) round $r$ vertices.
    
    If no honest party has delivered the round \( r \) leader vertex, 
    then by the timeout rule, all honest parties will multicast \( \langle \Timeout, r \rangle \) and receive \( 2f+1 \) \( \langle \Timeout, r \rangle \) messages. Thus, all honest parties (except $L_{r+1}$ if honest) will enter round $r+1$ and reliably broadcast their round $r+1$ vertex without strong path to round $r$ leader vertex. Consequently, $L_{r+1}$ will deliver at least $2f$ round $r+1$ vertices without strong path to the round $r$ leader vertex and $L_{r+1}$ will enter round $r+1$ and propose a round $r+1$ leader vertex.
     
    On the other hand, if at least one honest party has delivered the round \( r \) leader vertex, then by Fact~\ref{fact}, all honest parties will eventually deliver this round \( r \) leader vertex as well. With \( 2f+1 \) round \( r \) vertices and the round \( r \) leader vertex delivered, all honest parties will proceed to round \( r+1 \).
\end{proof}

\begin{claim} \label{claim: honest entry}
    If an honest party enters round $r$ then at least $f + 1$ honest parties must have already entered $r - 1$.
\end{claim}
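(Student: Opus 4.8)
The plan is to trace the unique mechanism by which an honest party can advance into a round and show that it forces the delivery of enough round-$(r-1)$ vertices from honest sources. First I would note that the statement is non-trivial only for $r \geq 2$ (round $1$ is entered via \textsf{advance\_round}$(1)$ at initialization, and there is no round $0$). For $r \geq 2$, the only rule under which an honest party $P_i$ transitions into round $r$ is the round-advancement clause (Line~\ref{line: advance round}), applied with the round variable set to $r-1$; whether taken sequentially or as a round ``jump'', this rule fires only when $|DAG_i[r-1]| \geq 2f+1$. Hence, before entering round $r$, party $P_i$ must already have added at least $2f+1$ distinct round-$(r-1)$ vertices to its local DAG.

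Next I would invoke Fact~\ref{fact}: vertices stored in $DAG_i[r-1]$ are uniquely identified by their source, so these $2f+1$ vertices originate from $2f+1$ pairwise-distinct parties. Since at most $f$ parties are Byzantine, at least $f+1$ of these sources are honest.

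The third step links vertex creation to round entry and establishes the temporal ``already entered'' part. An honest party broadcasts its round-$(r-1)$ vertex only inside \textsf{advance\_round}$(r-1)$, that is, only after setting its local round variable to $r-1$ and thereby entering round $r-1$. Because $P_i$ can add such a vertex to $DAG_i[r-1]$ only after the corresponding honest source reliably broadcast it, each of the $\geq f+1$ honest sources must have entered round $r-1$ strictly before $P_i$ entered round $r$. Combining the three steps yields the claim.

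The main obstacle I anticipate is not any single inequality but a completeness check over all ways of entering round $r$. The prose describing ``jumping'' rounds phrases the trigger as either observing $2f+1$ round-$(r-1)$ vertices \emph{or} receiving $2f+1$ timeout messages, which could be misread as allowing entry on timeouts alone; I must confirm against the formal rule that every entry path---the ordinary sequential advance, the jump shortcut, and the leader's additional waiting condition---shares the common conjunct $|DAG_i[r-1]| \geq 2f+1$, so that no honest party can slip into round $r$ without the accompanying $2f+1$ round-$(r-1)$ vertices. Once this guard is verified to be universal, the rest of the argument is a routine quorum-and-honest-majority count.
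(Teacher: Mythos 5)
Your proof is correct and follows essentially the same route as the paper's: entry into round $r$ requires $2f+1$ delivered round-$(r-1)$ vertices, of which at least $f+1$ come from honest parties, and honest parties only broadcast a round-$(r-1)$ vertex after entering round $r-1$ via \textsf{advance\_round}. Your extra diligence in checking that every entry path (sequential advance and round jumps) shares the guard $|DAG_i[r-1]| \geq 2f+1$ is a sound refinement of the paper's terser three-sentence argument, not a departure from it.
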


\begin{proof}
    If an honest party enters round $r$, it must have delivered $2f+1$ round $r - 1$ vertices. At least $f+1$ of these vertices are sent by honest parties while they are in round $r-1$. Thus, $f+1$ honest parties must have already entered $r - 1$.
\end{proof}

\begin{claim} \label{claim: GST strong path}
    If the first honest party to enter round $r$ does so after GST and $L_r$ is honest, 
    then there will be at least $f+1$ round $r+1$ vertices that have strong paths to the round $r$ leader vertex.
\end{claim}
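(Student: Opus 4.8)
The plan is to show that, under the stated hypotheses, no honest party ever advances out of round $r$ through the timeout path, so that every honest party delivers the round $r$ leader vertex \emph{before} it proposes its round $r+1$ vertex and therefore includes that leader vertex among its strong edges. Since there are at least $2f+1$ honest parties and, by~\Cref{claim: keep entering}, all of them eventually broadcast a round $r+1$ vertex, this yields at least $2f+1 \ge f+1$ round $r+1$ vertices each having a strong edge — hence a strong path — to the round $r$ leader vertex. Recall that a round $r+1$ vertex has a strong path to the round $r$ leader vertex precisely when the latter lies in its $strongEdges$, which by create\_new\_vertex equals $DAG_i[r]$ at proposal time.

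The heart of the argument is a timing analysis showing that every honest party delivers the round $r$ leader vertex before its round $r$ timer expires. Let $t > t_g$ be the time the first honest party enters round $r$, and split on how it entered. If it entered after delivering the round $r-1$ leader vertex, then by~\Cref{property: 2Delta} all honest parties (in particular the honest $L_r$) deliver that vertex by $t+2\Delta$, so $L_r$ can broadcast its round $r$ vertex by $t+2\Delta$ (bounding its entry time via~\Cref{claim: optimistic}); by~\Cref{property: 3Delta} every honest party then delivers the round $r$ leader vertex by $t+5\Delta$, while parties entering after delivering the round $r-1$ leader vertex use $\tau=5\Delta$ and so fire no earlier than $t+5\Delta$. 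If instead the first party entered through the timeout rule, I would invoke~\Cref{claim: pessimistic}: all honest parties except $L_r$ enter by $t+2\Delta$ and $L_r$ enters and proposes (using its no-vote edges, if needed) by $t+5\Delta$, so the round $r$ leader vertex is delivered by $t+8\Delta$, matched by the $\tau=8\Delta$ timers of parties that entered without the round $r-1$ leader vertex.

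With the deadlines in hand, the combinatorial step is clean. An honest party multicasts $\tuple{\Timeout, r}$ only if it has not yet delivered the round $r$ leader vertex when its timer fires; since I have just argued every honest party delivers that vertex first, no honest party times out in round $r$. Hence at most $f$ (Byzantine) copies of $\tuple{\Timeout, r}$ exist, strictly fewer than the $2f+1$ needed to trigger the timeout advancement, so no party advances to round $r+1$ via the timeout condition. Consequently every honest party advances only upon delivering the round $r$ leader vertex together with $2f+1$ round $r$ vertices, and thus references the round $r$ leader vertex in its round $r+1$ strong edges, completing the count.

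The main obstacle is the timing case analysis, specifically the interplay between the two timeout values ($5\Delta$ versus $8\Delta$) and the two entry modes, together with keeping the deadlines consistent at their boundaries. The delicate case is when the first honest party enters round $r$ through the timeout rule yet some other honest party has already delivered the round $r-1$ leader vertex and enters with the tighter $\tau=5\Delta$: there one must check that such a party's deadline (entry time $\ge t$ plus $5\Delta$) still dominates the possibly later delivery time of the round $r$ leader vertex, or else argue more carefully that at least $f+1$ honest parties retain enough slack so that the timeout quorum of $2f+1$ still cannot form — which is all that is needed, since preventing that quorum already forces every honest party back onto the leader-vertex path. Pinning down these deadlines and confirming that the chosen $\tau$ bounds leave no gap is the crux; the quorum-intersection on timeouts and the strong-edge bookkeeping are then routine.
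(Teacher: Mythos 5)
Your proof attempt follows the same overall strategy as the paper's: a timing analysis showing honest parties deliver the round $r$ leader vertex before timers expire, so no $\Timeout$ quorum for round $r$ can form, forcing round $r+1$ vertices to reference the leader vertex. However, you have left the crux case unresolved, and your stated conclusion is actually stronger than what the paper proves — and stronger than what holds.

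First, the gap you yourself flag at the end is a real one, and the paper resolves it differently than your sketch suggests. The paper's proof works with a single timeout value per case: it first proves the claim assuming \emph{all} honest parties use $\tau = 8\Delta$ (the case where entry into round $r$ happened without delivering the round $r-1$ leader vertex), and then notes separately that when entry happens via delivering the round $r-1$ leader vertex, the same argument goes through with $\tau = 5\Delta$ using \Cref{claim: optimistic} in place of \Cref{claim: pessimistic}. It does not attempt to handle a mixed population where some honest parties run $5\Delta$ timers and others run $8\Delta$ timers within the same round entry — which is precisely the configuration you identify as delicate and then leave open ("or else argue more carefully that at least $f+1$ honest parties retain enough slack"). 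As written, your proof has a hole exactly where you admit it does.

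Second, and more importantly, your conclusion overshoots. You claim that \emph{no} honest party times out and hence \emph{every} honest party (all $2f+1$ or more of them) references the leader vertex, yielding $2f+1$ such round $r+1$ vertices. The paper only concludes $f+1$, and for good reason: even when no $\sig{\Timeout, r}$ quorum exists before $t + 8\Delta$, some honest parties may enter round $r+1$ and broadcast their round $r+1$ vertices \emph{early} — as soon as they have delivered $2f+1$ round $r$ vertices plus the leader vertex — and the paper must also handle parties that jump past round $r+1$ entirely (its argument that a party entering a round higher than $r+1$ implies $2f+1$ round $r+1$ vertices already exist, of which at least $f+1$ must reference $v_k$ by the no-timeout observation). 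Your sketch ignores the round-jumping behavior permitted by the protocol and implicitly assumes all honest parties sit in round $r$ until the global deadline, which the protocol does not guarantee. The paper's case split — (i) some honest party enters a round above $r+1$ before $t+8\Delta$, versus (ii) no honest party does — together with the observation that no honest party sends a round $r+1$ vertex lacking a strong path to $v_k$ without first collecting $2f+1$ $\sig{\Timeout, r}$ messages, is what makes the count of $f+1$ (not $2f+1$) go through in all cases. So: right skeleton, but the two load-bearing pieces — the mixed-timer case and the handling of parties that advance or jump before the deadline — are missing.
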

\begin{proof}
    We prove the claim by considering the case where all honest parties set their 
    timeout parameter $\tau = 8\Delta$. Let $t$ be the time when the first honest party (say $\node{i}$) entered round $r$. Observe that no honest party sends $\sig{\Timeout, r}$ before $t+8\Delta$ due to its round timer expiring. Accordingly, no honest party sends $\sig{\Timeout, r}$ due to receiving $f+1$ $\sig{\Timeout, r}$ before $t+8\Delta$. Additionally, by Claim~\ref{claim: honest entry},  no honest party can enter a round greater than $r$ until at least $f + 1$ honest parties have entered \( r \). Thus, no honest party sends a $\Timeout$ message for a round greater than \( r \) and no honest party enters a higher round by receiving \( 2f+1 \) $\Timeout$ before $t + 8\Delta$.

    Since $\node{i}$ entered round $r$ at time \( t \), by Claim~\ref{claim: pessimistic}, all honest parties (except $L_r$) will enter round \( r \) or higher by \( t + 2\Delta \), and $L_r$ will do so by \( t + 5\Delta \). If any honest party enters a round higher than $r+1$ before \( t + 8\Delta \), there exist at least $2f + 1$ round $r + 1$ vertices. This is because for an honest party to enter round $r'$, it must have delivered $2f+1$ round $r'-1$ vertices. By transitive argument, there must exist $2f+1$ round $r+1$ vertices. Moreover, no honest party sends a $\sig{\Timeout, r}$ message before $t+ 8\Delta$ and does not send a round $r+1$ vertex without strong path to round $r$ leader vertex (say $v_k$) without receiving $2f+1$ $\sig{\Timeout, r}$ messages. This implies there must be at least $f+1$ round $r+1$ vertices with strong path to $v_k$.

    Also, note that if an honest party enters round $r + 1$ before $t+8\Delta$, it must have delivered $2f +1$ round $r$ vertices and vertex $v_k$ (since \( 2f+1 \) $\Timeout$ do not exist before $t+8\Delta$). Thus, its round $r + 1$ vertex must have a strong path to $v_k$.

    Now consider the case where no honest party has entered a round higher than $r$ before time $t + 8\Delta$. By Claim~\ref{claim: pessimistic}, all honest parties (except $L_r$) enter round $r$ by $t + 2\Delta$, and $L_r$ enters by $t + 5\Delta$. Upon entering round $r$, each honest party invokes r\_bcast on its round $r$ vertex. By Property~\ref{property: 3Delta}, round $r$ vertices from all honest parties (except $L_r$) are delivered by $t + 5\Delta$, and $v_k$ (from $L_r$) by $t + 8\Delta$. Thus, honest parties still in round $r$ will send round $r+1$ vertices with strong paths to $v_k$. Since advancing to round $r+1$ requires at least $f+1$ such vertices from honest parties, it follows that at least $f+1$ honest parties must send round $r+1$ vertices referencing $v_k$.
\end{proof}

When an honest party enters round $r$ via delivering round $r-1$ leader vertex, by using Claim~\ref{claim: optimistic} instead of Claim~\ref{claim: pessimistic}, we can see the claim holds with $\tau=5\Delta$.

By the commit rule and Claim~\ref{claim: GST strong path}, we get the following:

\begin{corollary} \label{corollary: validity}
    If the first honest party to enter round $r$ does so after GST and $L_r$ is honest, 
    all honest parties will directly commit round $r$ leader vertex.
\end{corollary}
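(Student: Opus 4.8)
The plan is to derive the corollary almost immediately from \Cref{claim: GST strong path} together with \Cref{commit rule: vertices}. First I would invoke \Cref{claim: GST strong path}: under the stated hypotheses (the first honest party enters round $r$ after GST and $L_r$ is honest), there exist at least $f+1$ round $r+1$ vertices with strong paths to the round $r$ leader vertex $v_k$. The key refinement---already extracted inside the proof of that claim---is that these $f+1$ vertices are broadcast by \emph{honest} parties, so their delivery is guaranteed by the validity property of the underlying RBC rather than left to the adversary.

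The steps, in order, would be the following. First, fix the $f+1$ honestly-broadcast round $r+1$ vertices $u_1,\dots,u_{f+1}$, each with a strong path to $v_k$, guaranteed by \Cref{claim: GST strong path}. Second, apply RBC validity to conclude that every honest party $\node{i}$ eventually r\_delivers each $u_t$ and, once its causal history is complete, adds it to $DAG_i[r+1]$ through try\_add\_to\_dag; here I would lean on \Cref{fact} to argue that the causal histories are eventually complete and that the strong-path relation is consistent across honest DAGs. Third, observe that once all $f+1$ of these vertices reside in $DAG_i[r+1]$, the guard $|DAG_i[r+1]|\ge f+1$ triggers try\_commit$(r, DAG_i[r+1], f+1)$, for which the set $votes=\{v'\in DAG_i[r+1]\mid \text{strong\_path}(v',v_k)\}$ has size at least $f+1$. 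Finally, conclude that $\node{i}$ invokes commit\_leader$(v_k)$, i.e.\ directly commits the round $r$ leader vertex. Since $\node{i}$ is an arbitrary honest party, all honest parties directly commit $v_k$.

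The one place that needs care---and the main obstacle---is the side condition $committedRound < r$ guarding try\_commit: a party directly commits $v_k$ only if its committed round has not already advanced past $r$. I would argue this is not a genuine gap, because a party reaches $committedRound \ge r$ only by directly committing some round $r' \ge r$ leader vertex, and by the trace-back loop of commit\_leader together with \Cref{claim: strong path} such a commit would already have (directly or indirectly) committed $v_k$; in the clean case the $(f+1)$-th strong-path vertex is added while $committedRound = r-1$, so the guard passes and the direct commit fires. I would also make explicit the point that \Cref{claim: GST strong path} yields vertices that are genuinely honestly broadcast, so that RBC \emph{validity} (not merely agreement) applies and delivery is unconditional.
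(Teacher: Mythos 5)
Your proof is correct and takes essentially the same approach as the paper, which states this corollary without an explicit proof as an immediate consequence of \Cref{commit rule: vertices} and \Cref{claim: GST strong path}---exactly your chain of reasoning (the $f+1$ honestly broadcast round $r+1$ vertices with strong paths to $v_k$ are delivered by every honest party via RBC validity and \Cref{fact}, triggering the $f+1$-threshold rule in try\_commit). Your additional care about the $committedRound < r$ guard goes beyond what the paper records, and your resolution is sound: in that corner case $v_k$ is still committed, albeit possibly only indirectly, which is all the paper's subsequent validity lemma actually requires.
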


\begin{lemma}[Validity.]
    If an honest party $\node{i}$ calls $a\_bcast_i(b, r)$ then every honest party eventually outputs 
    $a\_deliver_i(b, r, P_i)$.
\end{lemma}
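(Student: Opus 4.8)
The plan is to show that the vertex carrying $b$ is eventually delivered into every honest party's DAG and then pulled into the causal history of a directly-committed leader vertex, at which point order\_vertices outputs the delivery. First I would establish that $\node{i}$ actually broadcasts such a vertex: the call $a\_bcast_i(b,r)$ merely enqueues $b$ into $blocksToPropose$, and since by \Cref{claim: keep entering} every honest party keeps entering higher rounds, $\node{i}$ eventually dequeues $b$ inside create\_new\_vertex and invokes $r\_bcast_i(v)$ for a vertex $v$ with $v.source=\node{i}$ and $v.block=b$. By the validity property of the underlying RBC, every honest party eventually r\_delivers $v$, and by \Cref{fact} adds it to its local DAG. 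Write $\rho=v.round$ and let $t_0$ be a time by which every honest party holds $v$.

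Next I would pin down a leader vertex lying above $v$ that all honest parties directly commit. Because leaders follow a deterministic round-robin, infinitely many rounds have an honest leader, and by \Cref{claim: keep entering} round numbers grow without bound past $t_0$ and past GST. I can therefore fix a round $r^\ast$ with $r^\ast-1>\rho$ and $L_{r^\ast}$ honest, such that every honest party enters round $r^\ast-1$ after $t_0$ and the first honest party to enter round $r^\ast$ does so after GST. \Cref{corollary: validity} then guarantees that every honest party directly commits the round $r^\ast$ leader vertex $v_{r^\ast}$.

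The heart of the argument is showing that $v$ lies in the causal history of $v_{r^\ast}$. When an honest party builds its round $(r^\ast-1)$ vertex $w$, create\_new\_vertex sets $w.strongEdges = DAG[r^\ast-2]$ and set\_weak\_edges adds a weak edge to every earlier orphan not already reachable from $w$; since this construction happens after $t_0$, the vertex $v$ (whether it sits in round $r^\ast-2$ or earlier) is reachable from $w$. Now $v_{r^\ast}$ references at least $2f+1$ round $(r^\ast-1)$ vertices via strong edges, of which at least $f+1$ are honest, and each such honest $w$ gives a path from $v_{r^\ast}$ through $w$ to $v$. Hence there is a path from $v_{r^\ast}$ to $v$.

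Finally, when $v_{r^\ast}$ is directly committed, commit\_leader pushes it onto $leaderStack$ and order\_vertices outputs $a\_deliver$ for every not-yet-delivered vertex reachable from $v_{r^\ast}$, so each honest party outputs $a\_deliver_j(b, v.round, \node{i})$; that all honest parties agree on this output follows from \Cref{corollary: same order} together with \Cref{lemma: agreement}. The hard part will be the causal-history step: one must argue carefully that the weak-edge rule of set\_weak\_edges guarantees eventual reachability of $v$ once it is globally installed, and that the $f+1$ honest round $(r^\ast-1)$ vertices feeding $v_{r^\ast}$ were all proposed after their authors held $v$ — which is exactly why $r^\ast$ must be chosen to sit above both $t_0$ and GST.
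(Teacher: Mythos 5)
Your proof is correct and takes essentially the same route as the paper's: Claim~\ref{claim: keep entering} to get $b$ into a reliably-broadcast vertex, RBC validity plus Fact~\ref{fact} to place that vertex in every honest DAG, the create\_new\_vertex/set\_weak\_edges rule to put it in the causal history of later honest vertices, Corollary~\ref{corollary: validity} for a directly-committed honest leader after GST, and order\_vertices plus Lemma~\ref{lemma: agreement} to conclude. The only difference is bookkeeping: where the paper observes that every subsequent honest vertex (in particular the honest leader's own vertex) has a path to $v$, you route the path through $f+1$ honest round-$(r^{*}-1)$ strong-edge predecessors of the leader vertex, which is an equivalent, slightly more roundabout way to the same containment.
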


\begin{proof}
    When an honest party \( P_i \) calls a\_bcast$_i(b, r)$, it pushes \( b \) into the \( blocksToPropose \) queue.
    By Claim~\ref{claim: keep entering}, \( P_i \) continuously progresses to higher rounds, creating new vertices in each of these rounds.
    Consequently, \( P_i \) will eventually create a vertex \( v_i \) with block \( b \) in some round \( r \) and reliably broadcast it.
    By the Validity property of reliable broadcast, all honest parties will eventually deliver \( v_i \).
    According to Fact~\ref{fact}, all honest parties will add \( v_i \) to their DAGs.
    Following the create\_new\_vertex procedure, once \( v_i \) is added to \( DAG_j[r] \),
    every subsequent vertex created by \( P_j \) will contain a path to \( v_i \).
    By Corollary~\ref{corollary: validity}, the leader vertex proposed by an honest leader is directly committed by all honest parties after GST.
    By the code of order\_vertices, each party \( P_j \) will eventually invoke a\_deliver$_j(b, r, P_i)$.
    By Lemma~\ref{lemma: agreement}, all honest parties will eventually invoke a\_deliver$(b, r, P_i)$.
\end{proof}

\begin{figure}[H]
    \centering
    \includegraphics[width=\linewidth]{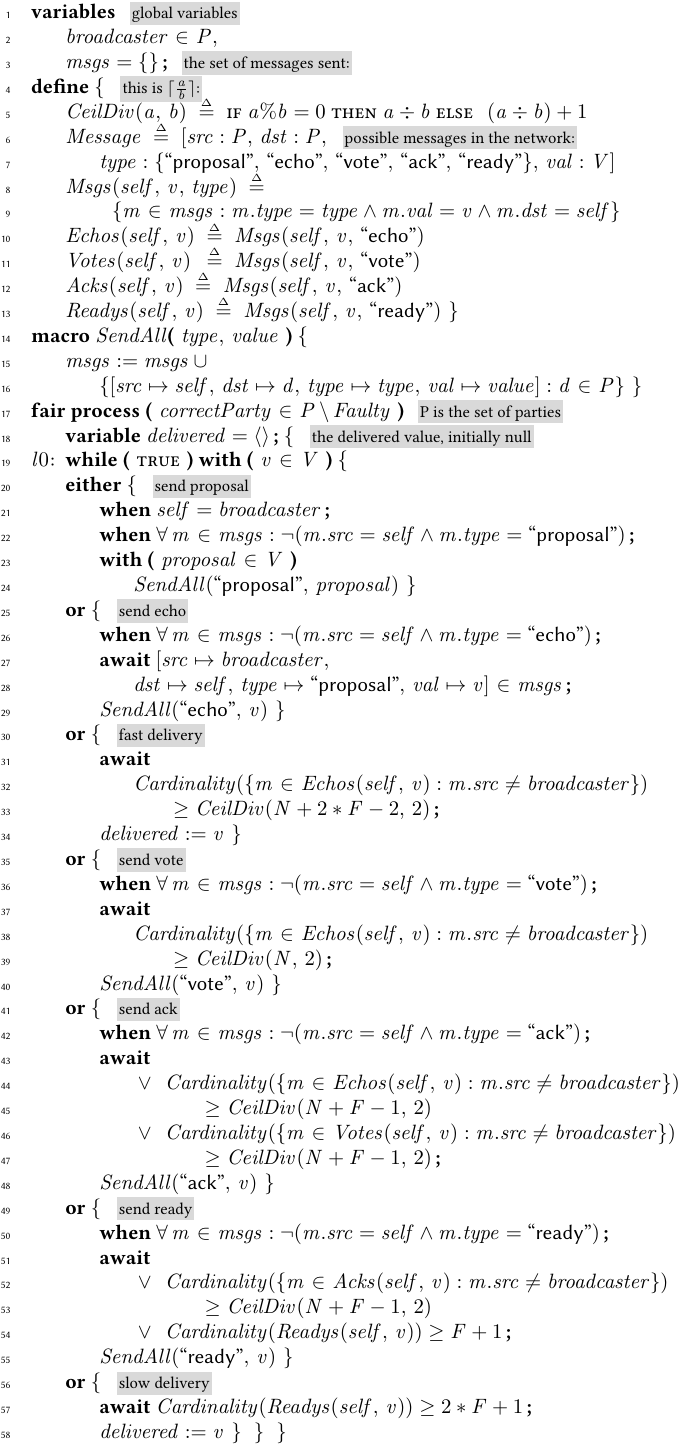}
    \caption{PlusCal/TLA+ model of the optimistic reliable-broadcast algorithm}%
    \label{fig:rbc-algo-spec}
\end{figure}

\begin{figure}[H]
    \centering
    \includegraphics[width=\linewidth]{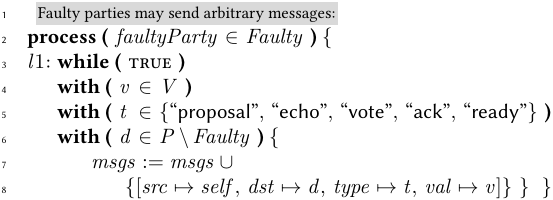}
    \caption{PlusCal/TLA+ model of Byzantine parties in the optimistic reliable-broadcast algorithm}%
    \label{fig:rbc-byzantine-spec}
\end{figure}

\begin{figure}[H]
    \centering
    \includegraphics[width=\linewidth]{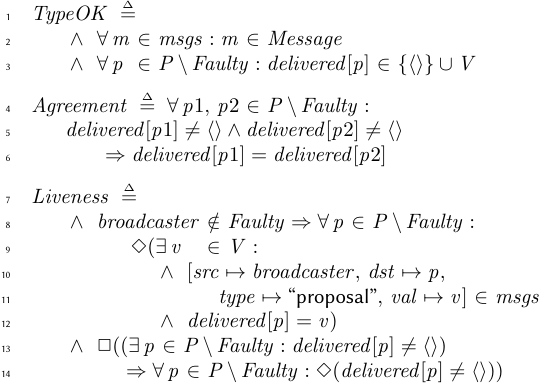}
    \caption{TLA+ type invariant and correctness properties of the optimistic reliable-broadcast algorithm}%
    \label{fig:rbc-properties-spec}
\end{figure}

\section{Formal Specifications}%
\label{sec:formal-specs}

\subsection{Optimistic Reliable Broadcast}%
\label{sec:bcast-spec}

\subsubsection{PlusCal/TLA+ Specification}

A PlusCal/TLA+ specification of the optimistic reliable-broadcast algorithm of~\Cref{sec:opt_rbc} appears in~\Cref{fig:rbc-algo-spec}, the Byzantine-party model in~\Cref{fig:rbc-byzantine-spec}, and the type invariant and correctness predicates in~\Cref{fig:rbc-properties-spec}.
The specification is written in the PlusCal/TLA+ languages~\cite{lamportPlusCalAlgorithmLanguage2009,lamportSpecifyingSystemsTLA2002}.
It models sent protocol messages as a set of records and gives each non-faulty party a fair PlusCal process.
The correct-party process follows the proposal, echo, vote, ack, ready, optimistic-delivery, and slow-delivery rules; the Byzantine process can inject arbitrary protocol messages to non-faulty parties.
The safety predicates state that delivered values agree among non-faulty parties; the liveness predicate states validity for a non-faulty broadcaster and totality once any non-faulty party delivers.

The artifact includes a TLC configuration for this module that instantiates a small finite system and checks the basic type invariant with the TLC model-checker~\cite{yu_model_1999}.
The agreement, ready-agreement, and liveness predicates are included as named TLA+ formulas, but the unbounded correctness argument for optimistic reliable broadcast does not rely on bounded TLC exploration.
Instead, the safety and liveness proof described next is mechanically checked using Ivy and Isabelle/HOL.

\subsubsection{Ivy and Isabelle/HOL Proofs}\label{sec:ivy-proof}
To make sure that the optimistic RBC protocol satisfies the safety and liveness properties of Reliable Broadcast, we developed a mechanically-checked proof of correctness.
This proof consists of two artifacts.

First, we developed a formal specification and correctness proof of the protocol in the Ivy language.
The full Ivy model appears in~\Cref{sec:ivy-specification}.
It declares abstract optimistic, majority, classic, amplification, and commit quorum types and assumes the quorum-intersection properties needed by the protocol.
The Ivy transition system mirrors the protocol actions for proposing, echoing, voting, acknowledging, becoming ready, optimistic committing, normal committing, and Byzantine behavior.
The proof consists of inductive invariants that imply agreement, together with liveness checks for validity and totality.
For validity, we check that, if the broadcaster is honest and no honest party can take a step anymore, then every honest party has delivered. 
For totality, we check that, if an honest party has delivered and no honest party can take a step anymore, then every honest party has delivered. 
Thanks to the axiomatic model, the verification conditions fall in the EPR fragment of first-order logic and are discharged automatically by Ivy.

Second, we developed an Isabelle/HOL proof showing that the axiomatic model is a sound abstraction of the concrete threshold definitions used in the protocol.
The Isabelle development defines the optimistic, majority, classic, amplification, and commit quorum types using explicit cardinality thresholds under the assumptions \(3f<n\) and \(n\geq 2\), and proves that these concrete quorum types satisfy all the intersection properties assumed in the Ivy model.

Together, the Ivy and Isabelle/HOL artifacts cover arbitrary finite party sets satisfying the resilience bound, and the Ivy invariants cover all executions of the abstract transition system regardless of their length.
Methodologically, this follows the same high-level separation of concerns as the threshold-based verification approach of Berkovits et al.~\cite{berkovitsVerificationThresholdBasedDistributed2019}: the protocol proof is carried out in EPR over abstract quorum-intersection assumptions, while the arithmetic content of the thresholds is justified separately.
The difference is that their flow expresses and checks the required threshold-intersection properties using the TIP/BAPA decomposition, whereas our second stage proves the corresponding concrete quorum facts in Isabelle/HOL for the quorum families used by optimistic RBC.

\clearpage
\onecolumn
\subsubsection{Ivy Specification}\label{sec:ivy-specification}

The following listing gives the full Ivy specification and proof script used for the optimistic reliable-broadcast proof.

\lstinputlisting[style=ivyspec]{ivy/opti_rbc.ivy}
\clearpage

\subsubsection{Isabelle/HOL Theories}\label{sec:isabelle-theories}

The following pages give the Isabelle-generated, typeset version of the Isabelle/HOL theories used to prove the soundness of the axiomatic domain model.

\includepdf[pages=-,pagecommand={}]{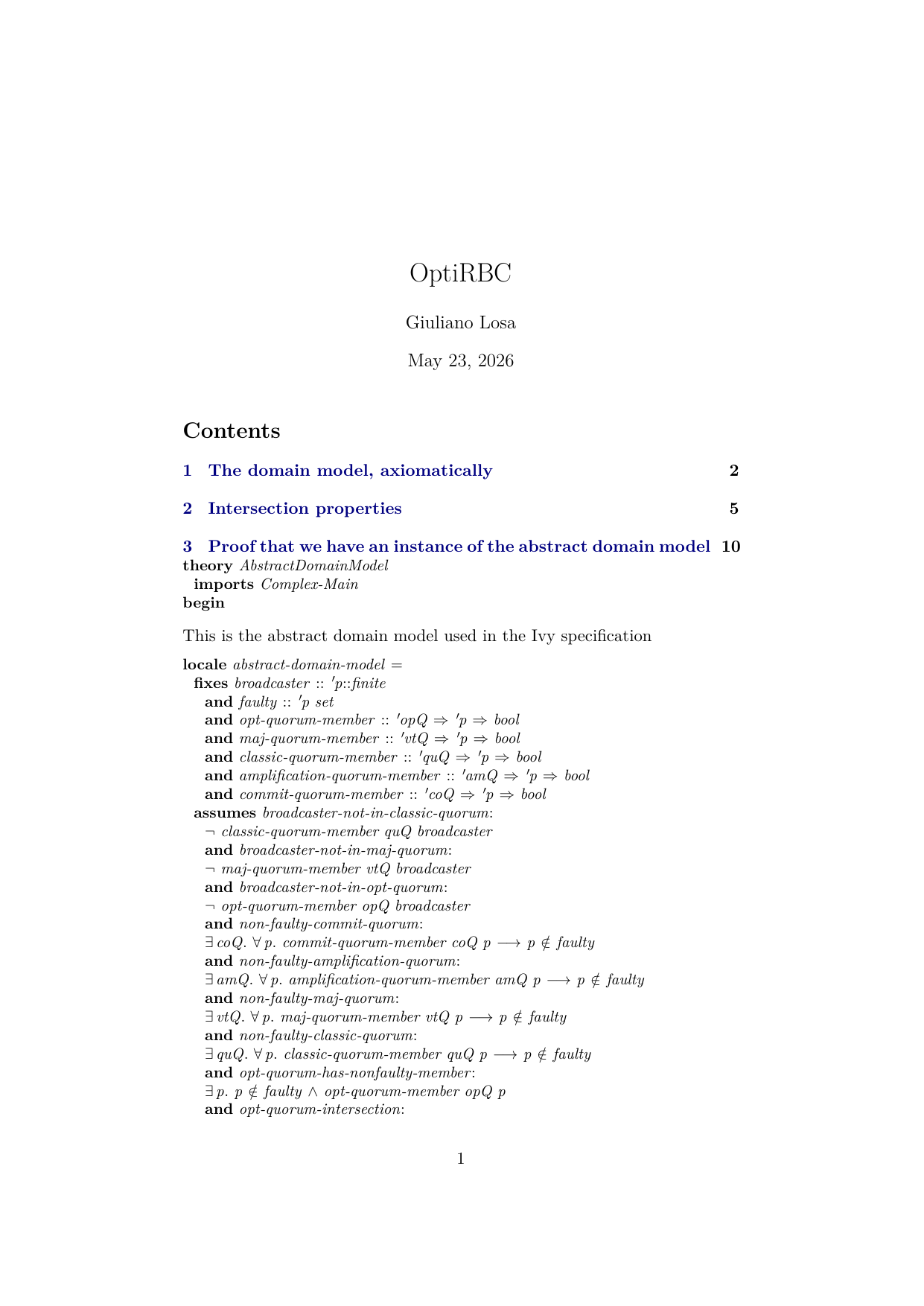}
\clearpage
\twocolumn

\subsection{\name}%
\label{sec:sailfish-spec}

A formal specification of \name in the PlusCal/TLA+ language appears in~\Cref{fig:sailfish-spec}, with the Byzantine-node model in~\Cref{fig:sailfish-byzantine-spec}, supporting definitions in~\Cref{fig:digraph-spec,fig:blockdag-spec}, and the type invariant and correctness properties in~\Cref{fig:correctness-spec}.
At this abstraction level, the same module models Sailfish and \name: the differences between the signature-based and signature-free protocols are hidden behind the assumptions imposed on admissible DAG vertices.

We make a number of abstractions and simplifying assumptions in order to expose the high-level principles of the algorithm succinctly and to keep model-checking with TLC tractable:
\begin{enumerate}[noitemsep,leftmargin=*]
    \item We do not model timers and timeout messages.
        Instead, eventual synchrony is modeled by requiring that, after GST, a correct node entering round \(r\) has all correct vertices from round \(r-1\).
        The liveness predicate then checks that every correct leader at or after GST is included in the log of each correct node that advances at least two more rounds.
    \item We restrict the model to one leader per round, represented by the operator \(Leader\).
    \item DAG vertices are represented only by their creator and round.
        The model abstracts away payloads, signatures, and certificates, which, at this level of abstraction, are irrelevant to the agreement and leader-commit properties being checked.
    \item Nodes do not explicitly use RBC to disseminate their DAG vertices; instead, accepted vertices are added to a global DAG data structure.
        This does not give correct nodes a single global view: each correct node advances to a new round using an arbitrary quorum of vertices from the previous round.
    \item The Byzantine-node process cannot equivocate because RBC would prevent two vertices from the same Byzantine node in the same round.
        After round \(1\), a Byzantine vertex is accepted only if it references a quorum of vertices from the previous round.
    \item We do not model weak edges or the optimization that commits based on the first \(n-f\) RBC deliveries.
        The \(Linearize\) operator uses a deterministic arbitrary ordering of the relevant causal past, which is sufficient for checking agreement.
\end{enumerate}

The artifact contains two bounded TLC harnesses for the Sailfish specification.
The first uses three nodes, one Byzantine node, rounds \(1\ldots 5\), and \(GST=3\); it checks the type invariant, agreement, and the liveness property in~\Cref{fig:correctness-spec}.
The second uses four nodes, one Byzantine node, rounds \(1\ldots 5\), quorums of size at least three, blocking sets of size at least two, and \(GST=6\); with GST beyond the checked round bound, this configuration is a larger safety-oriented check of the same model.

\newpage
\begin{figure}[H]
    \centering
    \includegraphics[width=\linewidth]{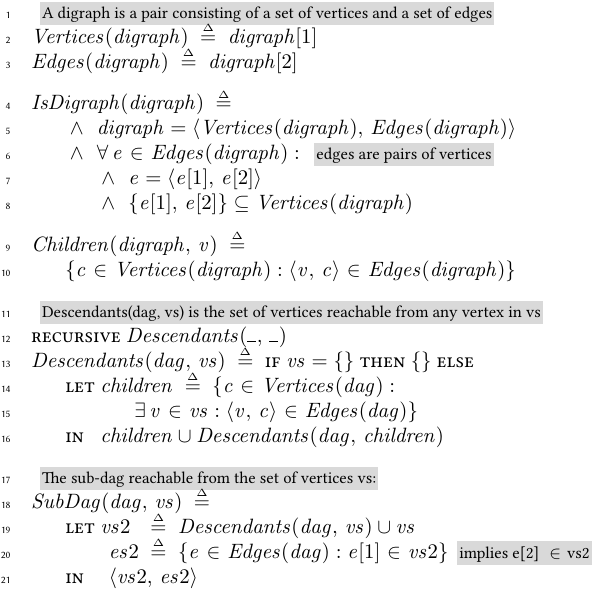}
    \caption{TLA+ formalization of directed-graph notions needed to formally specify \name}
    \label{fig:digraph-spec}
\end{figure}
\begin{figure}[H]
    \centering
    \includegraphics[width=\linewidth]{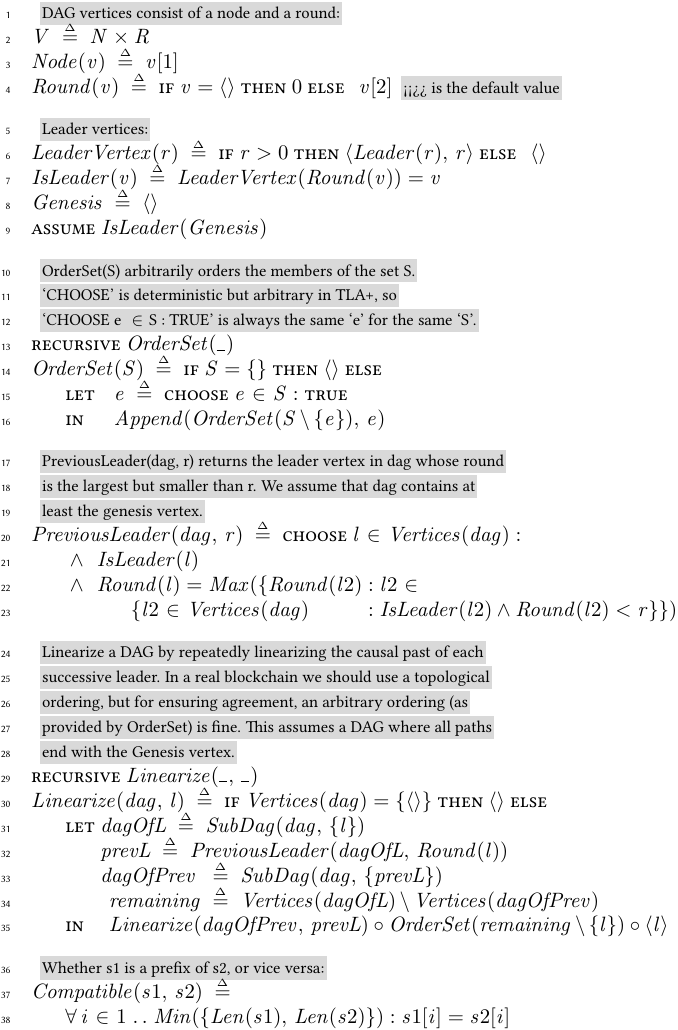}
    \caption{TLA+ formalization of block-DAG notions, including the DAG-ordering function (\textsf{Linearize}) used in \name}
    \label{fig:blockdag-spec}
\end{figure}

\begin{figure}[H]
    \centering
    \includegraphics[width=\linewidth]{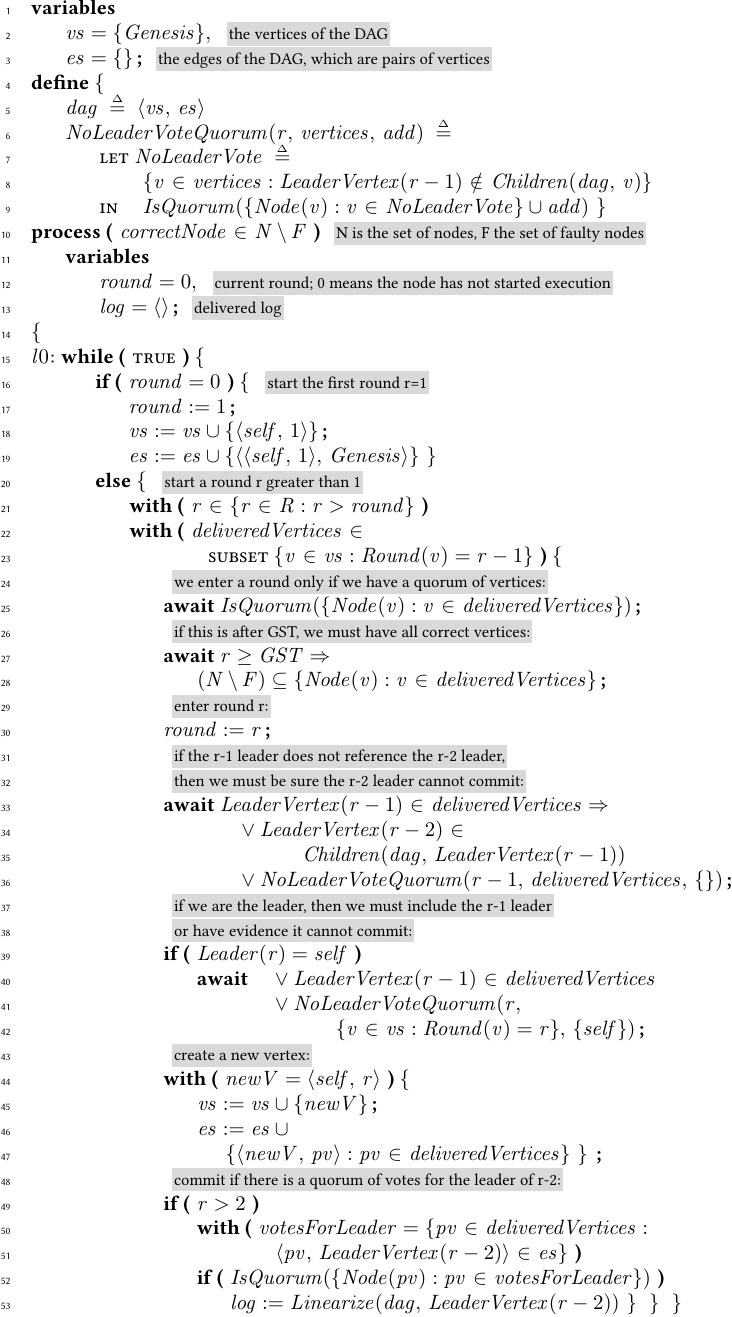}
    \caption{High-level specification of the \name algorithm, formalized in PlusCal/TLA+}
    \label{fig:sailfish-spec}
\end{figure}

\begin{figure}[H]
    \centering
    \includegraphics[width=\linewidth]{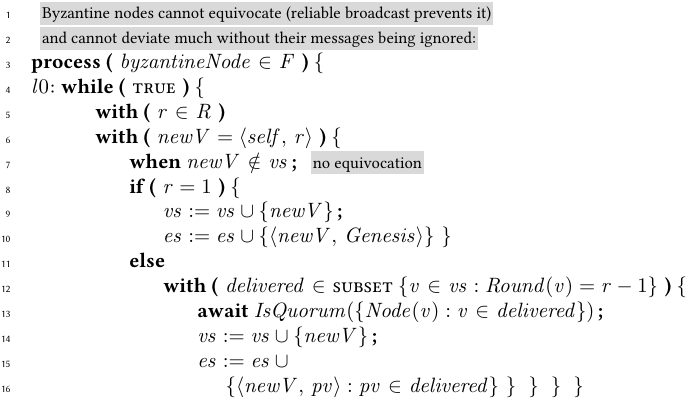}
    \caption{PlusCal/TLA+ model of Byzantine nodes in \name}
    \label{fig:sailfish-byzantine-spec}
\end{figure}

\begin{figure}[H]
    \centering
    \includegraphics[width=\linewidth]{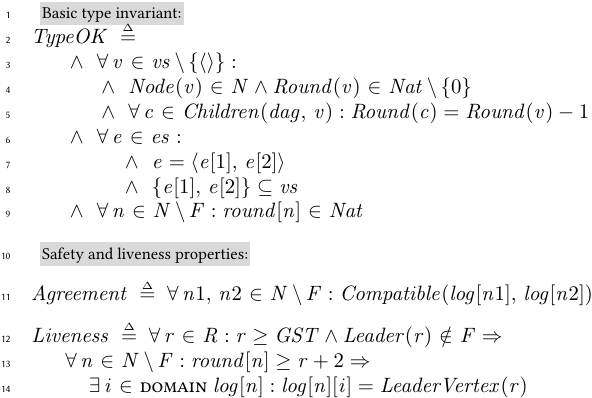}
    \caption{TLA+ type invariant and correctness properties of \name}
    \label{fig:correctness-spec}
\end{figure}

\end{document}